\documentclass[english,12pt]{article}
\usepackage[T1]{fontenc}
\usepackage[latin1]{inputenc}
\usepackage{geometry}
\geometry{verbose,tmargin=2cm,bmargin=2cm,lmargin=2cm,rmargin=2cm}
\usepackage{float}
\usepackage{mathrsfs}
\usepackage{amsmath}
\usepackage{amssymb}
\usepackage{graphicx}
\usepackage{setspace}
\usepackage{hyperref}
\usepackage[color=yellow]{todonotes}
\hypersetup{
    colorlinks=true,
    linkcolor=blue,
    filecolor=magenta,      
    urlcolor=cyan,
    citecolor = red,
}

\usepackage{caption}

\makeatletter

\usepackage{needspace}

\floatstyle{ruled}
\newfloat{algorithm}{tbp}{loa}
\providecommand{\algorithmname}{Algorithm}
\floatname{algorithm}{\protect\algorithmname}



\usepackage{algorithm}
 \usepackage{algorithmicx}

\usepackage{amsthm}

\usepackage{mathrsfs}

\usepackage{amsfonts}

\usepackage{epsfig}

\usepackage{booktabs}
\usepackage{bm}

\usepackage{mathrsfs}

\usepackage{subcaption}
\usepackage{enumerate}

\@ifundefined{definecolor}{\@ifundefined{definecolor}
 {\@ifundefined{definecolor}
 {\usepackage{color}}{}
}{}
}{}

\newtheorem{theorem}{Theorem}[section]
\newtheorem{lem}{Lemma}[section]
\newtheorem{rem}{Remark}[section]
\newtheorem{prop}{Proposition}[section]

\newcounter{hypA}
\newenvironment{hypA}{\refstepcounter{hypA}\begin{itemize}
  \item[({\bf A\arabic{hypA}})]}{\end{itemize}}
\newcounter{hypB}

\newcounter{hypD}


\usepackage{babel}\date{}

\usepackage{accents}

\makeatother

\definecolor{green_dark}{rgb}{0.0, 0.5, 0.1}

\begin{document}

\begin{center}

\begin{spacing}{1.5}
{\Large \textbf{Unbiased Estimation using a Class of Diffusion Processes}}
\end{spacing}

\vspace{0.4cm}

BY HAMZA RUZAYQAT$^{1}$, ALEXANDROS BESKOS$^{2}$, DAN CRISAN$^{3}$, AJAY JASRA$^{1}$ \& NIKOLAS KANTAS$^{3}$ 

{\footnotesize $^{1}$Applied Mathematics and Computational Science Program,  Computer, Electrical and Mathematical Sciences and Engineering Division, King Abdullah University of Science and Technology, Thuwal, 23955-6900, KSA.}
{\footnotesize E-Mail:\,} \texttt{\emph{\footnotesize hamza.ruzayqat@kaust.edu.sa, ajay.jasra@kaust.edu.sa}}\\
{\footnotesize $^{2}$Department of Statistical Science, University College London, London, WC1E 6BT, UK.}
{\footnotesize E-Mail:\,} \texttt{\emph{\footnotesize a.beskos@ucl.ac.uk}}\\
{\footnotesize $^{3}$Department of Mathematics, Imperial College London, London, SW7 2AZ, UK.}
{\footnotesize E-Mail:\,} \texttt{\emph{\footnotesize d.crisan@ic.ac.uk, n.kantas@ic.ac.uk}}
\end{center}

\begin{abstract}
We study the problem of unbiased estimation of expectations with respect to (w.r.t.)~$\pi$ a  given, general probability measure on $(\mathbb{R}^d,\mathcal{B}(\mathbb{R}^d))$ that is absolutely continuous with respect to a standard Gaussian measure.  We focus on simulation associated to a particular class of  diffusion processes,  sometimes termed the Schr\"odinger-F\"ollmer Sampler, which is a simulation technique that approximates the law of a particular diffusion bridge process $\{X_t\}_{t\in [0,1]}$ on $\mathbb{R}^d$, $d\in \mathbb{N}_0$. 
This latter process is constructed such that, starting at $X_0=0$, one has  $X_1\sim \pi$. Typically, the drift of the diffusion is intractable and, even if it were not, exact sampling of the associated diffusion is not possible. As a result, \cite{sf_orig,jiao} consider a stochastic Euler-Maruyama scheme that allows the development of  biased estimators for expectations w.r.t.~$\pi$. We show that for this methodology to achieve a mean square error of $\mathcal{O}(\epsilon^2)$, for  arbitrary $\epsilon>0$, the associated cost is $\mathcal{O}(\epsilon^{-5})$. We then introduce an alternative approach that provides unbiased estimates of expectations w.r.t.~$\pi$,  that is, it does not suffer from the time discretization bias or the bias related with the approximation of the drift function. We prove that
to achieve a mean square error of $\mathcal{O}(\epsilon^2)$, the associated cost (which is random) is, with high probability, $\mathcal{O}(\epsilon^{-2}|\log(\epsilon)|^{2+\delta})$, for any $\delta>0$. We implement our method on several examples including Bayesian inverse problems.
\\
\\
\noindent \textbf{Keywords}: Diffusions, Unbiased approximation, Schr\"odinger bridge, Markov chain simulation.
\\
\noindent \textbf{Corresponding author}: Hamza Ruzayqat. E-mail:
\href{mailto:hamza.ruzayqat@kaust.edu.sa}{hamza.ruzayqat@kaust.edu.sa} 
\\
\noindent \textbf{AMS subject classifications}: 60J60, 	62D05, 65C40
\end{abstract}

\section{Introduction}

Let $\pi$ be a probability measure on $(\mathbb{R}^d,\mathcal{B}(\mathbb{R}^d))$, $d\in \mathbb{N}$, with positive Lebesgue density -- denoted also $\pi$ -- assumed to be known up-to a normalizing constant. 
In many applications
in applied mathematics and statistics, it is often of interest to compute expectations of $\pi$-integrable functionals, $\varphi:\mathbb{R}^d\rightarrow\mathbb{R}$, that is compute
$\pi(\varphi):=\int_{\mathbb{R}^d}\varphi(x)\pi(x)dx$, see for instance \cite{robert} and the references therein. There are numerous methodologies for the approximation
of $\pi(\varphi)$ often based upon the simulation of Markov processes with the most prominent example being Markov chain Monte Carlo (MCMC).  In this article we consider approximations
based upon independent samples, each of the latter generated via an `embarrassingly'  parallel approach. Such schemes have become rather popular in the recent literature \cite{ub_mc,ub_grad,jacob2}.


We will consider a stochastic differential equation (SDE) on the time domain $t\in[0,1]$, starting from $X_0=0$ and satisfying a terminal constraint $X_1\sim\pi$. This is an instance of a more general problem of assigning an initial and final distribution to a Markov process, 
which was initially formulated by Schr\"odinger in \cite{schrod} and later developed into a general stochastic bridge construction by Jamison in \cite{jamison_75}, 
whereby an additive drift function is used to ensure the terminal constraint $X_1\sim\pi$ will be satisfied. In our case $\pi$ is absolutely continuous w.r.t.~a standard Gaussian measure, 
so this drift will be added to a $d$-dimensional Brownian motion, $\{W_t\}_{t\in[0,1]}$, whose terminal distribution is known: $W_1\sim\mathcal{N}_d(0,I)$,  denoting the $d$-dimensional Gaussian distribution of mean $0$ and  identity covariance. This gives the following $\mathbb{R}^d$-valued diffusion process:
\begin{equation}
\label{eq:main_diff}
dX_t = b(X_t,t)dt + dW_t, \quad  X_0=0,
\end{equation}
with 
$$b(x,t)=\nabla\log\mathbb{E}_{x,t}[f(W_1)],$$
where for any $(x,t)\in\mathbb{R}^d\times[0,1]$, $\nabla$ denotes the gradient w.r.t  $x$, $\mathbb{E}_{x,t}$ denotes the expectation w.r.t~$\{W_s\}_{s\in[t,1]}$ starting at $W_t=x$ and note that $b(x,1)=0$. We remark that  $f$ corresponds to the analogous of a likelihood function for a standard Gaussian prior, i.e. for  $z\in\mathbb{R}^d$ we have
$
f(z) = \pi(z)/\phi(z),
$
with $\phi(z)$ the standard $d-$dimensional Gaussian density.  Using standard manipulations the expression for $b$ simplifies to 
$$
b(x,t) = \nabla \log \mathbb{E}[{f(x+W_{1-t})}] =\frac{\mathbb{E}_\phi[\nabla f(x+\sqrt{1-t}Z)]}{\mathbb{E}_\phi[f(x+\sqrt{1-t}Z)]} = \frac{1}{\sqrt{1-t}}\frac{\mathbb{E}_\phi[Z f(x+\sqrt{1-t}Z)]}{\mathbb{E}_\phi[f(x+\sqrt{1-t}Z)]},
$$
with $\mathbb{E}_\phi$ denoting expectation w.r.t.~a $d$-dimensional standard Gaussian. 
We refer to \cite{jamison_75,pra} for a generalizations of \eqref{eq:main_diff} and more details on a general existence result obtained by means of Girsanov's theorem and more importantly establishing that $X_1\sim\pi$. 
Based on \cite{jamison_75} there is existence of a weak solution of \eqref{eq:main_diff} in $[0,T]$. 
This requires $f$ to be bounded and $\mathbb{E}_{x,t}[f(W_1)]$ to be twice continuously differentiable in $x$ and once in t, i.e. in $\mathcal{C}^{2,1}\left(\mathbb{R}^d\times[ 0,T)\right)$; see \cite{pra} for more details. To ensure the existence of a strong solution of \eqref{eq:main_diff} the drift $b$ needs to satisfy certain conditions; see \cite{jiao} for details.

%

The formulation in \eqref{eq:main_diff} was proposed in \cite{sf_orig,jiao} as a sampling scheme for $\pi$ and as an alternative to MCMC. The authors used the name Schr\"odinger-F\"ollmer Sampler (SFS) inspired by the original Schr\"odinger problem in \cite{schrod} and its links with the entropy based time reversal of SDEs by F\"ollmer \cite{follmer}. The approach of \cite{sf_orig,jiao} is to discretize the process in time via an Euler-Maruyama scheme and to numerically approximate the drift using standard perfect Monte Carlo estimators. This approach can then be parallelized to produce $M\in\mathbb{N}$ independent samples of $X_1$ to approximate $\pi(\varphi)$. This type of embarrassingly parallel estimators are extremely attractive for their computational savings, versus conventional time averages that often appear in standard iterative (non-parallelisable) MCMC simulation. The possibility of massive parallelization in SFS is also quite competitive against various other recent MCMC schemes from `uncorrected' discretized diffusion schemes such as the unadjusted Langevin method \cite{ula1,ula2}. Indeed, here we do not need to concern ourselves with long-time asymptotic behavior, as the solution of \eqref{eq:main_diff} at time 1 is exactly distributed according to $\pi$. In addition, the stochastic bridge in \eqref{eq:main_diff} and its generalizations have been to solve a certain optimal transport problem (see \cite{pra}).  As a result, different sampling schemes have been proposed recently in \cite{schrod_bridge,bortoli} using iterative proportional fitting within Sequential Monte Carlo and generative modeling respectively. Whilst these schemes are interesting and use variants of \eqref{eq:main_diff}, they are iterative in nature and cannot be parallelised to the extent of SFS.

In this article we make several contributions to the SFS, that we now list.
\begin{enumerate}
\item We show that for the method in \cite{sf_orig,jiao} to obtain estimators of $\pi(\varphi)$ achieving a mean square error (MSE) of $\mathcal{O}(\epsilon^2)$, the associated cost is $\mathcal{O}(\epsilon^{-5})$ for an arbitrary $\epsilon>0$.
\item We construct a \emph{doubly randomized} estimator, based upon the ideas in \cite{diffusions}.  This approach delivers unbiased estimates of finite variance (in contrast to the method in \cite{sf_orig,jiao} that is biased). 

\item We show that the  proposed estimator achieves an MSE of $\mathcal{O}(\epsilon^2)$ with an associated random computing cost that is, with high probability, $\mathcal{O}(\epsilon^{-2}|\log(\epsilon)|^{2+\delta})$, for any $\delta>0$. The term high-probability simply means that to achieve the prescribed MSE with the given cost, this latter cost is achieved with probability $1-\zeta$ for some small $\zeta\in(0,1)$. We note that the expected cost of our method is infinite. \autoref{rem:infinite_cost} explains this in detail. 
\item We apply our new approach for several examples, including Bayesian inverse problems, and numerically verify the above theoretical findings.
\end{enumerate}
The significance of our contributions can be explained as follows. In the context of 1.~we establish that due to the Monte Carlo error in the drift and the bias of the time discretization,  one requires a large computational effort to approximate $\pi(\varphi)$ with high precision. Therefore, whilst the trivially parallel nature of the estimator is intuitively appealing, the associated cost can be prohibitive. In 2.~we then consider a methodology to remove the time discretization bias of the Euler-Maruyama scheme,  based upon the randomization schemes of \cite{mcl,rhee}. As the standard approach in those papers cannot be implemented, due to the fact that the drift must be approximated using Monte Carlo, we show that ideas related to \cite{diffusions} can be adapted 
in the context of SFS to overcome biases due to both the time discretisation and the drift approximation. The overall methodology delivers unbiased estimators of finite variance, using only simulation of standard Gaussian random variables. This latter aspect of the new algorithm is particularly interesting, since the approaches for instance in \cite{ub_mc,ub_grad,jacob2}, also deliver unbiased estimators, but one must resort to complex coupling techniques, whereas we show here that such involved constructs are not always needed. In 3., relying on tools from the analysis of time discretized diffusions, we show that our new method provides a substantial reduction in cost over the original method in~\cite{jiao}. 

This article is structured as follows. In Section \ref{sec:algo} we present the approach in \cite{sf_orig,jiao} and our new unbiased algorithm. In Section \ref{sec:theory} we show that a particular version of the Algorithm provides unbiased estimators of finite variance. Section \ref{sec:numerics} contains our numerical results. Appendix~\ref{app:theory} collects some of the technical results are used in Section \ref{sec:theory}.

\section{Algorithm}\label{sec:algo}

The apparent challenges  with the simulation of  the diffusion process in \eqref{eq:main_diff} are, firstly, that the drift function $b(x,t)$ is typically intractable
and, secondly, even if $b(x,t)$ is available point-wise, exact simulation from \eqref{eq:main_diff} is not possible. 

%

\subsection{Approximate SFS using Euler-Maruyama discretization}

Let $\Delta_l=2^{-l}$, with $l\in\mathbb{N}_0$ given. Then, the approach of \cite{sf_orig,jiao} considers the Euler-Maruyama scheme, for $k\in\{0,1,\dots,\Delta_l^{-1}-1\}$:
\begin{equation}
\label{eq:milstein}
\widetilde{X}_{(k+1)\Delta_l}^{l,N} = \widetilde{X}_{k\Delta_l}^{l,N} + \hat{b}(\widetilde{X}_{k\Delta_l}^{l,N},k\Delta_l)\Delta_l + W_{(k+1)\Delta_l} - W_{k\Delta_l} ,
\end{equation}
where independently of all other random variables we have $(W_{(k+1)\Delta_l} - W_{k\Delta_l})\sim\mathcal{N}_d(0,\Delta_l I)$. In the following,  $N\in\mathbb{N}$ will associated to the accuracy of the Monte Carlo estimator of the drift $b$. The quantity $\hat{b}$ is a numerical approximation of $b$ and is defined as:
\begin{equation}\label{eq:b_standard}
\hat{b}(x,t) = \frac{\frac{1}{N}\sum_{i=1}^N \nabla f(x+\sqrt{1-t}Z^i)}{\frac{1}{N}\sum_{i=1}^N f(x+\sqrt{1-t}Z^i)},
\end{equation}
where for $i\in\{1,\dots,N\}$, $Z^i\stackrel{\textrm{i.i.d.}}{\sim}\mathcal{N}_{d}(0,I)$ are random variables independent of the sequence $(W_{(k+1)\Delta_l} - W_{k\Delta_l})$. It should be noted that the Gaussian random variables
are \emph{updated at each simulation time}. As we will see, this re-simulation is not necessary and in some instances  one can substantially improve
the algorithm if such re-simulation is avoided.  The exact method of \cite{jiao} is given in \autoref{alg:basic_method}.

\begin{algorithm}[!h]
Input: number of i.i.d.~replicates, $M\in\mathbb{N}$; number of samples, $N\in\mathbb{N}$, for the approximation of  the drift function~$b$;  level of discretization, $l\in\mathbb{N}_0$. 
\begin{enumerate}
\item[1.] Repeat for $i\in\{1,2,\ldots, M\}$:

\begin{itemize}
\item [a.] Initialise $\widetilde{X}^{l,N}_{0}(i)=0$.
\item[b.] Repeat for  $k\in\{0,1,\ldots, \Delta_l^{-1}-1\}$:
\begin{itemize}
\item[i.] Sample $Z_k^j(i)\stackrel{\textrm{i.i.d.}}{\sim}\mathcal{N}_d(0,I)$,  $j\in\{1,\dots,N\}$,  and compute:
\begin{align*}
\hat{b}(\widetilde{X}_{k\Delta_l}^{l,N}(i),k\Delta_l) = \frac{\frac{1}{N}\sum_{j=1}^N \nabla f(\widetilde{X}_{k\Delta_l}^{l,N}(i)+\sqrt{1-k\Delta_l}Z_k^j(i))}{\frac{1}{N}\sum_{j=1}^N f(\widetilde{X}_{k\Delta_l}^{l,N}(i)+\sqrt{1-k\Delta_l}Z_k^j(i))}.
\end{align*}
\item[ii.] Generate $(W_{(k+1)\Delta_l}(i) - W_{k\Delta_l}(i))\sim\mathcal{N}_d(0,\Delta_lI)$ and set:
$$
\widetilde{X}_{(k+1)\Delta_l}^{l,N}(i) = \widetilde{X}_{k\Delta_l}^{l,N}(i) + \hat{b}(\widetilde{X}_{k\Delta_l}^{l,N}(i),k\Delta_l)\Delta_l + (W_{(k+1)\Delta_l}(i)- W_{k\Delta_l}(i)).
$$
\end{itemize}
\end{itemize}
\item[2.] Return  $\widetilde{X}_1^{l,N}(1),\dots,\widetilde{X}_1^{l,N}(M)$.
\end{enumerate}
\caption{Biased SFS with Euler-Maruyama and i.i.d. Monte Carlo estimation for $b$}
\label{alg:basic_method}
\end{algorithm}

Using \autoref{alg:basic_method} one can compute Monte Carlo estimators of $\pi(\varphi)$, with $\varphi:\mathbb{R}^d\rightarrow\mathbb{R}$ a $\pi$-integrable function, simply by using the sample average:
\begin{align}
\label{eq:MC_est}
\pi^M(\varphi) := \frac{1}{M}\sum_{i=1}^M\varphi(\widetilde{X}_1^{l,N}(i)).
\end{align}
Now, to study the mean square error of this method, we consider the standard Euler-Maruyama discretization:
\begin{equation}\label{eq:milstein_exact}
\widetilde{X}_{(k+1)\Delta_l}^l = \widetilde{X}_{k\Delta_l}^l + b(\widetilde{X}_{k\Delta_l}^l, k\Delta_l)\Delta_l + W_{(k+1)\Delta_l} - W_{k\Delta_l}.
\end{equation}

To assist our analysis, we make the following assumptions.
 For a vector $x\in\mathbb{R}^d$ (resp.~matrix~$A$) we write the $j^{th}$-element (resp.~$(j,k)^{th}$-element)
as $x_j$ (resp.~$A_{jk}$). Also, $\|\cdot\|_1$ is the $L_1$-norm.
\begin{hypA}\label{ass:2}
\begin{enumerate}
\item[a)] There exist $0<\underline{C}<\overline{C}<+\infty$ such that for any $x\in\mathbb{R}^d$
%
\begin{align*}
\underline{C}\leq f(x)  \leq   \overline{C}, \qquad \|\nabla f(x)\|_{1}   \leq  \overline{C}, \qquad
\|\nabla^2 f(x)\|_{1}   \leq   \overline{C}.
\end{align*}
%
\item[b)] There exists $C<+\infty$ such that for any $(x,y)\in\mathbb{R}^{2d}$
$$
\max\,\big\{\,|f(x)-f(y)|,\,\|\nabla f(x)-\nabla f(y)\|_1,
\|\nabla^2 f(x)-\nabla^2 f(y)\|_1\,\big\}\leq C\|x-y\|_2.
$$
\end{enumerate}
\end{hypA}

Below $\textrm{Lip}(\mathbb{R}^d)$ denotes the collection of measurable functions $\varphi:\mathbb{R}^d\to \mathbb{R}$ so that for a  $C<\infty$,  for all $(x,y)\in\mathbb{R}^{2d}$ we have $|\varphi(x)-\varphi(y)|\leq C\|x-y\|_2$, with $\|\cdot\|_2$ denoting the Euclidean norm. Let $\mathcal{B}_b(\mathbb{R}^d)$ be the collection of measurable and bounded functions $\varphi:\mathbb{R}^d\rightarrow\mathbb{R}$.

We then have the result stated in Proposition \ref{prop:simple_prop} below. 
The associated technical result (\autoref{lem:conv_mc_marg} ) can be found in Appendix \ref{app:theory}.

\begin{prop}\label{prop:simple_prop}
Assume (A\ref{ass:2}). Then for any $\varphi\in\textrm{\emph{Lip}}(\mathbb{R}^d)\cap\mathcal{B}_b(\mathbb{R}^d)\cap \mathcal{C}^4\left(\mathbb{R}^d\right)$ there exists a $C<\infty$ such that for any $(l,N,M)\in\mathbb{N}_0\times\mathbb{N}^2$ we have
\begin{align*}
\mathbb{E}\Big[\big(\pi^M(\varphi)-\pi(\varphi)\big)^{2}\Big] \leq C\big(\tfrac{1}{N} + \tfrac{1}{M} + \Delta_l^2\big).
\end{align*}
\end{prop}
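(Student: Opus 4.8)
The plan is to bound the mean square error by a standard bias-variance-type decomposition, inserting the intermediate (unbiased-drift) Euler--Maruyama process $\widetilde{X}^l$ from \eqref{eq:milstein_exact} as a bridge between the fully approximate process $\widetilde{X}^{l,N}$ and the true diffusion $X$. Concretely, I would write
\begin{align*}
\mathbb{E}\big[(\pi^M(\varphi)-\pi(\varphi))^2\big]
&\leq 3\,\mathbb{E}\big[(\pi^M(\varphi)-\mathbb{E}[\varphi(\widetilde{X}^{l,N}_1)])^2\big]
+ 3\big(\mathbb{E}[\varphi(\widetilde{X}^{l,N}_1)]-\mathbb{E}[\varphi(\widetilde{X}^{l}_1)]\big)^2\\
&\quad + 3\big(\mathbb{E}[\varphi(\widetilde{X}^{l}_1)]-\pi(\varphi)\big)^2,
\end{align*}
using $(a+b+c)^2\le 3(a^2+b^2+c^2)$. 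The first term is the Monte Carlo (over the $M$ i.i.d.\ replicates) variance: since $\varphi$ is bounded, $\mathrm{Var}(\varphi(\widetilde{X}^{l,N}_1))\le \|\varphi\|_\infty^2<\infty$, so this term is $\le C/M$. The third term is the classical weak error of the Euler--Maruyama scheme for the SDE \eqref{eq:main_diff}: under (A\ref{ass:2}), the drift $b$ and its derivatives are well-behaved (bounded, Lipschitz — this should follow from the quotient structure of $b$ together with the lower bound $\underline C\le f$ and the bounds on $f,\nabla f,\nabla^2 f$), and $X_1\sim\pi$ exactly, so for $\varphi\in\mathcal{C}^4\cap\mathrm{Lip}$ the standard weak-order-one result (e.g.\ Talay--Tubaro / Kloeden--Platen) gives a bound of order $\Delta_l$ — but since weak error for Euler is first order one gets $\big(\mathbb{E}[\varphi(\widetilde{X}^l_1)]-\pi(\varphi)\big)^2 = O(\Delta_l^2)$, consistent with the claimed $\Delta_l^2$ term. (The $\mathcal{C}^4$ hypothesis on $\varphi$ is exactly what is needed to run the Talay--Tubaro expansion, or alternatively to control the fourth-order terms in the associated backward Kolmogorov PDE.)

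The second term — the bias incurred by replacing the exact drift $b$ with the $N$-sample Monte Carlo estimate $\hat b$ — is where the real work lies, and where I expect \autoref{lem:conv_mc_marg} from Appendix~\ref{app:theory} to be invoked. The idea is a discrete Gronwall / one-step-error propagation argument: couple $\widetilde{X}^{l,N}$ and $\widetilde{X}^l$ by using the same Brownian increments $W_{(k+1)\Delta_l}-W_{k\Delta_l}$, and track $\mathbb{E}\|\widetilde{X}^{l,N}_{k\Delta_l}-\widetilde{X}^l_{k\Delta_l}\|^2$ (or the first moment) across steps. At each step the discrepancy picks up a term of the form $\Delta_l\,(\hat b(\cdot)-b(\cdot))$ plus a Lipschitz-in-$x$ term $\Delta_l\,(b(\widetilde X^{l,N})-b(\widetilde X^l))$ controlled by the gap at the previous step. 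The key estimate is that the drift-approximation error satisfies $\mathbb{E}\big[\|\hat b(x,t)-b(x,t)\|^2\big]\le C/N$ uniformly in $(x,t)\in\mathbb{R}^d\times[0,1)$: this is precisely the content of \autoref{lem:conv_mc_marg}, and it follows from a delta-method / ratio-of-means argument using that the denominator $\mathbb{E}_\phi[f(x+\sqrt{1-t}Z)]$ is bounded below by $\underline C>0$ (so the ratio is stable), that numerator and denominator are sums of i.i.d.\ bounded terms with variance $O(1/N)$, and that the bounds in (A\ref{ass:2})a) are uniform in $x$. Summing the per-step contributions over the $\Delta_l^{-1}$ steps and applying discrete Gronwall yields $\mathbb{E}\|\widetilde{X}^{l,N}_1-\widetilde{X}^l_1\|^2\le C/N$ (the $\Delta_l^{-1}$ factor from the number of steps cancels against the $\Delta_l$ factor in each step's drift contribution), and then the Lipschitz property of $\varphi$ upgrades this to $\big(\mathbb{E}[\varphi(\widetilde{X}^{l,N}_1)]-\mathbb{E}[\varphi(\widetilde{X}^l_1)]\big)^2\le \mathbb{E}[(\varphi(\widetilde X^{l,N}_1)-\varphi(\widetilde X^l_1))^2]\le C/N$ by Jensen and Cauchy--Schwarz.

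The main obstacle, then, is the second term: one must be careful that the Monte Carlo noise in $\hat b$ is \emph{re-sampled independently at each time step} (as emphasised after \eqref{eq:b_standard}), so that the one-step errors are conditionally mean-something-manageable; in particular $\hat b$ is not an unbiased estimator of $b$ (it is a ratio of averages), so there is both a variance-type contribution $O(1/N)$ \emph{and} a genuine bias contribution $\mathbb{E}[\hat b]-b = O(1/N)$ from the nonlinearity of the ratio — both of the same order $1/N$, which is why the final bound has $1/N$ rather than $1/\sqrt N$. Handling this cleanly requires the second-order Taylor expansion of $u\mapsto (\text{num})/u$ around $u=\mathbb{E}_\phi[f]$ together with the moment bounds, and then feeding the resulting uniform $O(1/N)$ estimate into the Gronwall recursion while keeping track of the fact that the accumulated Lipschitz constants stay bounded on $[0,1]$ (again using boundedness of $\nabla b$, which comes from the (A\ref{ass:2})b) Lipschitz assumptions). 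Everything else — the $1/M$ term and the $\Delta_l^2$ term — is routine given the regularity assumptions on $f$ and $\varphi$.
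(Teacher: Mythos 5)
Your proposal is correct and follows essentially the same route as the paper: a three-term decomposition separating the drift-approximation error (order $1/N$, via the coupling of $\widetilde{X}^{l,N}$ and $\widetilde{X}^{l}$ with shared Brownian increments, the uniform $\mathcal{O}(1/N)$ bound on $\mathbb{E}\|\hat b - b\|_2^2$, a Gr\"onwall argument, and the Lipschitz property of $\varphi$ --- this is exactly the content of \autoref{lem:conv_mc} and \autoref{lem:conv_mc_marg}), the i.i.d.\ Monte Carlo error over the $M$ replicates (order $1/M$), and the first-order weak error of the Euler scheme (order $\Delta_l^2$ after squaring, using $\varphi\in\mathcal{C}^4$). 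The only cosmetic difference is that you centre the Monte Carlo term at $\mathbb{E}[\varphi(\widetilde{X}_1^{l,N})]$ whereas the paper centres it at $\pi_l(\varphi)$ and keeps the drift-error term inside the sample average; both give the same bounds.
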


\begin{proof}
Throughout the proof, $C$ is a finite, positive constant that  does not depend on $(l,N,M)$,  with a  value that may change from line-to-line.
We have that 
$$
\pi^M(\varphi)-\pi(\varphi) = \tfrac{1}{M}\sum_{i=1}^M\big\{\varphi(\widetilde{X}_1^{l,N}(i))-\varphi(\widetilde{X}_1^{l}(i))\big\} +
\tfrac{1}{M}\sum_{i=1}^M\big\{\varphi(\widetilde{X}_1^{l}(i))-\pi_l(\varphi)\big\} + \big\{\pi_l(\varphi)-\pi(\varphi)\big\}.
$$
Here, $\widetilde{X}_1^{l}(i)$, $i\in\{1,\dots,M\}$, are i.i.d.~samples obtained  via  recursion \eqref{eq:milstein_exact}, starting at $X_0^{l}=0$, up until time instance $1$. Also,  $\pi_l(\varphi)$
is the expectation of $\varphi$ w.r.t.~the law of $\widetilde{X}^{l}_1$. Via the $C_2$-inequality (
$
\mathbb{E}[|a+b|^2] \leq 2 \mathbb{E}[|a|^2] + 2\mathbb{E}[|b|^2]$, where $a, b$ are random variables of finite second moments), we have the upper-bound
\begin{align}
\mathbb{E}&\Big[\big(\pi^M(\varphi)-\pi(\varphi)\big)^2\Big] \leq  C\,\Bigg(
\mathbb{E}\Big[\Big(\tfrac{1}{M}\sum_{i=1}^M\big\{\varphi(\widetilde{X}_1^{l,N}(i))-\varphi(\widetilde{X}_1^{l}(i))\big\}\Big)^2\Big] \nonumber \\ &\qquad \qquad\qquad\qquad\qquad\qquad+
\mathbb{E}\Big[\Big(\tfrac{1}{M}\sum_{i=1}^M\big\{\varphi(\widetilde{X}_1^{l}(i))-\pi_l(\varphi)\big\}\Big)^2\Big] + \big\{\pi_l(\varphi)-\pi(\varphi)\big\}^2
\Bigg).
\label{eq:simp_prop1}
\end{align}
For the first-term on the R.H.S.~of \eqref{eq:simp_prop1} one can use the conditional Jensen inequality, the Lipschitz property of $\varphi$ followed by \autoref{lem:conv_mc_marg} (note that in the latter result, the fact that the
$Z^1,\dots,Z^N$ are refreshed at each time, does not affect the proof, so the result still holds for the recursion used in \autoref{alg:basic_method}), to obtain
$$
\mathbb{E}\Big[\Big(\tfrac{1}{M}\sum_{i=1}^M\big\{\varphi(\widetilde{X}_1^{l,N}(i))-\varphi(\widetilde{X}_1^{l}(i))\big\}\Big)^2\Big]\leq \frac{C}{N}.
$$
For the second term on the R.H.S.~of \eqref{eq:simp_prop1}, one can use standard results for i.i.d.~variables to yield
$$
\mathbb{E}\Big[\Big(\tfrac{1}{M}\sum_{i=1}^M\big\{\varphi(\widetilde{X}_1^{l}(i))-\pi_l(\varphi)\big\}\Big)^2\Big] \leq \frac{C}{M}.
$$ 
For the third term on the R.H.S.~of \eqref{eq:simp_prop1}, standard weak error results for the Euler discretization of diffusions (\cite[Theorem 14.1.5]{kloeden_platen}) give
$$
\big\{\pi_l(\varphi)-\pi(\varphi)\big\}^2 \leq C\Delta_l^2.
$$
The proof is now  complete.
\end{proof}

As a result of \autoref{prop:simple_prop}, to achieve a mean square error of $\mathcal{O}(\epsilon^{2})$, for some  given $\epsilon>0$, one must choose $N=\mathcal{O}(\epsilon^{-2})$, $M=\mathcal{O}(\epsilon^{-2})$ and $l=\mathcal{O}(|\log(\epsilon)|)$, yielding a cost of~$\mathcal{O}(\epsilon^{-5})$.

\begin{rem}
The smoothness requirement of the test function, $\varphi\in \mathcal{C}^4(\mathbb{R}^d),$ is only used in the final step to get a weak error of order 1 for the Euler approximation. Less smoothness will result in lower order, for instance measurable and bounded Lipschitz derivatives would result to the relevant term in the upper bound of  \autoref{prop:simple_prop}  to be  $C \Delta_l$ instead.
\end{rem}

\begin{rem}
Compared to \cite{pra} we impose the more restrictive assumption (A\ref{ass:2}) as the analysis here and in the subsequent results in Section \ref{sec:theory} is clear this way.
In more details, the assumption on $f$ in terms of the upper and lower bounds is one used often in the importance sampling literature (see for example \cite[Assumption 2.2]{mlpf}) and relates essentially to the ability of the Gaussian to mimic the target $\pi$. This boundedness condition is purely qualitative and can be relaxed at quite considerable complications to the proof. The closer $\pi$ is to a Gaussian, the more likely one can verify this assumption.
Recall the drift coefficient can be written as $b(x,t) = \nabla\log h$, where $h(x,t) = \mathbb{E}[f(x+W_{1-t})]$. The function $h$ is smooth (because of the mollification by the heat kernel), but in general $f$ may not satisfy the upper and lower bounded condition required by (A\ref{ass:2}). Moreover, as $t\to 1$, the smoothness will vanish if $f$ is not smooth. Nevertheless, we can treat the general case by working with a "proxy" of $X$, in other words, we apply the algorithm and the theoretical convergence argument to a process $\widetilde{X}$ that satisfies the equation
$$
d\widetilde{X}_t= \widetilde{b}(\widetilde{X},t)dt + dW_t, \qquad \widetilde{X}_0 = 0,
$$
where 
$$
\widetilde{b}(\widetilde{X},t) = \nabla \log \mathbb{E}\left[\tilde{f}(x+W_{1+\epsilon-t}) \right]
$$
where $\tilde{f}$ is the original $f$ "clipped" from above and below: $\tilde{f} = \max\left\{\alpha,\min\{f,1/\alpha\}\right\}
$ and we choose $\epsilon$ and $\alpha$ sufficiently small to assume that the law of $X$ and the law of $\widetilde{X}$ are sufficiently close or that the boundedness deduced in \autoref{prop:simple_prop} remain the same.
\end{rem}

\subsection{Unbiased Estimation}

\cite{mcl,rhee} present a methodology developed in the setting that $Y_{n}\rightarrow Y_{\infty}$ w.r.t.~$L_2$-norm, for squared integrable random variables $\{Y_{n}\}_{n\ge 1}$, $Y_{\infty}$, and the objective is the unbiased estimation of $\mathbb{E}[Y_{\infty}]$. Extensions of the initial approach developed  in \cite{diffusions} will prove useful in the context of the current work.
Before we continue, we shall denote a consistent Monte Carlo based estimator  of $b(x,t)$ with $N$ samples as $\hat{b}_N(x,t)$. Examples include \eqref{eq:b_standard}
or a convergent MCMC algorithm with target probability $\pi_{x,t}$, as we now explain.
For $(x,t)\in\mathbb{R}^d\times[0,1]$ we have 
$$
b(x,t) = \mathbb{E}_{\pi_{x,t}}\left[\frac{\nabla f(x+\sqrt{1-t}Z)}{f(x+\sqrt{1-t}Z)}\right],
$$
where $\mathbb{E}_{\pi_{x,t}}$ denotes expectation w.r.t.~the probability measure
$$
\pi_{x,t}(dz) \propto f(x+\sqrt{1-t}z)\phi(z)dz.
$$
%
Thus, one can obtain a consistent estimator
$\hat{b}(x,t)$ of $b(x,t)$ using e.g.~MCMC methods. The exact form of the estimator is not specified for now.

We first assume access to two integer valued probability distribution $\mathbb{P}_R$ and  $\mathbb{P}_P$ on $\mathbb{N}_0$ both on $\mathbb{N}_0$. Further let $1\leq N_0<N_1<\cdots$ be a sequence of integers such that
$N_p \rightarrow\infty$, as $p\rightarrow\infty$. Higher values of $N_p$ will mean higher accuracy in estimation of $b$ and at the limit this will lead to a perfect estimator. We will use samples of  $\mathbb{P}_R$ and  $\mathbb{P}_P$ to set the number of discretization levels via $l$ and accuracy of $\hat{b}$ via $N_p$ respectively. The objective is to develop an unbiased estimator of $\pi(\varphi)$. Following ideas in \cite[Algorithm 5]{diffusions}, we will now specify a method that aims to overcome both sources of bias we are confronted with, in a way  that computing costs are reduced. 
We achieve debiasing via the `single term estimator' approach, see \cite{rhee}. The core idea of our method is to work with the random variable: 
\begin{align}
\label{eq:ue}
\widehat{\pi(\varphi)} = \frac{\big(\varphi(X_{1}^{L}[N_P])-\varphi(X_{1}^{L-1}[N_P])\big)-\big(\varphi(X_{1}^{L}[N_{P-1}])-\varphi(X_{1}^{L-1}[N_{P-1}]
\big)}{\mathbb{P}_R(L)\mathbb{P}_P(P)}
\end{align} 
with $L\sim \mathbb{P}_R$, $P\sim \mathbb{P}_{P}$. Also, $X_{1}^{l}[N_p]$, for $l, p\in \mathbb{N}_0$, denotes the approximation of $X_1$ obtained via recursion  \eqref{eq:milstein} for time-step $\Delta_l = 2^{-l}$ and the \emph{same} $N_p$ Gaussian variates for the estimation of the drift $b$ at all locations and time instances where it is needed. Critically, 
the four terms in the nominator of \eqref{eq:ue} are carefully coupled. Also, simple conventions apply in the event that $L=0$ or $P=0$. The detailed approach is described in  \autoref{alg:ub_sf_samp}.
Note that in Step 1b.~we assume that the computation of $\hat{b}$ is \emph{dependent} across levels, at coinciding time points. One way to achieve this is to sample
$N_p$ Gaussians and use an estimator of the type \eqref{eq:b_standard} at both levels with the \emph{same} Gaussians fixed once and for all --	 we believe this point is critical as illustrated in \autoref{fig:incr_var}. The figure shows that the variance of the increments $X_1^L[N]-X_1^{L-1}[N]$ decays much faster when the sample $\{Z_i\}_{i=1}^N$ are fixed. In Step 1b(iii), the term 'concatenated Wiener increment' means that the Wiener increment from time $k\Delta_{l-1}$ to $(k+1)\Delta_{l-1}$ at the coarser level $l-1$, for $l\in \mathbb{N}$, is the sum of the two Wiener increments from time $2k\Delta_l$ to $(2k+1)\Delta_l$ and from time $(2k+1)\Delta_l$ to $(2k+2)\Delta_l$ sampled at the finer level $l$, where $k\in \{0,\cdots,\Delta_{l-1}^{-1}-1\}$.
We note that an alternative to the single term estimator is the independent sum estimator, see \cite{rhee}, that often performs better in  simulations; this latter estimator can be used with little extra difficulty in implementation. 

\begin{figure}[h!]
\centering
\subcaptionbox{Fixed Gaussian Samples}{
\includegraphics[width =0.48\textwidth]{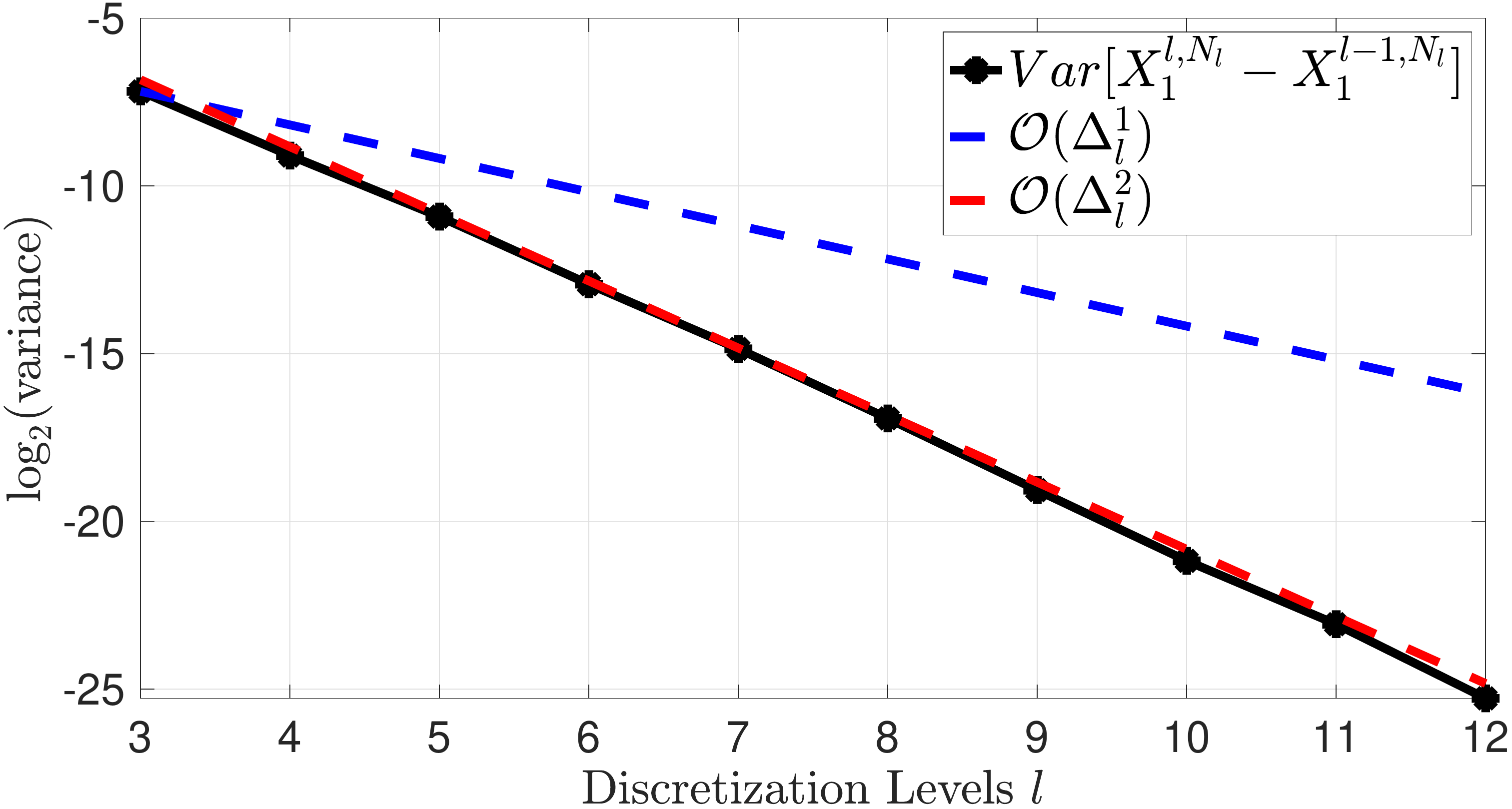}
}
\hfill
\subcaptionbox{Unfixed Gaussian Samples}{
\includegraphics[width =0.48\textwidth]{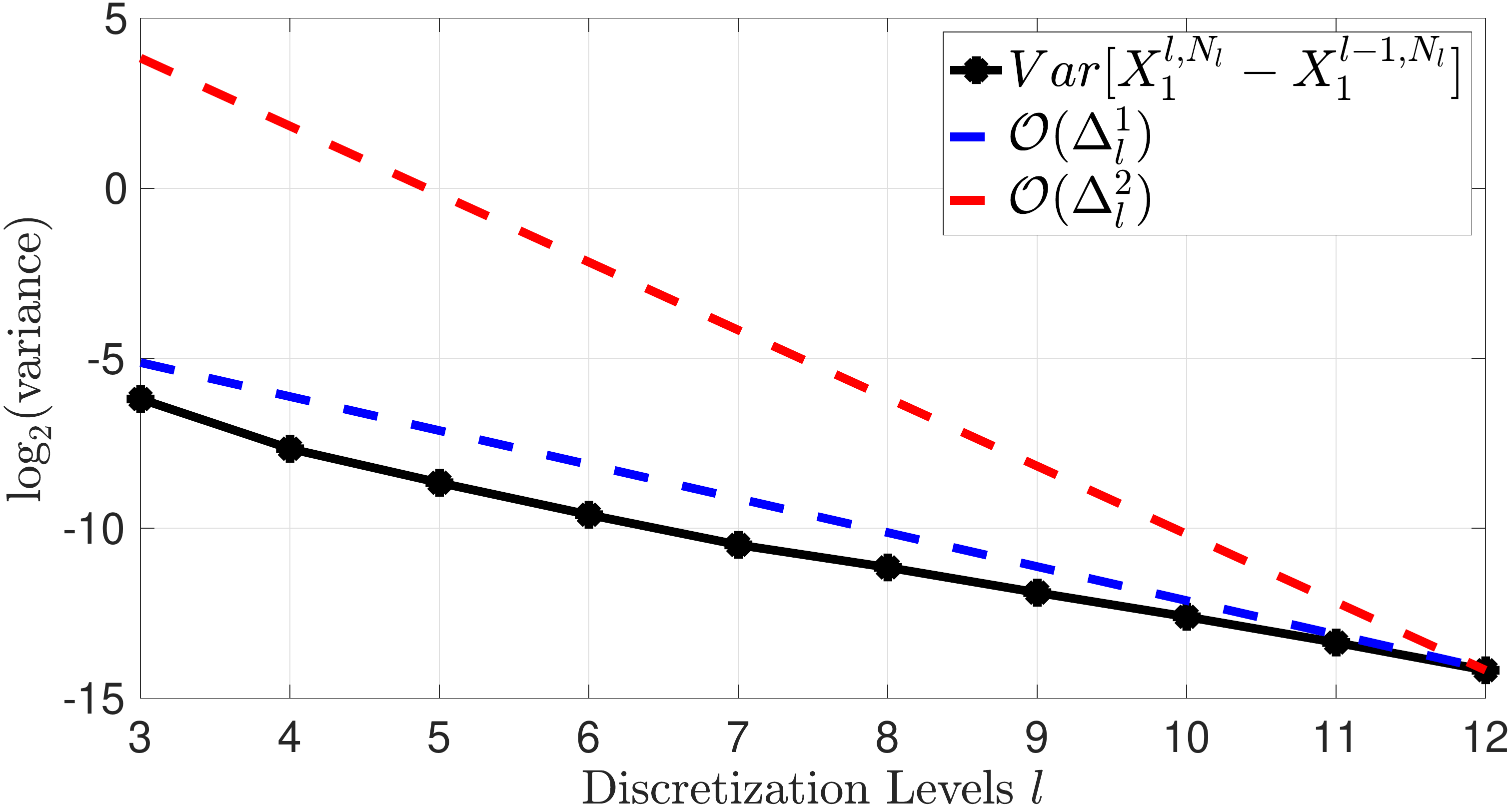}
}
\caption{The log-variance of the level differences estimates for two cases: (a) The samples $\{Z_i\}_{i=1}^N$ are fixed for both levels $l$ \& $l-1$ and all time points. (b) $\{Z_i\}_{i=1}^N$ are sampled at every time point for both levels $l$ \& $l-1$. For simplicity we use a one-dimensional Gaussian density $\pi$.}
\label{fig:incr_var}
\end{figure}

\begin{algorithm}[!h]
Input: number of replicates, $M\in\mathbb{N}$;  sequence $(N_p)_{p\in\mathbb{N}_0}$ and two positive probability mass functions, $\mathbb{P}_R$ and $\mathbb{P}_P$, on $\mathbb{N}_0$.
\begin{enumerate}
\item[1.] Repeat for $i\in \{1,2,\ldots, M\}$:
\begin{itemize}
\item[a.]   Sample $L^i\sim\mathbb{P}_R$ and $P^i\sim \mathbb{P}_P$.
\item[b.]   Sample the following variables.
\begin{itemize}
\item[i.] Sample $N_{P^i}$ Gaussians $\{Z^j(i)\}_{j=1}^{N_{P^i}} \sim \mathcal{N}_d(0,I)$. Sample the Wiener increments $\left\{W_{(k+1)\Delta_{L^i}} - W_{k\Delta_{L^i}}\right\}_{k=0}^{\Delta_{L^i}^{-1}-1}\sim\mathcal{N}_d(0,\Delta_l I)$. Then generate $X_1^{L^i}[N_{P^i}]$ from recursion \eqref{eq:milstein} with $l=L^i$ and $\hat{b}=\hat{b}_{N_{P^i}}$ using the same Gaussian variates, $\{Z^j(i)\}_{j=1}^{N_{P^i}}$, at every time instance.
\item[ii.] Sample $N_{P^i-1}$ Gaussians $\{Z^j(i)\}_{j=1}^{N_{P^i-1}} \sim \mathcal{N}_d(0,I)$. Generate $X_1^{L^i}[N_{P^i-1}]$ from recursion \eqref{eq:milstein} using the same Wiener increments as in (i.) with $l=L^i$ and $\hat{b}=\hat{b}_{N_{P^i-1}}$ using the same Gaussian variates, $\{Z^j(i)\}_{j=1}^{N_{P^i-1}}$, at every time instance. 
\item[iii.] Generate $X_1^{L^i-1}[N_{P^i}]$ from recursion \eqref{eq:milstein} with $l=L^i-1$ and $\hat{b}=\hat{b}_{N_{P^i}}$ -- use the same $N_{P^i}$ Gaussian variates in (i.) and concatenated Wiener increments produced by the ones used in (i.) via the identity:
 \begin{align}
W_{(k+1)\Delta_{L^i-1}}-W_{k\Delta_{L^i-1}} =  (W_{(2k+1)\Delta_{L^i}} - W_{2k\Delta_{L^i}}) + (W_{2(k+1)\Delta_{L^i}} - W_{(2k+1)}\Delta_{L^i}),  \label{eq:conc_inc2}
\end{align}
with $k\in\{0,\dots,\Delta_{L^i-1}^{-1}-1\}$.
\item[iv.] Generate $X_1^{L^i-1}[N_{P^i-1}]$ from recursion \eqref{eq:milstein} with $l=L^i-1$ and $\hat{b}=\hat{b}_{N_{P^i-1}}$ -- use the same $N_{P^i}$ Gaussian variates in (ii.) and the same concatenated Wiener increments as in (iii.). 
\end{itemize}
\item[c.] Set:
\begin{align*}
\widehat{\pi(\varphi)}(i) & =  \frac{\big(\varphi(X_1^{L^i}[N_{P^i}])-\varphi(X_1^{L^i-1}[N_{P^i}])\big) - 
\big(\varphi(X_1^{L^i}[N_{P^i-1}])-\varphi(X_1^{L^i-1}[N_{P^i-1}])\big)}{\mathbb{P}_R(L^i)
\mathbb{P}_P(P^i)}.
\end{align*}
Apply the conventions: \\
If $L^i=0$ then set $\varphi(X_1^{L^i-1}[N_{P^i}]) = \varphi(X_1^{L^i-1}[N_{P^i-1}]) = 0$. \\
If $P^i=0$ then set $\varphi(X_1^{L^i}[N_{P^i-1}]) = \varphi(X_1^{L^i-1}[N_{P^i-1}]) = 0$.
\end{itemize}
\item[2.] Return $\widehat{\pi(\varphi)}(i)$, $i \in \{1,2,\ldots, M\}$.
\end{enumerate}
\caption{Unbiased Estimator of $\pi(\varphi)$.}
\label{alg:ub_sf_samp}
\end{algorithm}

The estimator $\widehat{\pi(\varphi)}$ developed in \autoref{alg:ub_sf_samp} is unbiased. To show that it has finite variance, one strategy is to establish a bound of the type, for fixed $l,p\in \mathbb{N}$:
\begin{equation}\label{eq:bound_needed}
\mathbb{E}\left[
\Big(\varphi(X_1^{l}[N_{p}])-\varphi(X_1^{l-1}[N_{p}])-\varphi(X_1^{l})+\varphi(X_1^{l-1})\Big)^2\right] \leq \frac{C\Delta_l}{N_{p}},
\end{equation}
where $C$ does not depend upon $l,N_p$ and $(X_1^{l},X_1^{l-1})$ are sampled from the exact Euler discretization under the same coupling procedure as the one described in \autoref{alg:ub_sf_samp}. Then, as in \cite{diffusions}, setting  $N_p=2^p$ and $\mathbb{P}_P(l)=\mathbb{P}_R(l)\propto 2^{-l}(l+1)\log_2(l+2)^2$, $M=\epsilon^{-2}$, 
to achieve a variance (the estimator is unbiased) of $\mathcal{O}(\epsilon^2)$, for some $\epsilon>0$ given, the cost with high probability (the cost is random) is  $\mathcal{O}(\epsilon^{-2}|\log(\epsilon)|^{2+\delta})$, for any $\delta>0$. The main challenge from here is to ascertain the bound \eqref{eq:bound_needed}.

\section{Theoretical Results}\label{sec:theory}

\subsection{Verifying the Bound \eqref{eq:bound_needed}}

We now consider a proof of \eqref{eq:bound_needed} for a particular example. To simplify the notations, we will set for $l\in\{0,1,\dots\}$ and  $k\in\{0,1,\dots,\Delta_l^{-1}-1\}$
$$
\widetilde{X}_{(k+1)\Delta_l}^{l,N} = \widetilde{X}_{k\Delta_l}^{l,N} + \hat{b}(\widetilde{X}_{k\Delta_l}^{l,N},k\Delta_l)\Delta_l + W_{(k+1)\Delta_l} - W_{k\Delta_l},
$$ 
where,  for $x\in\mathbb{R}^d$
$$
\hat{b}(x,k\Delta_l) = \frac{\frac{1}{N}\sum_{i=1}^N \nabla f(x+\sqrt{1-k\Delta_l}Z^i)}{\frac{1}{N}\sum_{i=1}^N f(x+\sqrt{1-k\Delta_l}Z^i)},
$$
and, for $i\in\{1,\dots,N\}$,  $Z^i\stackrel{\textrm{i.i.d.}}{\sim}\mathcal{N}_d(0,I)$. The exact Euler scheme is such that for $l\in\{0,1,\dots\}$, $k\in\{0,1,\dots,\Delta_l^{-1}\}$,
$$
\widetilde{X}_{(k+1)\Delta_l}^{l} = \widetilde{X}_{k\Delta_l}^{l} + b(\widetilde{X}_{k\Delta_l}^{l},k\Delta_l)\Delta_l + W_{(k+1)\Delta_l} - W_{k\Delta_l},
$$ 
and note that the Brownian motions are all shared for both the approximated and exact Euler discretizations. Now, let $(l,t)\in\mathbb{N}_0\times[0,1]$ be given and define
$\tau_t^l:=\Delta_l\lfloor\frac{t}{\Delta_l}\rfloor$. 

We have the following result. Associated technical details can be found in Appendix \ref{app:theory}.

\begin{theorem}\label{theo:main_thm}
Assume (A\ref{ass:2}). There exists $C<+\infty$ such that for any $(l,N)\in\mathbb{N}^2$ we have
$$
\mathbb{E}\,\left[\Big\|\big(\widetilde{X}_{1}^{l,N}-\widetilde{X}_{1}^{l-1,N}\big)-\big(\widetilde{X}_{1}^{l}-\widetilde{X}_{1}^{l-1}\big)\Big\|_2^2 \right ]\leq \frac{C\Delta_l}{N}.
$$ 
\end{theorem}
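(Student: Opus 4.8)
The plan is to compare the four Euler trajectories through a Gr\"onwall-type recursion on the second moment of the mixed second difference
$$
\Xi_{k\Delta_l} := \big(\widetilde{X}_{k\Delta_l}^{l,N}-\widetilde{X}_{k\Delta_l}^{l-1,N}\big)-\big(\widetilde{X}_{k\Delta_l}^{l}-\widetilde{X}_{k\Delta_l}^{l-1}\big),
$$
with the key point that, since all four recursions share the driving Brownian motion (the coarse increments being concatenations of the fine ones), the Wiener terms cancel exactly in $\Xi_{(k+1)\Delta_l}-\Xi_{k\Delta_l}$, leaving only drift increments. Thus $\Xi$ evolves deterministically given the noise: $\Xi_{(k+1)\Delta_l}=\Xi_{k\Delta_l}+\Delta_l\,\mathcal{D}_k$, where $\mathcal{D}_k$ is the corresponding mixed second difference of the four drift evaluations (two at level $l$ with step $\Delta_l$, two at level $l-1$ with step $\Delta_{l-1}$, each pair in the approximate/exact pair). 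The target $C\Delta_l/N$ then dictates the size we must extract: one power of $\Delta_l$ from the discretization mismatch between levels, and one power of $1/N$ from the Monte Carlo error in $\hat b$ versus $b$, and crucially these must appear \emph{multiplicatively}.

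The first step is an algebraic decomposition of $\mathcal{D}_k$ into three groups of terms: (i) terms measuring $\hat b - b$ at nearby space-time points and \emph{differenced} across the two levels — these should be $O(\Delta_l/\sqrt N)$ in $L_2$ because a single $\hat b-b$ is $O(1/\sqrt N)$ (by \autoref{lem:conv_mc_marg}-type bounds, uniform in $(x,t)$ using the lower bound on $f$ in (A\ref{ass:2})) and differencing it across levels, whose trajectories are $O(\sqrt{\Delta_l})$ apart, gains a further $\sqrt{\Delta_l}$ via the Lipschitz regularity of $\nabla\hat b$; squaring gives $\Delta_l/N$; (ii) terms of the form $b$ evaluated along $\Xi$-separated trajectories, i.e. $b(\widetilde X^{l,N})-b(\widetilde X^{l,N,\text{shift}})-\cdots$ that reproduce $\Xi_{k\Delta_l}$ itself up to Lipschitz constants — these feed the Gr\"onwall loop, contributing $C\Delta_l\,\mathbb{E}\|\Xi_{k\Delta_l}\|_2^2$ per step; (iii) genuinely second-order Euler-bias terms involving the difference between the level-$l$ and level-$(l-1)$ grids for the \emph{exact} drift $b$ — these are the usual strong-rate-$1/2$ coupling terms, $O(\Delta_l)$ in $L_2$ hence $O(\Delta_l^2)$ after squaring, which is absorbed into $C\Delta_l/N$ since $\Delta_l\le 1$ but one must check they do not destroy the $1/N$ factor; in fact these terms should be handled so that they only appear multiplied by $\|\hat b - b\|$-type quantities or by $\Xi$, otherwise one is stuck with a bare $\Delta_l^2$ with no $1/N$. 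I would organize this so that the "pure" Euler-bias contribution across levels enters only through already-$\Xi$-separated or already-$(\hat b-b)$-separated brackets; the telescoping across the two fine sub-steps making up one coarse step is what produces the extra $\Delta_l$.

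The second step is to bound each group in $L_2$ using (A\ref{ass:2}): boundedness of $f$ above and below makes $\hat b$ and $b$ uniformly bounded with uniformly bounded and Lipschitz spatial derivatives (quotient rule plus the lower bound $\underline C$), and the Lipschitz-in-time behavior coming from the $\sqrt{1-t}$ factor; I would also need an a priori $L_2$ (indeed $L_p$) bound on the increments $\|\widetilde X^{l,N}_{k\Delta_l}-\widetilde X^{l-1,N}_{k\Delta_l}\|$ of order $\sqrt{\Delta_l}$, proved by a separate, simpler Gr\"onwall argument (the same sharing-of-Brownian-motion cancellation, now with the coarse grid's skipped midpoints as the source term), plus the standard $\|\widetilde X^{l}_{k\Delta_l}-\widetilde X^{l-1}_{k\Delta_l}\|=O(\sqrt{\Delta_l})$ from \cite{kloeden_platen}. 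With these in hand, group (i) is $\le C\Delta_l^2/N$, group (ii) is $\le C\Delta_l^2\,(1+\mathbb{E}\|\Xi_{k\Delta_l}\|^2)$, group (iii) is $\le C\Delta_l^2(\Delta_l+1/N)$. Summing $\mathbb{E}\|\Xi_{(k+1)\Delta_l}\|^2 \le (1+C\Delta_l)\mathbb{E}\|\Xi_{k\Delta_l}\|^2 + C\Delta_l^2/N$ over $k=0,\dots,\Delta_l^{-1}-1$, with $\Xi_0=0$, the discrete Gr\"onwall lemma gives $\mathbb{E}\|\Xi_1\|^2 \le C\Delta_l/N$, which is the claim.

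The main obstacle is step one: arranging the algebraic decomposition of $\mathcal{D}_k$ so that \emph{every} term carries the product $\Delta_l\times N^{-1/2}$ (after one Lipschitz/Taylor expansion), rather than $\Delta_l$ alone or $N^{-1/2}$ alone. The subtlety is that $\hat b$ is a \emph{ratio} of empirical means using the \emph{same} Gaussians $Z^1,\dots,Z^N$ at both levels and at all times (the coupling emphasized in \autoref{fig:incr_var}); this shared-randomness is exactly what makes $\hat b(\cdot;N)$ at level $l$ and at level $l-1$, evaluated at $O(\sqrt{\Delta_l})$-close arguments, differ by $O(\sqrt{\Delta_l})$ \emph{pathwise in the $Z$'s} with an $N$-independent Lipschitz constant — without it one would only get the $N^{-1/2}$ fluctuation and the increment variance would not decay in $\Delta_l$, as the figure confirms. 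Making this rigorous requires differentiating the empirical ratio and controlling its (random) gradient uniformly in $N$ via the lower bound on $f$, and I expect the bookkeeping of the cross terms between the Euler-grid mismatch and the drift-approximation error to be where the real work lies; the supporting lemmas in Appendix \ref{app:theory} are presumably exactly these uniform bounds on $\hat b$ and its increments.
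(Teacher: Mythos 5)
Your proposal is correct in outline and is essentially the paper's own argument: a Gr\"onwall bound on exactly the mixed second difference (in which the shared Brownian increments cancel), with the drift mixed second difference decomposed so that every surviving term couples an $O(N^{-1/2})$ Monte Carlo factor (the uniform-in-$N$ pathwise Lipschitz bound on $\hat b$ coming from the two-sided bounds on $f$, plus $\mathbb{E}\,\|\widetilde X^{\cdot,N}_t-\widetilde X^{\cdot}_t\|_2^p=O(N^{-p/2})$) with an $O(\sqrt{\Delta_l})$ time/level increment via Taylor expansion and Cauchy--Schwarz; this is precisely what the lemmata of Appendix~\ref{app:theory} carry out, the paper organizing the pairing through the ratio second-difference identity and by adding/subtracting $\hat b$ evaluated along the exact-Euler paths rather than through a gradient bound on the empirical ratio. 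The only slip is quantitative: a bare $O(\Delta_l^2)$ Euler-coupling term would \emph{not} be absorbed into $C\Delta_l/N$ uniformly in $N$ (nor would the residual $\Delta_l^3$ in your group-(iii) summary), but, as you yourself then require, no such bare term actually survives because the exact-Euler pair is subtracted off, so every contribution retains a $1/N$ factor --- which is exactly what the paper's lemmas confirm.
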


\begin{proof}
We will consider for a  fixed $t\in(0,1]$ the quantity
 $$\mathbb{E}\,\left [\Big\|\big(\widetilde{X}_{\tau_t^l}^{l,N}-\widetilde{X}_{\tau_t^{l-1}}^{l-1,N}\big) -\big(\widetilde{X}_{\tau_t^l}^{l}-\widetilde{X}_{\tau_t^{l-1}}^{l-1}\big)\Big\|^2_2\right ]$$
 and apply a Gr\"onwall inequality type argument. Throughout all of our proofs $C$ is a generic finite constant with value the can change upon each appearance, but will not depend upon $(l,N,t)$.
We have that for $t\in(0,1]$ 
\begin{align}
\label{eq:start}
\mathbb{E}\,\left [\big\| \,\big(\widetilde{X}_{\tau_t^l}^{l,N}-\widetilde{X}_{\tau_t^{l-1}}^{l-1,N}\big) -\big(\widetilde{X}_{\tau_t^l}^{l}-\widetilde{X}_{\tau_t^{l-1}}^{l-1}\big)\,\big\|_2^2\right ] =  \mathbb{E}\,\left [\big\|\,T_1 + T_2\,\big\|_2^2\right ],
\end{align}
where we have defined
\begin{align*}
T_1&:= \int_{\tau_t^{l-1}}^{\tau_t^{l}}\big(\hat{b}(\widetilde{X}_{\tau_s^{l}}^{l,N},\tau_s^{l})-b(\widetilde{X}_{\tau_s^{l}}^{l},\tau_s^{l})\big)ds, \\ 
T_2&:= \int_0^{\tau_t^{l-1}}\Big(\big(\hat{b}(\widetilde{X}_{\tau_s^{l}}^{l,N},\tau_s^{l})-\hat{b}(\widetilde{X}_{\tau_s^{l-1}}^{l-1,N},\tau_s^{l-1})\big)- \big(b(\widetilde{X}_{\tau_s^l}^{l},\tau_s^l)-b(\widetilde{X}_{\tau_s^{l-1}}^{l-1},\tau_s^{l-1})\big)\Big)ds.
\end{align*}
By the $C_2$    and Jensen inequality, we have that
\begin{align*}
\mathbb{E}\,\left [\big\|T_1\big\|_2^2\right ]&\le  C\,(\tau_t^{l}-\tau_t^{l-1})\int_{\tau_t^{l-1}}^{\tau_t^{l}}
\mathbb{E}\,\left [\Big\| \big(\hat{b}(\widetilde{X}_{\tau_s^{l}}^{l,N},\tau_s^{l})-b(\widetilde{X}_{\tau_s^{l}}^{l},\tau_s^{l})\big)\big\|_2^2\right ] ds, \\
\mathbb{E}\,\left [\big\|T_2\big\|_2^2\right ]&\le C\,\mathbb{E}\,\left [\Big\| \int_0^{\tau_t^{l-1}}\Big(\big(\hat{b}(\widetilde{X}_{\tau_s^l}^{l,N},\tau_s^l)-\hat{b}(\widetilde{X}_{\tau_s^{l-1}}^{l-1,N},\tau_s^{l-1})\big) - \big(b(\widetilde{X}_{\tau_s^l}^{l},\tau_s^l)-b(\widetilde{X}_{\tau_s^{l-1}}^{l-1},\tau_s^{l-1})\big)\Big)ds
\big\|_2^2\right ].
\end{align*}
%
We first treat the upper-bound for $\mathbb{E}\, \left [\big\|T_1\big\|_2^2\right ]$. We have that
\begin{align*}
%
&\int_{\tau_t^{l-1}}^{\tau_t^{l}}
\mathbb{E}\, \left [\Big\| \big(\hat{b}(\widetilde{X}_{\tau_s^{l}}^{l,N},\tau_s^{l})-b(\widetilde{X}_{\tau_s^{l}}^{l},\tau_s^{l})\big)\big\|_2^2\right ] ds \le \\
& \qquad C\,\bigg\{\,\int_{\tau_t^{l-1}}^{\tau_t^{l}}\mathbb{E}\, \left [\,\Big\|\big(\hat{b}(\widetilde{X}_{\tau_s^{l}}^{l,N},\tau_s^{l})-
\hat{b}(\widetilde{X}_{\tau_s^{l}}^{l},\tau_s^{l})\big)\big\|_2^2\right ] ds +
\int_{\tau_t^{l-1}}^{\tau_t^{l}}\mathbb{E}\, \left [\Big\|\big(\hat{b}(\widetilde{X}_{\tau_s^{l}}^{l},\tau_s^{l})-b(\widetilde{X}_{\tau_s^{l}}^{l},\tau_s^{l})\big)\big\|_2^2\right ]ds\,
\bigg\}.
\end{align*}
\autoref{lem:stoch_lip} followed by \autoref{lem:conv_mc_marg} give
\begin{align*}
\mathbb{E}\, \left [\Big\|\big(\hat{b}(\widetilde{X}_{\tau_s^{l}}^{l,N},\tau_s^{l})-
\hat{b}(\widetilde{X}_{\tau_s^{l}}^{l},\tau_s^{l})\big)\big\|_2^2\right ] \leq \frac{C}{N},
\end{align*}
with $C$ independent of $s$. Also, application of  \autoref{lem:conv_mc}  gives 
\begin{align*}
\mathbb{E}\, \left [\Big\|\big(\hat{b}(\widetilde{X}_{\tau_s^{l}}^{l},\tau_s^{l})-b(\widetilde{X}_{\tau_s^{l}}^{l},\tau_s^{l})\big)\big\|_2^2\right ] \leq \frac{C}{N}.
\end{align*}
Thus,
\begin{align}\label{eq:nice_bound}
\mathbb{E}\, \left [\big\|T_1\big\|_2^2\right ]
 \leq \frac{C\Delta_l^2}{N}.
\end{align}
We now turn to the upper-bound for $\mathbb{E}\, \left [\big\|T_2\big\|_2^2\right ]$. Adding and subtracting $\hat{b}(\widetilde{X}_{\tau_s^l}^{l},\tau_s^l)-
\hat{b}(\widetilde{X}_{\tau_s^{l-1}}^{l-1},\tau_s^{l-1})$, followed by use of $C_2$ and Jensen inequality, gives
\begin{align*}
%
&\mathbb{E}\, \left [\big\|T_2\big\|_2^2\right ] \leq 
C\,\bigg\{\,
\mathbb{E}\, \left [\Big\|
\int_0^{\tau_{t}^{l-1}}\Big(\big(\hat{b}(\widetilde{X}_{\tau_s^l}^{l,N},\tau_s^l)-\hat{b}(\widetilde{X}_{\tau_s^{l-1}}^{l-1,N},\tau_s^{l-1})\big) - 
\big(\hat{b}(\widetilde{X}_{\tau_s^l}^{l},\tau_s^l)-
\hat{b}(\widetilde{X}_{\tau_s^{l-1}}^{l-1},\tau_s^{l-1})\big)\Big)ds
\big\|_2^2\right ]
\\
&\qquad \quad \quad\quad\quad\quad +\,\mathbb{E}\, \left [\Big\|
\int_0^{\tau_{t}^{l-1}}\Big(
\big(\hat{b}(\widetilde{X}_{\tau_s^l}^{l},\tau_s^l)-\hat{b}(\widetilde{X}_{\tau_s^{l-1}}^{l-1},\tau_s^{l-1})\big) - 
\big(b(\widetilde{X}_{\tau_s^{l}}^{l},\tau_s^{l})-
b(\widetilde{X}_{\tau_s^{l-1}}^{l-1},\tau_s^{l-1})\big)\Big)ds
\big\|_2^2\right ]\,\bigg\}.
\end{align*}
For the first expectation in the above right-hand term one can use \autoref{lem:tech_lem} and for the second expectation one can apply 
\autoref{lem:tech_lem2}, to yield the following upper-bound
\begin{align*}
\mathbb{E}\, \left [\big\|T_2\big\|_2^2\right ] \leq 
C\,\bigg\{\,\frac{\Delta_l}{N} + \int_0^t
\mathbb{E}\, \left [\Big\|\big(\widetilde{X}_{\tau_s^l}^{l,N}-\widetilde{X}_{\tau_s^{l-1}}^{l-1,N}\big) - 
\big(\widetilde{X}_{\tau_s^l}^{l}-\widetilde{X}_{\tau_s^{l-1}}^{l-1}\big)
\big\|_2^2\right ] ds\,\bigg\}.
\end{align*}
The proof is now concluded by applying  Gr\"onwall's inequality and setting $t=1$.
\end{proof}

We will also use the notation that for a differentiable function $\psi:\mathbb{R}^d\rightarrow\mathbb{R}$, $\nabla_k \psi(x)=(\partial \psi/\partial x_k)(x)$, $k\in\{1,\dots,d\}$.

\begin{prop}\label{prop:func_res}
Assume (A\ref{ass:2}). Then for any $\varphi\in \mathcal{B}_b(\mathbb{R}^d)$ with  $(\partial\varphi/\partial x_k)\in\textrm{\emph{Lip}}(\mathbb{R}^d)\cap \mathcal{B}_b(\mathbb{R}^d)$, $k\in\{1,\dots,d\}$ there exists a $C<+\infty$ such that for any $(l,N)\in\mathbb{N}^2$ we have
\begin{align*}
\mathbb{E}\,\Big[\,\Big(\big(\varphi(\widetilde{X}_{1}^{l,N})-\varphi(\widetilde{X}_{1}^{l-1,N})\big)
-\big(\varphi(\widetilde{X}_{1}^{l})-\varphi(\widetilde{X}_{1}^{l-1})\big)\Big)^2\,\Big] \leq \frac{C\Delta_l}{N}.
\end{align*}
\end{prop}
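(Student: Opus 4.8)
The plan is to transfer the process-level bound of \autoref{theo:main_thm} to the level of the test function by a first-order mean-value expansion of $\varphi$, exploiting that by hypothesis $\nabla\varphi$ is bounded with Lipschitz components. Abbreviating
$$u := \widetilde{X}_{1}^{l,N} - \widetilde{X}_{1}^{l}, \qquad v := \widetilde{X}_{1}^{l} - \widetilde{X}_{1}^{l-1}, \qquad w := \big(\widetilde{X}_{1}^{l,N} - \widetilde{X}_{1}^{l-1,N}\big) - \big(\widetilde{X}_{1}^{l} - \widetilde{X}_{1}^{l-1}\big),$$
one has $\widetilde{X}_{1}^{l-1,N} - \widetilde{X}_{1}^{l-1} = u - w$, and, adding and subtracting $\varphi(\widetilde{X}_{1}^{l-1}+u)$, the exact algebraic identity
$$\big(\varphi(\widetilde{X}_{1}^{l,N})-\varphi(\widetilde{X}_{1}^{l-1,N})\big)-\big(\varphi(\widetilde{X}_{1}^{l})-\varphi(\widetilde{X}_{1}^{l-1})\big) = A + B,$$
where, with $\eta(x):=\varphi(x+u)-\varphi(x)$ for the given realisation of $u$, we set $A := \eta(\widetilde{X}_{1}^{l}) - \eta(\widetilde{X}_{1}^{l-1})$ and $B := \varphi(\widetilde{X}_{1}^{l-1}+u) - \varphi(\widetilde{X}_{1}^{l-1}+u-w)$. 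By the $C_2$-inequality it then suffices to bound $\mathbb{E}[A^2]$ and $\mathbb{E}[B^2]$ separately by $C\Delta_l/N$, with $C$ independent of $(l,N)$.

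For $B$ I would use the mean-value theorem together with boundedness of $\nabla\varphi$ to get, pathwise, $|B|\le C\|w\|_2$, so that $\mathbb{E}[B^2]\le C\,\mathbb{E}[\|w\|_2^2]\le C\Delta_l/N$ is immediate from \autoref{theo:main_thm}. For $A$, since $\eta\in\mathcal{C}^1$ with $\nabla\eta(x)=\nabla\varphi(x+u)-\nabla\varphi(x)$, the Lipschitz property of $\nabla\varphi$ gives $\sup_x\|\nabla\eta(x)\|_2\le C\|u\|_2$, hence pathwise $|A|\le C\|u\|_2\,\|v\|_2$; the Cauchy--Schwarz inequality then yields $\mathbb{E}[A^2]\le C\,(\mathbb{E}[\|u\|_2^4])^{1/2}(\mathbb{E}[\|v\|_2^4])^{1/2}$. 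Thus the whole argument reduces to the two fourth-moment estimates $\mathbb{E}[\|u\|_2^4]\le C/N^2$ and $\mathbb{E}[\|v\|_2^4]\le C\Delta_l^2$, whose product is exactly $C\Delta_l/N$.

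The bound $\mathbb{E}[\|v\|_2^4]\le C\Delta_l^2$ is the standard $L_4$ strong-error estimate for the Euler scheme: under (A\ref{ass:2}) the drift $b(\cdot,t)=\nabla\log\mathbb{E}[f(\cdot+\sqrt{1-t}Z)]$ and its gradient are bounded uniformly in $t\in[0,1]$, and the two Euler recursions producing $v$ share the same Brownian increments, so the classical argument applies directly. The bound $\mathbb{E}[\|u\|_2^4]\le C/N^2$ is the fourth-moment analogue of \autoref{lem:conv_mc_marg}: because $f$ is bounded above and below, $\hat{b}$ and $b$ are uniformly bounded and the per-step drift error $\hat{b}(x,t)-b(x,t)$ has a $(2m)$-th moment of order $N^{-m}$ uniformly in $(x,t)$ (by a Marcinkiewicz--Zygmund/Rosenthal estimate applied to numerator and denominator, the lower bound on $f$ controlling the ratio), and one propagates this error through recursion \eqref{eq:milstein} by a discrete Gr\"onwall argument carried out at the fourth-moment level, mirroring the proof of \autoref{lem:conv_mc_marg}. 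I expect this last propagation estimate -- establishing the $N^{-2}$ fourth-moment rate uniformly in $l$ and $N$ -- to be the only non-routine step; everything else is elementary Taylor-expansion bookkeeping combined with \autoref{theo:main_thm}.
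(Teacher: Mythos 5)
Your proof is correct and follows essentially the same route as the paper: a first-order mean-value decomposition whose first piece is bounded using the boundedness of $\nabla\varphi$ together with \autoref{theo:main_thm}, and whose second piece is bounded using the Lipschitz property of $\nabla\varphi$, Cauchy--Schwarz, the $L_4$ strong error of the Euler scheme and a fourth-moment bound on the marginal drift-approximation error. The only remark is that the estimate $\mathbb{E}[\|u\|_2^4]\le C/N^2$, which you flag as the one non-routine step, needs no new argument: \autoref{lem:conv_mc_marg} is stated for every $p\in[1,\infty)$, so taking $p=4$ there gives it immediately.
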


\begin{proof}
For any $(x,y,u,v)\in\mathbb{R}^{4d}$ we have the representation
\begin{align*}
\big(\varphi(x)-\varphi(y)\big)-\big(\varphi(u)-\varphi(v)\big) = T_1 + T_2,
\end{align*}
where we have defined
\begin{align*}
T_1 & =  \sum_{k=1}^d \int_0^1 (\nabla_k \varphi)\big(y+\lambda(x-y)\big)\times \big((x-y)-(u-v)\big)_k \,d\lambda, \\
T_2 & =  \sum_{k=1}^d \int_0^1 \big\{ (\nabla_k \varphi)\big(y+\lambda(x-y)\big)-(\nabla_k  \varphi)\big(v+\lambda(u-v)\big) \big\}\times(u-v)_k \,d\lambda.
\end{align*}
%

Via the $C_2$-inequality, it suffices now to bound the second moments of $T_1$ and $T_2$, when evaluated at $(x,y,u,v)=(\widetilde{X}_{1}^{l,N},\widetilde{X}_{1}^{l-1,N},\widetilde{X}_{1}^{l},\widetilde{X}_{1}^{l-1})$. For $T_1$, since $\nabla_k \varphi$ is bounded, one has the upper-bound
\begin{align*}
\mathbb{E}\,\left [\Big\|\big(\widetilde{X}_{1}^{l,N}-\widetilde{X}_{1}^{l-1,N}\big)-\big(\widetilde{X}_{1}^{l}-\widetilde{X}_{1}^{l-1}\big)\Big\|_2^2\right ]
\end{align*}
that is bounded by $C\Delta_l/N$ via \autoref{theo:main_thm}. For $T_2$, the Lipschitz property of $\nabla_k \varphi$ provides  the upper-bound
\begin{align*}
C\times \mathbb{E}\,\Big[\,\Big(\big\|\widetilde{X}_{1}^{l-1,N}-\widetilde{X}_{1}^{l-1}\big\|_2^2 + \big\|\big(\widetilde{X}_{1}^{l,N}-\widetilde{X}_{1}^{l-1,N}\big)-\big(\widetilde{X}_{1}^{l}-\widetilde{X}_{1}^{l-1}\big)\big\|_2^2\Big)\cdot \big\|\widetilde{X}_{1}^{l}-\widetilde{X}_{1}^{l-1}\big\|_2^2\,\Big].
\end{align*}
For the term
\begin{align*}
\mathbb{E}\,\Big[\,\big\|\widetilde{X}_{1}^{l-1,N}-\widetilde{X}_{1}^{l-1}\big\|_2^2\cdot \big\|\widetilde{X}_{1}^{l}-\widetilde{X}_{1}^{l-1}\big\|_2^2\,\Big]
\end{align*}
one can use Cauchy-Schwarz, \autoref{lem:conv_mc_marg} and the convergence of the Euler approximation, to yield a bound of  $C\Delta_l^2/N$.
For the term 
\begin{align*}
\mathbb{E}\,\Big[\,\big\|\big(\widetilde{X}_{1}^{l,N}-\widetilde{X}_{1}^{l-1,N}\big)-\big(\widetilde{X}_{1}^{l}-\widetilde{X}_{1}^{l-1}\big) \big\|_2^2\cdot \big\|\widetilde{X}_{1}^{l}-\widetilde{X}_{1}^{l-1}\big\|_2^2\,\Big]
\end{align*}
one can use a similar argument to yield a bound of  $C\Delta_l^2/N$. This concludes the proof.
\end{proof}

\begin{rem}
\label{rem:infinite_cost}
We expect that the bound in \autoref{theo:main_thm} (and hence \autoref{prop:func_res}) can be made sharper to $\mathcal{O}(\Delta_l^2/N)$. However, even with this result, one could not obtain finite variance
and finite expected cost, as the rate of convergence of Monte Carlo estimators in $\mathbb{L}_2$ is $\mathcal{O}(N^{-1})$. That is, to ensure that the variance and expected cost are simultaneously finite, one needs any positive probability mass function $\mathbb{P}_{R,P}$ on $\mathbb{N}_0^2$ such that
\begin{align*}
&\sum_{(l,p)\in \mathbb{N}_0^2} \frac{\Delta_l^2}{N_p\mathbb{P}_{R,P}(l,p)}  < \infty, \\[0.2cm]
&\sum_{(l,p)\in \mathbb{N}_0^2} \Delta_l^{-1}N_p\mathbb{P}_{R,P}(l,p)  < \infty.
\end{align*}
There is no $\mathbb{P}_{R,P}$ where the above conditions are satisfied simultaneously. If both inequalities were true, a straightforward application of the Cauchy-Schwartz inequality would give $\sum_{(l,p)\in \mathbb{N}_0^2}\sqrt{\Delta_l}<\infty$, which cannot hold. One possible way to address this could be to increase the rate in $N$ by using Quasi-Monte Carlo estimators, which is something that we leave to future work. It may also be possible to sharpen the rate associated to the estimator as it is, in terms of $N$, using the ideas in \cite{blan}. However, the context of \cite{blan} is much simpler than that considered here and we expect that such a task to be rather arduous. 
\end{rem}

\Needspace{5\baselineskip}
\section{Numerical Results}\label{sec:numerics}

\subsection{MSE-to-Cost Rates}
We now seek to verify \autoref{theo:main_thm} by computing the MSE-to-cost rates. Given $\varphi(x) =x$, we estimate the expectation $\pi(\varphi)$ by first running \autoref{alg:basic_method} with both fixed and unfixed Gaussians to return the estimator in \eqref{eq:MC_est}, which we denote by $\pi^{M,MC}(\varphi)$, second a multilevel estimator defined by the following collapsing sum identity, with Gaussians $\{Z_i\}_{i=1}^N$ are fixed for both levels $l$ \& $l-1$, 
$$
\pi^{M,ML}(\varphi) := \frac{1}{M}\sum_{i=1}^M\sum_{l=L_*}^L  \frac{1}{N_l}  \left\{ \varphi(\widetilde{X}_1^{l,N_l}(i)) -  \varphi(\widetilde{X}_1^{l-1,N_l}(i))\right\},
$$
with the convention that $ \varphi(\widetilde{X}_1^{-1,N_l}(i)) =0$. Here $L_*\in\mathbb{N}_0$ is a starting level of discretization, $L$ is the target level and $N_l = \mathcal{O}((L+1)2^{2L-l})$ (see e.g. \cite{mlpf} for more details on the choice of $N_l$). Finally, we run the unbiased estimator which returns 
$$
\pi^{M,UB}(\varphi) :=\frac{1}{M}\sum_{i=1}^M{\widehat{\pi(\varphi)}(i)},
$$ 
where the sequence $\{\widehat{\pi(\varphi)}(i)\}_{i=1}^M$ is the output of  \autoref{alg:ub_sf_samp}.
The MSE is the average of 100 independent simulations of each algorithm given by
$$
MSE = \frac{1}{100} \sum_{j=1}^{100} \left[ \left[\pi^{M,\boldsymbol{\cdot}}(\varphi)\right]_j - \pi(\varphi) \right]^2,
$$
where $\pi(\varphi)$ is the reference expectation. 

\subsubsection{Simulation Settings}
\label{subsubsec:simuls}
In practice, one has to truncate the values of $P$ and $L$ in \autoref{alg:ub_sf_samp}. In all models below, we set $\mathbb{P}_R(L)=2^{-1.5L} \mathbb{I}_{\{L_*,\cdots,L_{\text{max}}\}}(L)$, where $L_*,L_{\text{max}}\in \mathbb{N}_0$ and $L_*<L_{\text{max}}$. Given $L$ sampled from $\mathbb{P}_R$, we sample $P$ from $\mathbb{P}_{P|L}(P|L)=g(P|L)~ \mathbb{I}_{\{P_*,\cdots,P_{\text{max}}\}}(P)$ and set $N=N_0 2^P$ for some $N_0 \in \mathbb{N}$, where
\begin{align*}
g(P|L) =  \left\{\begin{array}{lcl}
2^{4-P} & \text{if} & P \in \{P_*,\cdots,4 \wedge (L_{\text{max}} - L)\}\\
2^{-P}~ P~[\log_2(P)]^2 &\text{if} & P>4
\end{array} \right.
\end{align*}
This is the same choice as in \cite{diffusions}. We consider four different probability densities described below with $\varphi(x)=x$. For a given MSE $=\epsilon^2 >0$, the cost to compute $\pi^M(\varphi)$ using \autoref{alg:basic_method} is 
$$
\mathcal{C}_{\text{Single}} := NM2^L,
$$ 
where it is assumed that the cost to simulate the SDE in \eqref{eq:milstein} at a discretization level $L$ is $2^L$. As shown in \autoref{prop:simple_prop}, in order to have an MSE of order $\mathcal{O}(\epsilon^2)$, $N$, $M$ and $L$ must be chosen such that $N=\mathcal{O}(\epsilon^{-2})$, $M=\mathcal{O}(\epsilon^{-2})$ and $L=\mathcal{O}(|\log_2(\epsilon)|)$. The cost of the multilevel algorithm is 
$$
\mathcal{C}_{\text{ML}} := M\sum_{l=L_*}^L N_l ~2^l,
$$ 
where $N_l=\mathcal{O}((L-L_*+1)2^{2L-l})$ and $M$ \& $L$ as before. The expected cost of the proposed method in \autoref{alg:ub_sf_samp} is
\begin{align*}
\overline{\mathcal{C}_{\text{Ub}}} := \frac{1}{100}\sum_{j=1}^{100} \mathcal{C}_j, \qquad \mathcal{C}_j = \sum_{i=1}^M \text{cost}^i,
\end{align*}
where
\begin{align*}
\text{cost}^i = \left\{\begin{array}{lcr}
N_0 ~2^{L_*} & \text{if} & L^{i}=L_*, ~ P^{i} = P_*\\
N_0 ~(2^{L^i} + 2^{L^i-1})  & \text{if} & L^i>L_*, ~ P^i = P_*\\
(N_{P^i} + N_{P^i-1} ) ~2^{L_*} &\text{if} & L^i=L_*, ~ P^i > P_*\\
(N_{P^i} + N_{P^i-1} ) ~(2^{L^i} + 2^{L^i-1}) & \text{if} & L^i>L_*, ~ P^i > P_*\\
\end{array} \right.,
\end{align*}
with $L^i\sim \mathbb{P}_R$, and $N_{P^i}=N_0 ~2^{P^i}$, with $P^i\sim \mathbb{P}_{P|L}$. The values of $P_*$, $L_*$, $P_{\text{max}}$, $L_{\text{max}}$ and $N_0$ for the models below are described in \autoref{tab:params}.

\begin{table}[h!]
\centering
\begin{tabular}{|l|l|l|l|l|l|}
\hline
Model & $P_*$  & $P_{\text{max}}$ & $L_*$ & $L_{\text{max}}$ & $N_0$ \\ \hline \hline
One-dimensional Gaussian    &  2     &  15     &     1     &    8          &   10    \\ \hline
Two-dimensional Gaussian mixture      &   1  &  9     &    2    &    8     &   10       \\ \hline
Bayesian logistic regression      &      1     &  9     &     3     &    9    &   10      \\ \hline
Double-well potential  & 1 & 9 & 6 & 12 & 8 \\ \hline
\end{tabular}
\caption{The choices of the different parameters in Algorithm \ref{alg:ub_sf_samp} for each model.}
\label{tab:params}
\end{table}

\subsubsection{Models}

\label{subsubsec: models}

\begin{enumerate}[(a)]
\item \textbf{One-Dimensional Gaussian Distribution:}
\\
In the first example, we take $\pi(x)=\phi(x;1,2)$, a one-dimensional normal Gaussian density with mean 1 (which is the reference expectation) and variance 2. Clearly the true reference is $\pi(\varphi) = 1$.

\item \textbf{Two-Dimensional Gaussian Mixture Distribution:}
\\
We consider a two-dimensional Gaussian mixture distribution with density
given by
$$
\pi(x) =\frac{1}{16} \sum_{i=1}^{16} \phi(x;\mu_i, 0.03~I_2),
$$
where $I_2$ is the $2\times 2$ identity matrix and
$\mu= (\mu_1,\cdots,\mu_{16}) = \{-1,-0.5,0.5,1\} \times \{-1,-0.5,0.5,1\}$ where $\mu_i\in \mathbb{R}^2$. The reference expectation in this example is $\pi(\varphi)=\frac{1}{16}\sum_{i=1}^{16} \mu_i$.

\item \textbf{Bayesian Logistic Regression:}
\\
Next we consider the binary logistic regression in which the binary observations $\{Y_i\}_{i=1}^n$
are conditionally independent Bernoulli random variables such that $Y_i \in \{0,1\}$ and
$$
\mathbb{P}(Y_i=1|X_i = x_i, \beta) = \rho(\beta^T x_i),
$$
where $\rho: \mathbb{R}\to (0,1)$ defined by $\rho(w) = e^w/(1+e^w)$ is the logistic function and $X_i$ and $\beta$ in $\mathbb{R}^d$ are the covariates and the unknown regression coefficients, respectively. The prior density for the parameter $\beta$ is a multivariate normal Gaussian given by
$$
pr(\beta) = \phi(\beta;0,\Sigma_\beta)
$$
where $\Sigma_\beta$ is defined through its inverse $\Sigma_\beta^{-1} = \frac{1}{n} (\sum_{i=1}^n X_i X_i^T)$. The covarites vectors $\{X_i\}_{i=1}^n$ are sampled independently from $\mathcal{U}\{-1,1\}^d$ which are then standardized. The density of the posterior distribution of $\beta$ is given by
$$
\pi\left(\beta|\{X_i=x_i,Y_i=y_i\}_{i=1}^n \right) \propto \exp\left( \sum_{i=1}^n \left[y_i \beta^T x_i-\log(1+\exp(\beta^Tx_i)) \right] -\frac{1}{2} \beta^T \Sigma_\beta^{-1} \beta \right). 
$$
We set $d=5$, $n=100$ and sample the binary observations $\{y_i\}_{i=1}^n  \overset{i.i.d.}{\sim} {\textstyle \mathrm {Bernoulli} (\rho(\beta_*^T x_i))}$, where $\beta_* = \bm{1}_d$, a vector of ones. We take the reference expectation to be the mean of 100 simulations of a random-walk Metropolis-Hastings MCMC with $10^7$ samples and a burn-in of $10^3$.

\item \textbf{Double-Well Potential:\\
Finally, we consider sampling from $\pi(x) = \exp{(-U(x))}$, where $U$ is the double-well potential given by
$$
U(x) = \frac{1}{4} \|x\|_2^4 - \frac{1}{2}\|x\|_2^2, \qquad x\in \mathbb{R}^d.
$$
The double-well potential is one of several quartic potentials of substantial importance in quantum mechanics and quantum field theory \cite{double} for the investigation of different physical phenomena or mathematical features. 
\\
In this example we test the algorithms in a high dimensional setting where we set $d=30$. The true reference expectation is $\pi(\varphi) = 0$.
}
\end{enumerate}

\subsubsection{Results}
In \autoref{fig:cost_mse}, we plot the MSE against the cost obtained by running the original SFS method presented in \autoref{alg:basic_method}, where the Gaussians $Z_k^j$, $j\in \{1,\cdots,N\}$, in step 1b.i., either fixed for all $k\in \{0,1,\cdots,\Delta_l^{-1}-1\}$ or sampled for each $k$, the multilevel method with fixed Gaussians and finally the unbiased method presented in \autoref{alg:ub_sf_samp}. From the plots, we observe that to obtain an MSE of order $\mathcal{O}(\epsilon^2)$, for some $\epsilon>0$, the cost of \autoref{alg:ub_sf_samp} is of order $\mathcal{O}(\epsilon^{-2})$ in all models. The cost order is much higher in the case of \autoref{alg:basic_method} with both fixed and unfixed Gaussians. In both situations the cost is of order $\mathcal{O}(\epsilon^{-(5+\alpha)})$, $\alpha>0$, for models (a), (b) \& (d), and $\mathcal{O}(\epsilon^{-4})$ for model (c). The cost order for the multilevel algorithm with fixed Gaussians is almost the same for models (a) \& (b) $\mathcal{O}(\epsilon^{-3.7})$, but notably smaller for model (c) where it is of order $\mathcal{O}(\epsilon^{-3.13})$ and higher for model (d) where it is of order $\mathcal{O}(\epsilon^{-4})$.  In all examples we note that the unbiased method of \autoref{alg:ub_sf_samp} is more efficient for lower MSE followed by the multilevel implementation. It is worth noting that we are only comparing the theoretical costs obtained from the formulae in \autoref{subsubsec:simuls}, not the machine computational time. The algorithm presented here is embarrassingly parallelizable over $M$, making it much faster on multi-core workstations.
\begin{figure}[h!]
\centering
\subcaptionbox{One-dimensional Gaussian density}{
\includegraphics[width =0.495\textwidth]{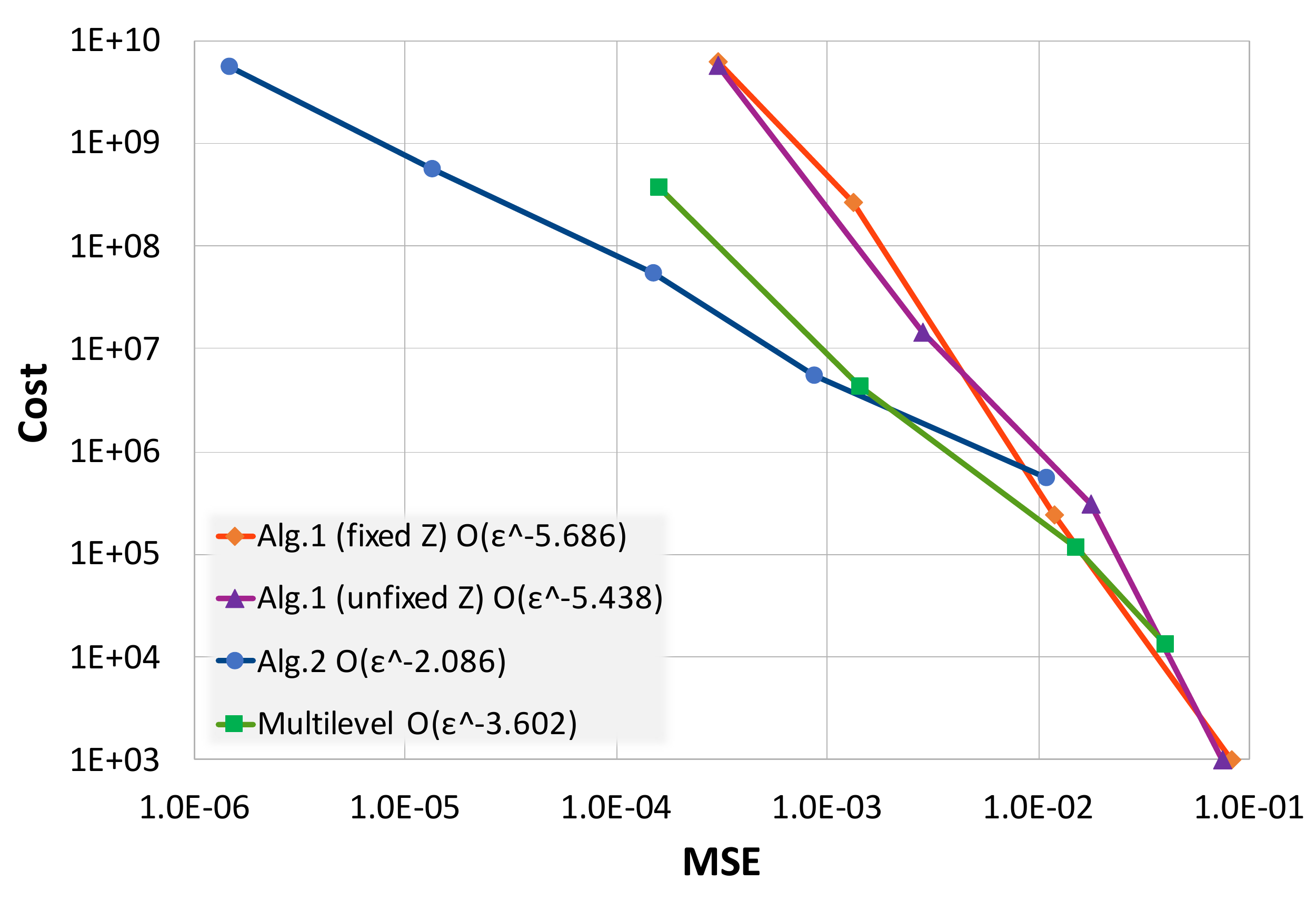}
}
\hspace{-18.5pt}
\subcaptionbox{Two-dimensional mixture of Gaussian densities}{
\includegraphics[width =0.495\textwidth]{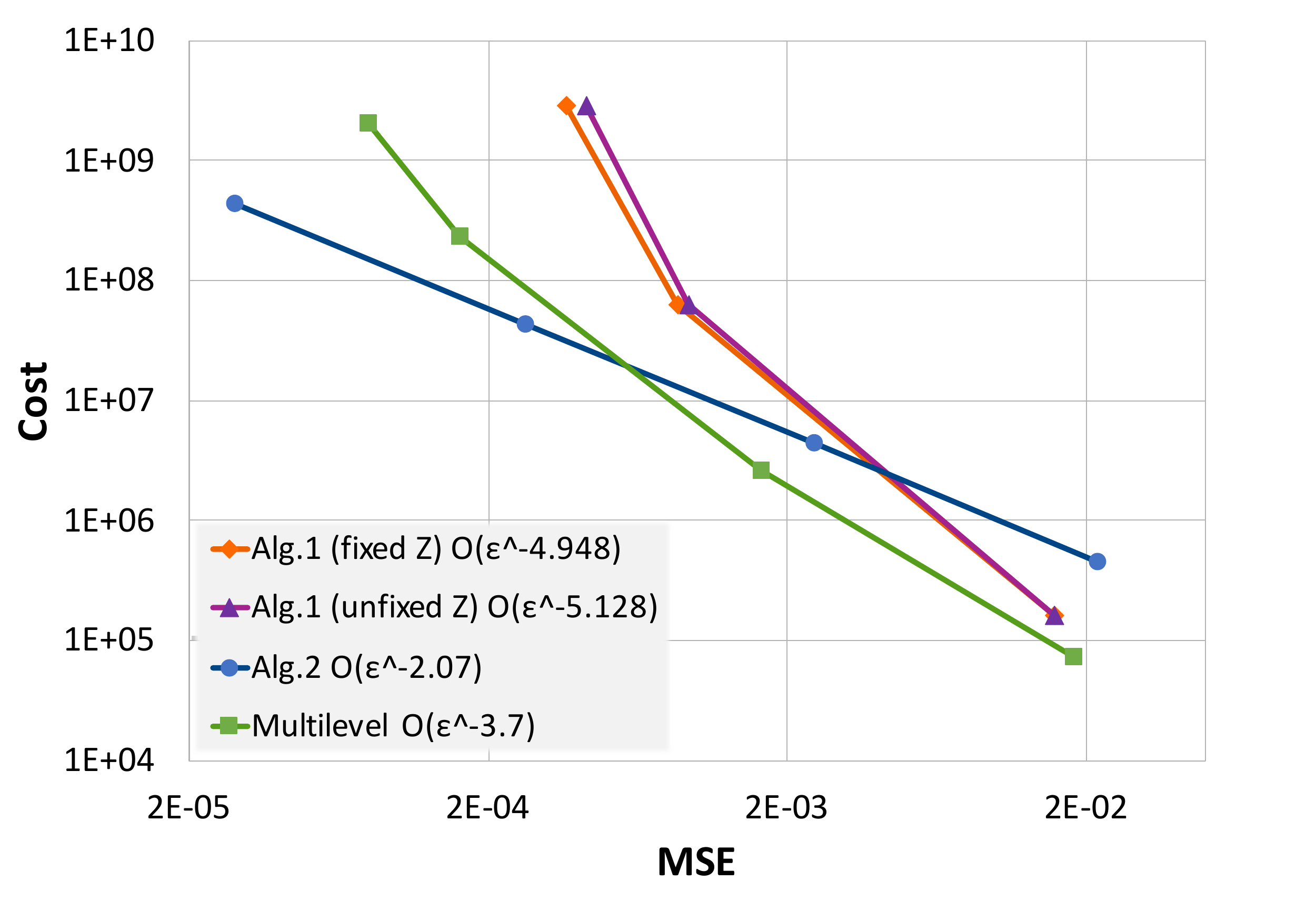}
}\\
\subcaptionbox{Bayesian logistic regression}{
\includegraphics[width =0.48\textwidth]{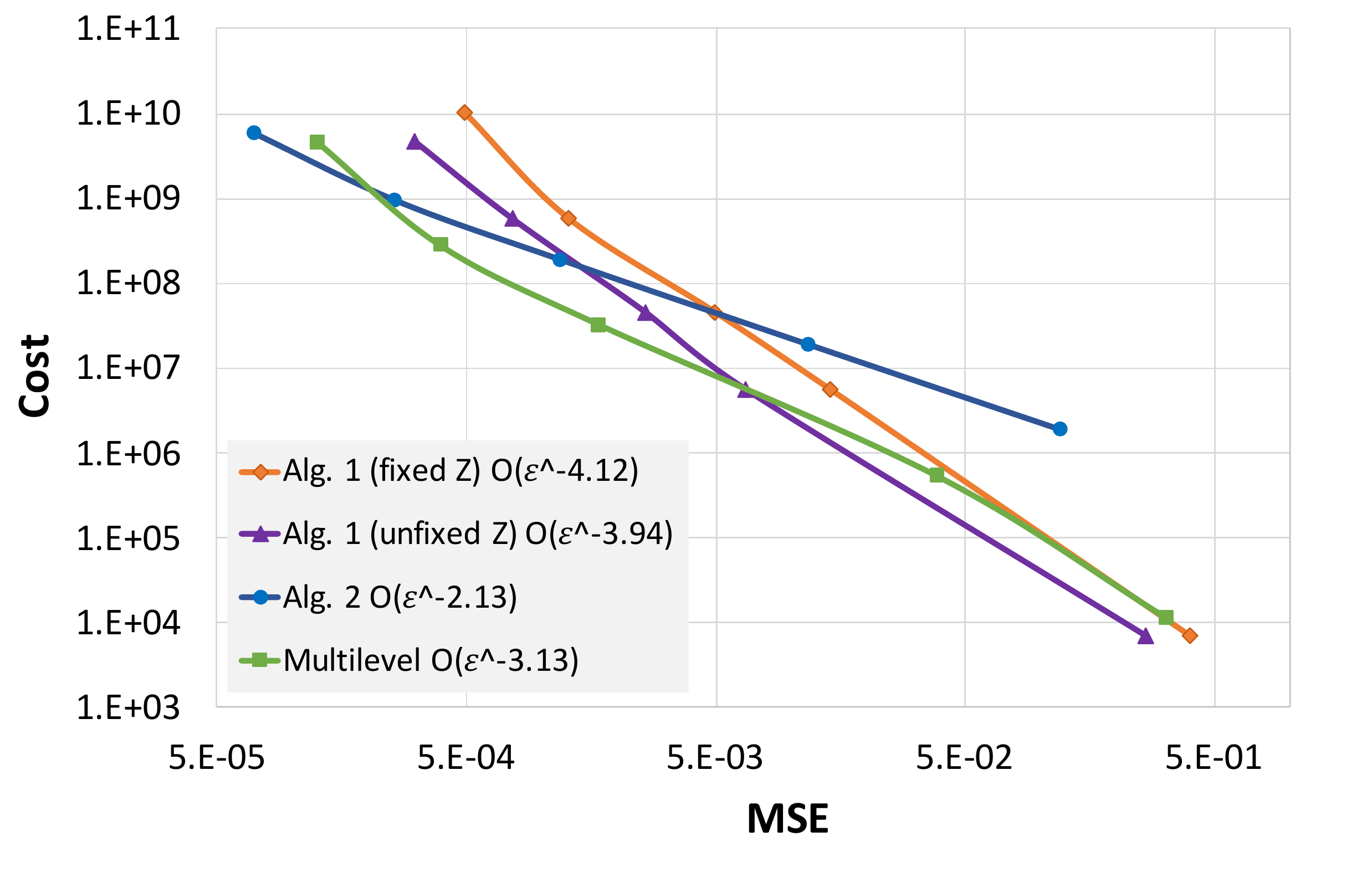}
}
\subcaptionbox{Double-well potential}{
\includegraphics[width =0.48\textwidth]{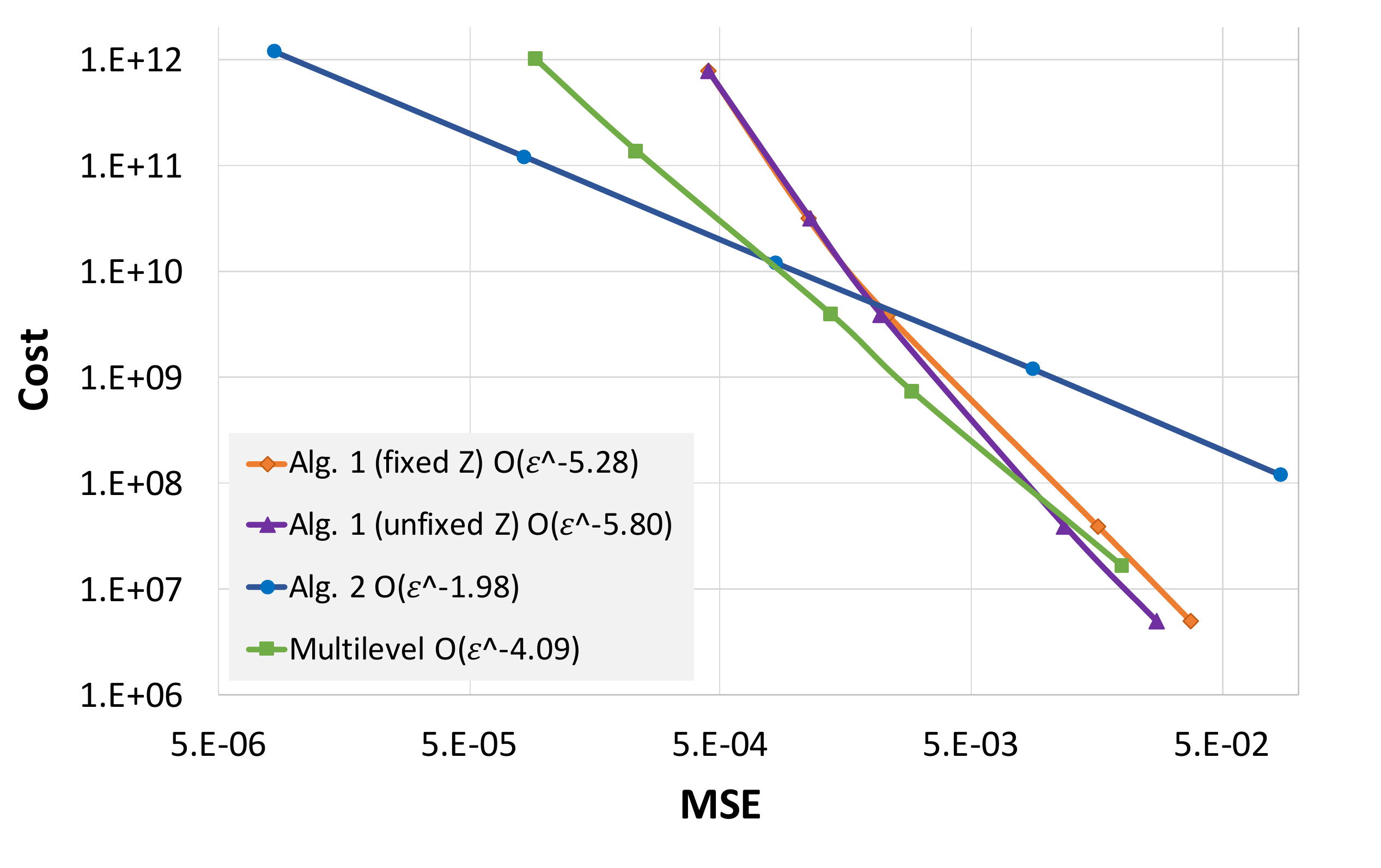}
}
\caption{MSE versus cost (as computed in \autoref{subsubsec:simuls}) of running \autoref{alg:basic_method} with both fixed (orange line) and unfixed (purple line) Gaussians, a multilevel algorithm (green line) and the unbiased estimation \autoref{alg:ub_sf_samp} (blue line).}
\label{fig:cost_mse}
\end{figure}

\subsection{Bayesian Elliptic Inverse Problem}
\label{sec:inv_prob}
The objective of this section is to compare between the original algorithm and the proposed here in the context of Bayesian inverse problems. We consider estimates of expectations w.r.t. the posterior measure on some unknown field of interest using a Bayesian statistics approach of inverse problems that arise from the confluence of partial differential equations and observational data. However, due to floating-point precision limitations and high variance associated with computing $f$, the unbiased method as presented in \autoref{alg:ub_sf_samp} will not work well in general.
Typically, in some cases, especially for high-dimensional distributions with densities that can be expressed as $\pi(x) \propto \exp{(-U(x))}$, the variable $f$, which can be written as $f(x)\propto \exp{(-U(x) + \frac{1}{2}\|x\|^2)}$, will register zero values in fixed point arithmetic, resulting in an infinite drift. Even with the common log-sum-exp trick, the values of $f(x+\sqrt{1-t}Z)$, $Z\sim \phi(z)$, may not be within the representable range of the computing machine. As a result, we advise using an alternative approach provided in Algorithms \ref{alg:new_idea_coupling}--\ref{alg:new_idea_ub}. We should highlight however that this algorithm will not be mathematically investigated in this study; instead, this will be the subject of future research.

\subsubsection{Model}
Consider a Bayesian inverse problem involving inference of the log-permeability coefficient of a 2D elliptic PDE in a bounded and open region $\Omega \subset \mathbb{R}^2$ with convex boundary $\partial\Omega \in C^0$, given noisy measurements of (some components of, or functions of) the associated solution field. In particular, we consider the elliptic PDE
\begin{align}
-\nabla \cdot (K(u) \nabla p) &= F \quad \text{on }\Omega\nonumber\\
 p &= 0 \quad \text{on }\partial\Omega.
 \label{eq:PDE}
\end{align}
where $p$ is the forward state, e.g. the pressure field in a porous media flow, $K(u)$ is a scalar function that represents the permeability filed e.g. of a subsurface rock, and $F$ represents the external force field. This represents a simplified model in groundwater flow. It is important to assume that $K(u)$ is positive in order to have a well-posed problem \cite{stuart}, and hence we write $K(u) = \exp(u)$ and consider the problem of determining $u$ given a set of noisy observations of the pressure in the interior of $\Omega$. Let $(X, \|\cdot \|_X)$ and $(Y,\|\cdot \|_Y)$ be Banach spaces. Then the inverse problem is to determine $u \in X$, given the data
\begin{align}
\label{eq:IP_data}
y = \mathcal{G}(u) + \eta \quad \in Y,
\end{align}
where $ \mathcal{G}: X\to Y$ is the observation operator and $\eta \in \mathcal{N}(0,\Gamma)$ for some trace class, positive, self-adjoint operator $\Gamma$ on $Y$. The Bayesian approach to this inverse problem is to find a posterior probability measure $\mu^y$ on $X$ such that Bayes' rule holds
$$
\frac{d \mu^y}{d \mu_0}(u) \propto l(u;y),
$$
where $l(u;y)$ is the measurement likelihood and $\mu_0$ is the prior, which here is assumed to be Gaussian of the form $\mu_0 = \mathcal{N}(0,\mathcal{C})$, with $\mathcal{C}$ a trace class, positive, self-adjoint operator on $X$. The likelihood can be obtained from \eqref{eq:IP_data} as
$$
l(u;y) = \exp\left(- \frac{1}{2} \|\Gamma^{-1/2}(y-\mathcal{G}(u)) \|^2_Y \right).
$$
Note that if $\mathcal{G}$ is linear and the prior is Gaussian, then the posterior $\mu^y$ will be Gaussian as well, which is the case here as the differential operator associated with the above PDE is linear. 

\begin{algorithm}[!h]
\begin{enumerate}
\item Input: $N\in\mathbb{N}$ the number of samples used to approximate $b$ and a level $l\in\mathbb{N}$ of discretization.
\item For $k=0,1,\cdots,\Delta_l^{-1}-1$, generate the increments of Brownian motion, $W_{(k+1)\Delta_l}^l - W_{k\Delta_l}^l$, used for the Euler approximation at level $l$. Concatenate the increments to generate $W_{(k+1)\Delta_{l-1}}^{l-1} - W_{k\Delta_{l-1}}^{l-1}$ for $k=0, 1,\cdots, \Delta_{l-1}^{-1}-1$. Set $X_0^l=X_0^{l-1}=0$.
\item For $k\in\{0,1,\dots,\Delta_{l-1}^{-1}-1\}$ perform the following:
\begin{itemize}
\item Generate $N$ samples from $\pi_{X_{k\Delta_{l-1}}^{l-1},k\Delta_{l-1}}$ and compute
$$
\hat{b}(X_{k\Delta_{l-1}}^{l-1},k\Delta_{l-1}) = \frac{1}{\sqrt{1-k\Delta_{l-1}}}\frac{1}{N}\sum_{i=1}^N Z^i.
$$

\item Compute:
$$
X_{(k+1)\Delta_{l-1}}^{l-1} = X_{k\Delta_{l-1}}^{l-1}  + \hat{b}(X_{k\Delta_{l-1}}^{l-1} ,k\Delta_{l-1})\Delta_{l-1}+ W_{(k+1)\Delta_{l-1}}^{l-1} - W_{(k+1)\Delta_{l-1}}^{l-1}.
$$

\item Then perform the following for $m=0,1$:
\begin{itemize}
\item Compute the weights and normalize
\begin{align}
w_{[2(k-1)+m]\Delta_l}^i \propto \frac{f \left( X_{[2(k-1)+m]\Delta_l}^l+\sqrt{1-[2(k-1)+m]\Delta_l}Z^i \right)}{
f \left( X_{(k+1)\Delta_{l-1}}^{l-1}+\sqrt{1-[2(k-1)+m]\Delta_l}Z^i \right) }
\label{eq:new_idea_weights}
\end{align}
then compute
\begin{align}
\hat{b}\left(X_{[2(k-1)+m]\Delta_l}^l,[2(k-1)+m]\Delta_l \right) = \sum_{i=1}^N w_{[2(k-1)+m]\Delta_l}^i ~ Z^i. 
\label{eq:new_idea_fine_level}
\end{align}

\item Compute:
\begin{align*}
X_{[2(k-1)+m+1]\Delta_l}^l = & X_{[2(k-1)+m]\Delta_l}^l+ \hat{b}\left(X_{[2(k-1)+m]\Delta_l}^l,[2(k-1)+m]\Delta_l \right)\\
&\hspace{2cm} + W_{[2(k-1)+m+1]\Delta_l}^l - W_{[2(k-1)+m]\Delta_l}^l.
\end{align*}
where the increments of the Brownian motion are concatenated from 2..
\end{itemize}
\end{itemize}
\item Return $X_1^l[N]$ and $X_1^{l-1}[N]$.
\end{enumerate}
\caption{Alternative Coupling for SFS}
\label{alg:new_idea_coupling}
\end{algorithm}

\begin{algorithm}[!h]
Input: number of replicates, $M\in\mathbb{N}$; sequence $(N_p)_{p\in\mathbb{N}_0}$ and two positive probability mass functions, $\mathbb{P}_R$ and $\mathbb{P}_P$, on $\mathbb{N}_0$, $\mathbb{P}_R$ and $\mathbb{P}_P$.
\begin{enumerate}
\item[1.] Repeat for $i\in \{1,2,\ldots, M\}$:
\begin{itemize}
\item[a.] Sample $L^i\sim\mathbb{P}_R$ and $P^i\sim \mathbb{P}_P$.
\item[b.] If $L^i = 0$, generate $X_1^{L^i}[N_{P^i}]$ and $X_1^{L^i}[N_{P^i-1}]$ from recursion \eqref{eq:milstein} using the same Wiener increments with $l=0$ and $\hat{b}$ as in \eqref{eq:new_idea_drift1}.
\item[c.] Otherwise: 
\begin{itemize}
\item[i.] Run \autoref{alg:new_idea_coupling} to return $X_1^{L^i}[N_{P^i}]$ and $X_1^{L^i-1}[N_{P^i}]$.
\item[ii.] Run \autoref{alg:new_idea_coupling} to return $X_1^{L^i}[N_{P^i-1}]$ and $X_1^{L^i-1}[N_{P^i-1}]$ using the same Wiener increments as in (i).

\end{itemize}
\item[c.] Set:
\begin{align*}
\widehat{\pi(\varphi)}(i) & =  \frac{\big(\varphi(X_1^{L^i}[N_{P^i}])-\varphi(X_1^{L^i-1}[N_{P^i}])\big) - 
\big(\varphi(X_1^{L^i}[N_{P^i-1}])-\varphi(X_1^{L^i-1}[N_{P^i-1}])\big)}{\mathbb{P}_R(L^i)
\mathbb{P}_P(P^i)}.
\end{align*}
Apply the conventions: \\
If $L^i=0$ then set $\varphi(X_1^{L^i-1}[N_{P^i}]) = \varphi(X_1^{L^i-1}[N_{P^i-1}]) = 0$. \\
If $P^i=0$ then set $\varphi(X_1^{L^i}[N_{P^i-1}]) = \varphi(X_1^{L^i-1}[N_{P^i-1}]) = 0$.
\end{itemize}
\item[2.] Return $\widehat{\pi(\varphi)}(i)$, $i \in \{1,2,\ldots, M\}$.
\end{enumerate}
\caption{Alternative Unbiased Estimator of $\pi(\varphi)$.}
\label{alg:new_idea_ub}
\end{algorithm}

Clearly, for computer implementation, one needs to discretize the prior, the likelihood, and hence the posterior. A standard finite element method (FEM) with linear triangular elements is employed to solve the forward problem in \eqref{eq:PDE}. The induced mesh consists of right triangles in the domain $\Omega = [0,1] \times [0,1]$. We denote by $\widehat{N}$ the set of nodes (vertices) in the mesh and $\hat{n}$ the number of nodes.

We assume the following additive noise-corrupted pointwise observation model
$$
y_j = p(x_j) + \eta_j, \qquad j = 1,\cdots,J,
$$
where $J$ is the total number of observation locations, $\{x_j\}_{j=1}^J$ the set of nodes of at which the pressure $p$ is observed, $\eta = (\eta_1,\cdots,\eta_J)$ a Gaussian noise distributed according to $\mathcal{N}(0,\Gamma)$, and $y_j$ the actual noise-corrupted observation at node 
$x_j \in \widehat{N}$. For simplicity $\Gamma$ is assumed to equal $\sigma_y^2\, I_J$. A finite-dimensional approximation of the prior is given by $\widehat{\mu}_0=\mathcal{N}(0,C)$, with the entries of the covariance matrix are given by
$$
C_{ij}(x_i,x_j) = \sigma \exp{(-\|x_i-x_j\|_2/\alpha)},
$$
where $x_i,x_j \in \widehat{N}$, $i,j = 1, \cdots,\hat{n}$, $\|\cdot\|_2$ is the Euclidean distance and $\sigma,\alpha >0$ are hyperparameters.

\subsubsection{Alternative Approach to Algorithm \ref{alg:ub_sf_samp}}
\captionsetup[sub]{font=small}
\begin{figure}[h!]
\centering
\subcaptionbox{The reference log-permeability field in 3D}{
\includegraphics[width =0.31\textwidth]{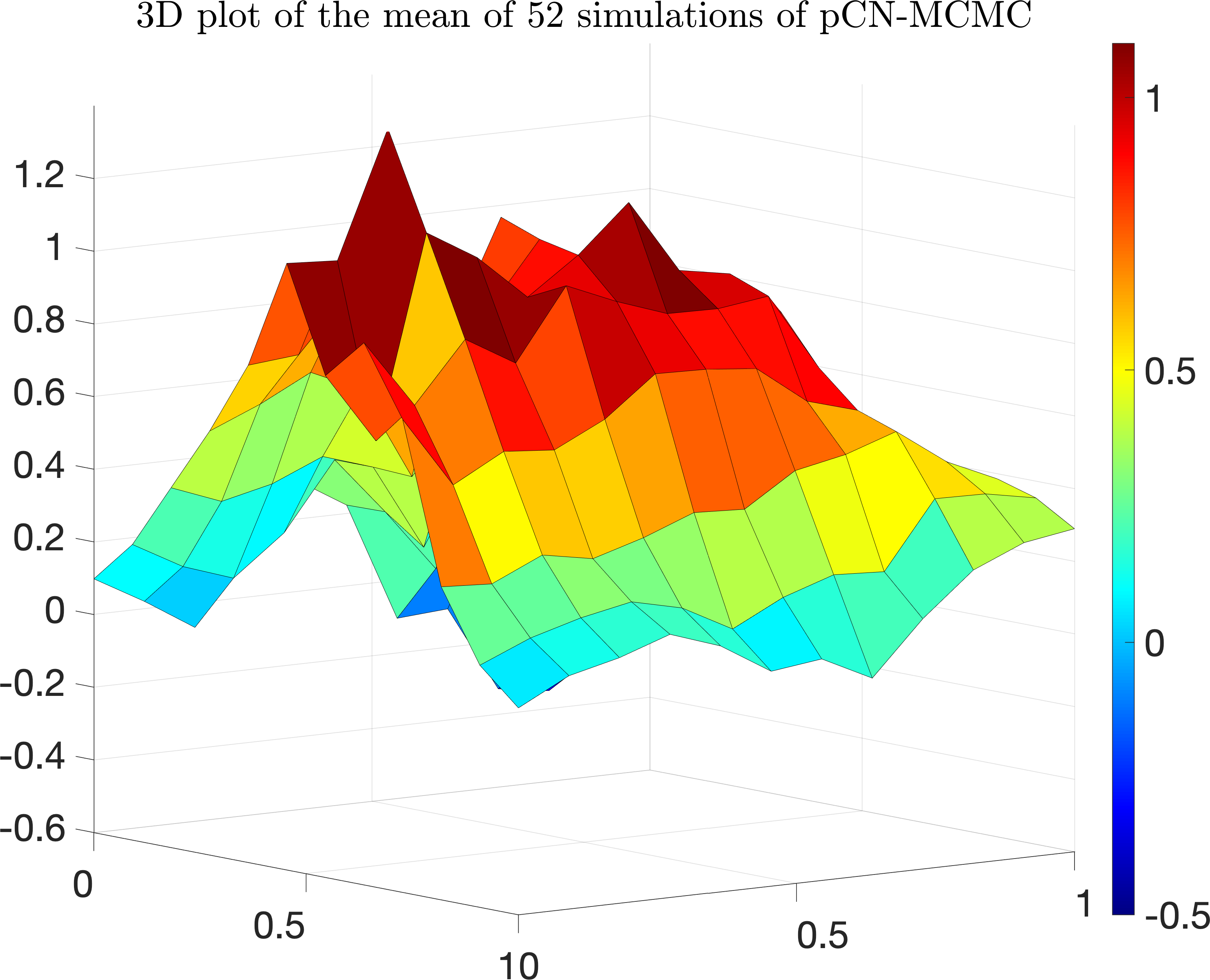}
}
\subcaptionbox{Single-level SFS results in 3D}{
\includegraphics[width =0.31\textwidth]{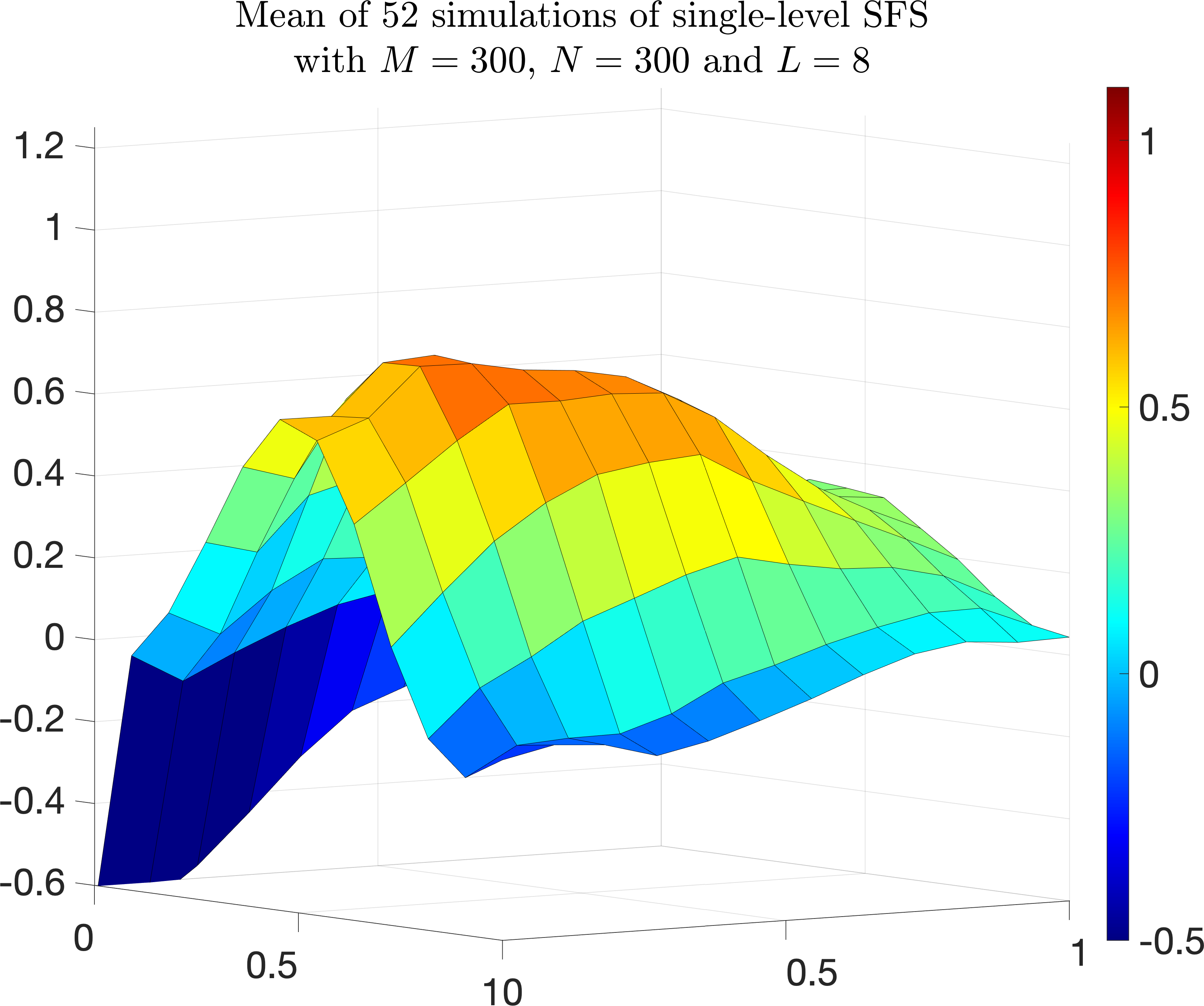}
}
\subcaptionbox{Unbiased SFS results in 3D}{
\includegraphics[width =0.31\textwidth]{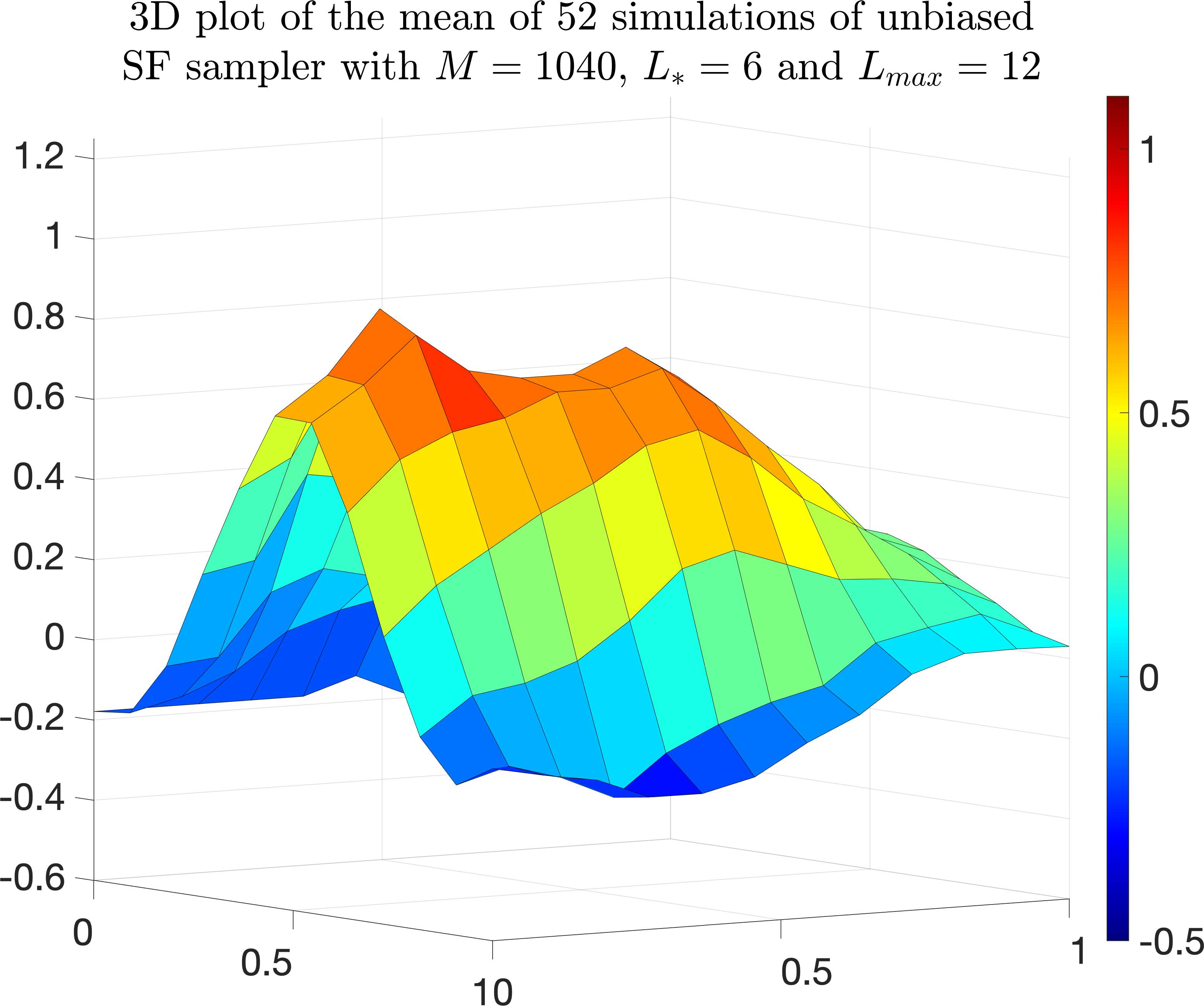}
}

\subcaptionbox{The reference log-permeability field in 2D}{
\includegraphics[width =0.31\textwidth]{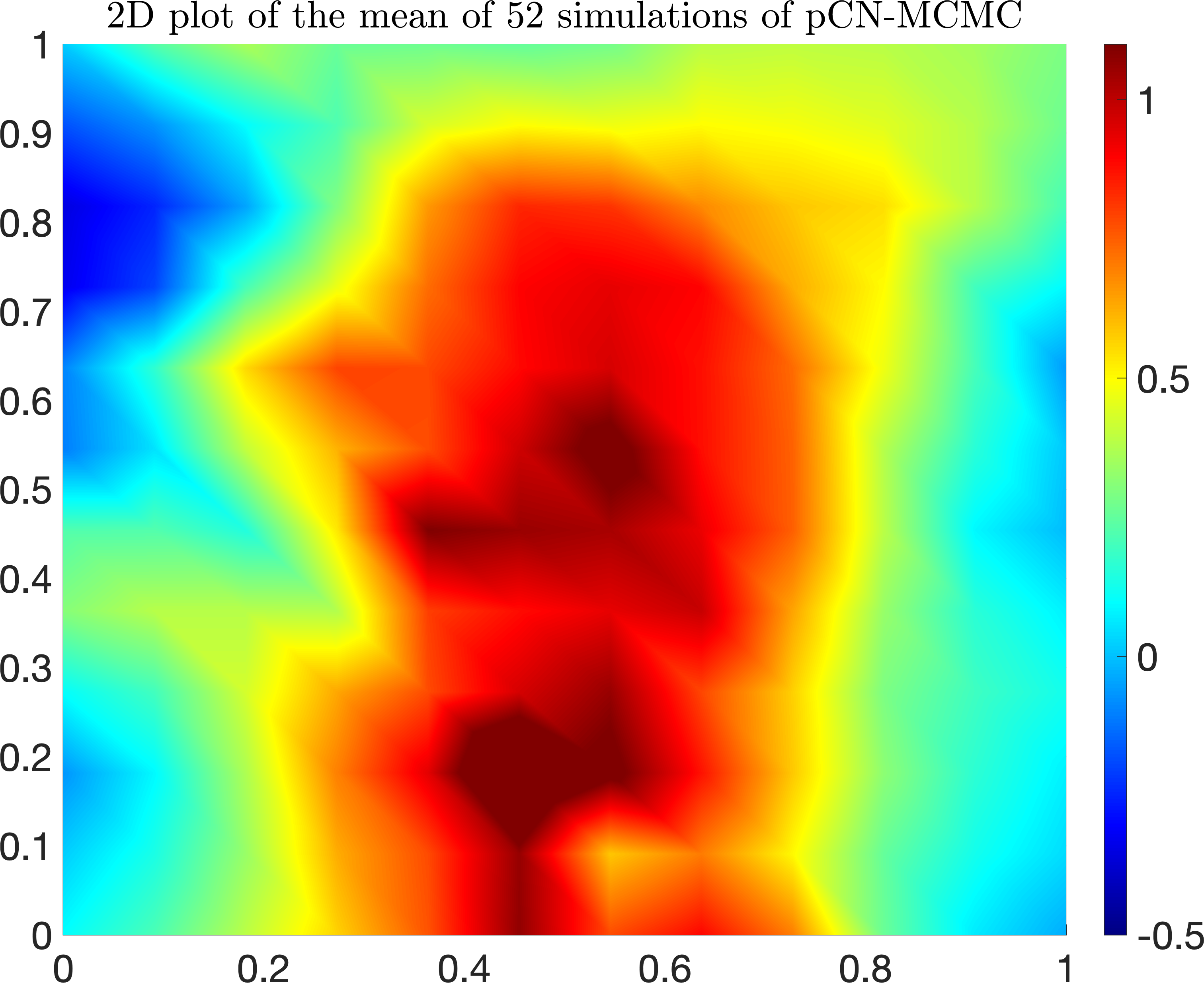}
}
\subcaptionbox{Single-level SFS results in 2D}{
\includegraphics[width =0.31\textwidth]{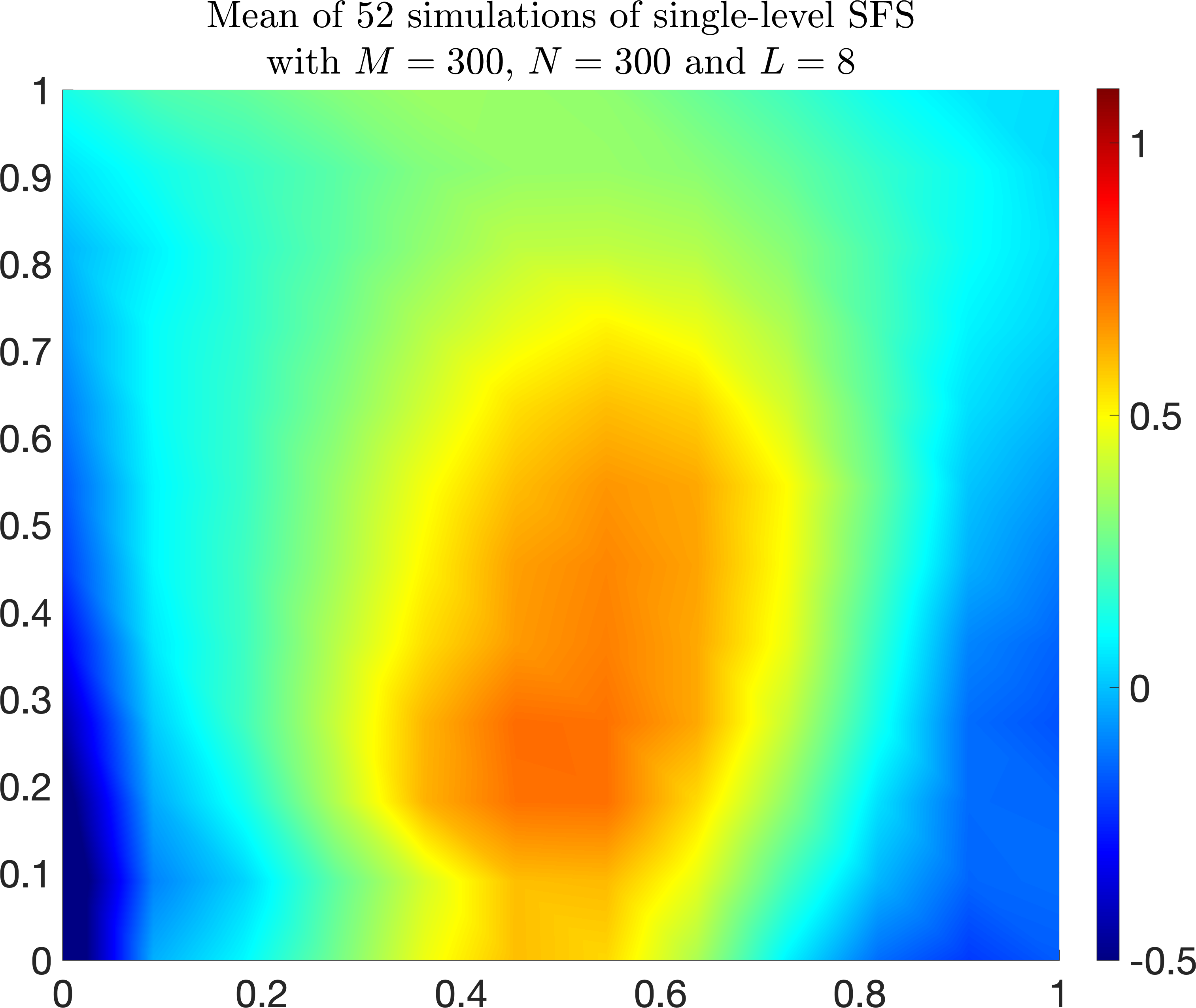}
}
\subcaptionbox{Unbiased SFS results in 2D}{
\includegraphics[width =0.31\textwidth]{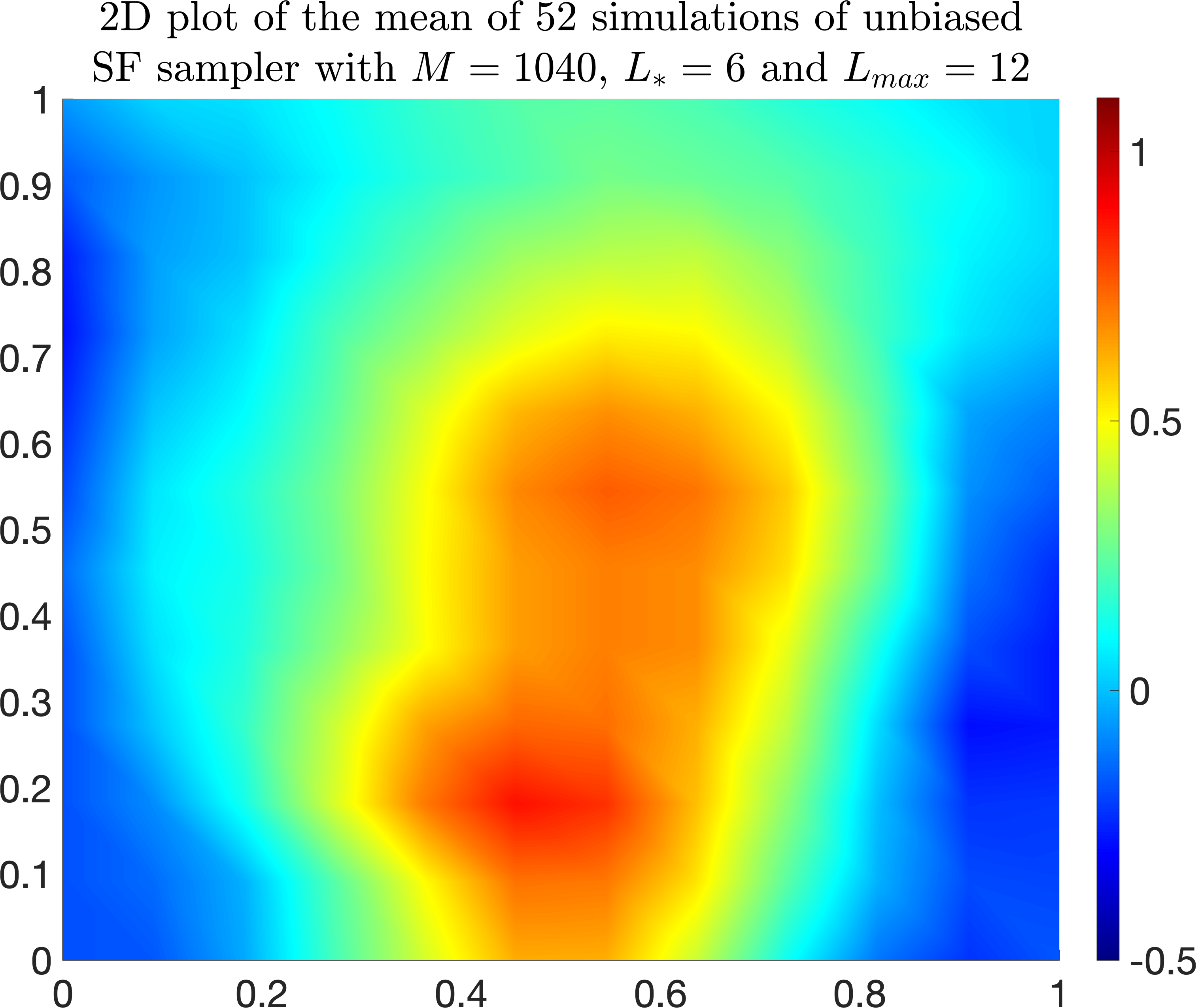}
}
\caption{Results of running the single-level SFS in \autoref{alg:basic_method} with $\hat{b}$ computed as in \eqref{eq:new_idea_drift1} and as in \autoref{alg:new_idea_ub}. It took around 33.5 hours to run \autoref{alg:new_idea_ub}, and for the same accuracy (RMSE for both is $\sim 3.9$), \autoref{alg:basic_method} took around 168 hours.}
\label{fig:inv_prob}
\end{figure}

Define a probability density for any fixed $(x,t)\in\mathbb{R}^d\times [0,1]$ as
\begin{align}
\label{eq:new_dinsity}
\pi_{x,t}(z) = \frac{f(x+\sqrt{1-t}z)\phi(z)}{\int_{\mathbb{R}^d}f(x+\sqrt{1-t}z)\phi(z)dz}.
\end{align}
Then we have that
$$
b(x,t) = \frac{1}{\sqrt{1-t}}\mathbb{E}_{\pi_{x,t}}[Z].
$$
Consider the discretized SDE in \eqref{eq:milstein} with $\hat{b}$ replaced by
\begin{align}
\label{eq:new_idea_drift1}
\hat{b}(\widetilde{X}_{k\Delta_l}^l,k\Delta_l) = \frac{1}{\sqrt{1-k\Delta_l}}\frac{1}{N}\sum_{i=1}^N Z^i,
\end{align}
where $\{Z_i\}_{i=1}^N$ are samples generated from $\pi_{\widetilde{X}_{k\Delta_l}^l,k\Delta_l}$ (by using any sampling method, e.g.  MCMC). Moreover, for any $(x,\tilde{x},t)\in\mathbb{R}^d\times\mathbb{R}^d\times[0,1)$ we can clearly write
\begin{align}
b(x,t) = \frac{1}{\sqrt{1-t}}\frac{\mathbb{E}_{\pi_{\tilde{x},t}}\Big[Z\frac{f(x+\sqrt{1-t}Z)}{f(\tilde{x}+\sqrt{1-t}Z)}\Big]}{\mathbb{E}_{\pi_{\tilde{x},t}}\Big[\frac{f(x+\sqrt{1-t}Z)}{f(\tilde{x}+\sqrt{1-t}Z)}\Big]}.
\label{eq:new_idea_IS}
\end{align}
This enables us to produce a coupled pair of (approximate) samples from a pair of Euler discretizations of \eqref{eq:main_diff} as explained in \autoref{alg:new_idea_coupling}. With the new identity in \eqref{eq:new_idea_IS}, we hope that computing the ratio $f(x+\sqrt{1-t}Z)/f(\tilde{x}+\sqrt{1-t}Z)$, $Z\sim \pi_{\tilde{x},t}$, on the log-scale will overcome the fixed point arithmetic issue discussed in the introduction of \autoref{sec:inv_prob}. Basically, we estimate the drift in \eqref{eq:milstein} using the identity in \eqref{eq:new_idea_drift1} at level zero and at the coarser level in the coupling step while using (\ref{eq:new_idea_weights}--\ref{eq:new_idea_fine_level}) at the finer level.

\subsubsection{Simulation Results}
We set the number of nodes in the mesh to $\hat{n} = 12^2$ and consider observations generated from the solution to the PDE on a finer mesh at 81 random nodes with $\sigma_y$ chosen such that a prescribed signal-to-noise ratio, defined as $\max\{p\}/\sigma_y$, is equal to 100. The true log-premeability field $u_{\text{truth}}$ used to generate the data is defined by the superposition of two Gaussians with covariances $0.05I_2$ and $0.07 I_2$, centered at $(0.5, 0.5)$ and $(0.5,0.95)$, with weights $\{0.6,0.2\}$ respectively.
The external force field $F$ is defined by a superposition of four weighted two-dimensional Gaussian bumps with covariance $0.03 I_2$, centered at (0.3, 0.5), (0.4, 0.3), (0.6, 0.9), and (0.7, 0.1), and with weights $\{2, -3, 1, -4\}$, respectively. In the covariance matrix of the prior, we take $\sigma = 1.5$ and $\alpha=0.9$. We ran 52 independent simulations of \autoref{alg:basic_method} with $\hat{b}$ computed as in \eqref{eq:new_idea_drift1} with $L = 8$, $N=300$ and $M=300$. We also ran 52 independent simulations of \autoref{alg:new_idea_ub} with $L_*=6$, $L_{\text{max}}=12$ and $M=1040$. The probability mass functions $\mathbb{P}_R(L)$ and $\mathbb{P}_{P|L}(P|L)$ are the same as in \autoref{subsubsec:simuls}. These parameters are chosen such that both algorithms give almost the same MSE. We remark that the reference log-permeability field was computed by taking the mean of 52 simulations of a preconditioned Crank-Nicolson (pCN) MCMC \cite{pCN} with $10^7$ samples a burn-in period of $10^3$. We also used the pCN-MCMC to sample from the density $\pi_{x,t}$ defined in \eqref{eq:new_dinsity}. Both algorithms were run on a workstation with 52 cores. Whilst aiming for similar precision, the computing cost of \autoref{alg:basic_method} was about a week, whereas for Algorithms \ref{alg:new_idea_coupling}-\ref{alg:new_idea_ub} the cost was a day and a half. The results are shown in \autoref{fig:inv_prob} and in this example we managed to obtain with Algorithms \ref{alg:new_idea_coupling}-\ref{alg:new_idea_ub} more accurate estimates at a fraction of the computational cost.

\subsubsection*{Acknowledgements}

AJ \& HR were supported by KAUST baseline funding.

\appendix

\section{Theoretical Results for the Proof of \autoref{theo:main_thm}}\label{app:theory}

The following Section contains a collection of Lemmata used to prove \autoref{theo:main_thm}. Recall that $C$ is a generic finite constant with value that could change upon each appearance, but will not depend upon $(l,N)$ and as $t$ is bounded by 1, not on $t$ either.  Recall, we use the notation that for a differentiable function $\psi:\mathbb{R}^d\rightarrow\mathbb{R}$, $\nabla_k \psi(x)=(\partial \psi/\partial x_k)(x)$, $k\in\{1,\dots,d\}$.

\begin{lem}\label{lem:stoch_lip}
Assume (A\ref{ass:2}). Then, there exists $C<\infty$ so that for any $(x,y,N,s)\in\mathbb{R}^{2d}\times\mathbb{N}\times[0,1]$ we have, almost surely, that
$$
\big\|\hat{b}(x,s)-\hat{b}(y,s)\big\|_2 \leq C\,\|x-y\|_2.
$$
\end{lem}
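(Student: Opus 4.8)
The plan is to prove the stochastic Lipschitz bound for $\hat b$ directly from its explicit ratio form
$$
\hat b(x,s) = \frac{\frac1N\sum_{i=1}^N \nabla f(x+\sqrt{1-s}\,Z^i)}{\frac1N\sum_{i=1}^N f(x+\sqrt{1-s}\,Z^i)} =: \frac{A(x,s)}{D(x,s)},
$$
using (A\ref{ass:2}). The key observation is that the denominator is uniformly bounded below: since $f\ge \underline C>0$ pointwise by (A\ref{ass:2})a), we have $D(x,s)\ge \underline C$ for \emph{every} realization of the $Z^i$, every $x$ and every $s$, so the ratio is never singular and all bounds hold almost surely with constants depending only on $\underline C,\overline C,C$.

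First I would write the difference of the two ratios over a common denominator,
$$
\hat b(x,s)-\hat b(y,s) = \frac{A(x,s)-A(y,s)}{D(x,s)} + A(y,s)\Big(\frac{1}{D(x,s)}-\frac{1}{D(y,s)}\Big)
= \frac{A(x,s)-A(y,s)}{D(x,s)} - \frac{A(y,s)\big(D(x,s)-D(y,s)\big)}{D(x,s)D(y,s)}.
$$
Then I would bound each piece. For the numerator differences, (A\ref{ass:2})b) gives $\|\nabla f(u)-\nabla f(v)\|_1\le C\|u-v\|_2$ and $|f(u)-f(v)|\le C\|u-v\|_2$; applying this with $u=x+\sqrt{1-s}\,Z^i$, $v=y+\sqrt{1-s}\,Z^i$ and noting $\|u-v\|_2=\sqrt{1-s}\,\|x-y\|_2\le\|x-y\|_2$ (since $s\in[0,1]$), the triangle inequality over the $N$ terms in the average yields $\|A(x,s)-A(y,s)\|_1\le C\|x-y\|_2$ and $|D(x,s)-D(y,s)|\le C\|x-y\|_2$, with $C$ independent of $N$, $x$, $y$, $s$ and of the randomness. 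For the remaining factors: $D(x,s),D(y,s)\ge\underline C$ by (A\ref{ass:2})a), and $\|A(y,s)\|_1\le \overline C$ since each $\|\nabla f\|_1\le\overline C$. Combining, and passing from the $L_1$-norm to the Euclidean norm via the equivalence $\|\cdot\|_2\le\|\cdot\|_1$ in $\mathbb{R}^d$, gives
$$
\|\hat b(x,s)-\hat b(y,s)\|_2 \le \frac{C\|x-y\|_2}{\underline C} + \frac{\overline C\cdot C\|x-y\|_2}{\underline C^2} =: C\|x-y\|_2,
$$
almost surely, which is the claim.

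There is essentially no serious obstacle here; the only thing to be careful about is making sure the denominator bound $D\ge\underline C$ is invoked \emph{before} any expectation is taken, so that the constant $C$ is deterministic and the inequality is a pointwise (almost sure) statement rather than merely an $L_2$ bound — this is exactly what is needed for the later use of this lemma inside \autoref{theo:main_thm}, where it is composed with \autoref{lem:conv_mc_marg}. A second minor point is bookkeeping on norms: (A\ref{ass:2}) is stated with $\|\cdot\|_1$ on gradients and Hessians, while the conclusion is in $\|\cdot\|_2$, so one should either track the dimensional constant from $\|v\|_2\le\|v\|_1$ or simply absorb it into $C$ (which is allowed, as $d$ is fixed). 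No properties of the $Z^i$ beyond their being fixed realizations are used — in particular independence or the Gaussian law play no role in this particular lemma.
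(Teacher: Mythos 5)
Your proof is correct and follows essentially the same route as the paper: decompose the difference of ratios, use the uniform lower bound $\underline{C}$ on the denominator and the upper bounds from (A1)a to control the prefactors, and then apply the Lipschitz bounds of (A1)b termwise in the Monte Carlo averages (with $\sqrt{1-s}\le 1$), the only cosmetic difference being that the paper argues coordinatewise while you keep the vector form and absorb the $\|\cdot\|_1$ versus $\|\cdot\|_2$ constants into $C$.
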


\begin{proof}
It suffices to prove that, almost surely, for each $j\in\{1,\dots,d\}$:
\begin{equation}\label{eq:lip_stoch}
\big|\big\{\hat{b}(x,s)-\hat{b}(y,s)\big\}_j\big| \leq C\,\|x-y\|_2.
\end{equation}
We have the simple decomposition
\begin{align*}
\frac{A(x)}{B(x)}-\frac{A(y)}{B(y)} = \tfrac{A(x)}{B(x)B(y)}\cdot \big(  B(y)-B(x) \big) + 
\tfrac{1}{B(y)}\cdot \big( A(x)- A(y)\big),
\end{align*}
where we have set
\begin{align*}
A(x) := \tfrac{1}{N}
\sum_{i=1}^N \nabla_j f(x+\sqrt{1-s}Z^i), 
\quad B(x) := \frac{1}{N}\sum_{i=1}^N f(x+\sqrt{1-s}Z^i).
\end{align*}
By (A\ref{ass:2}) a.~the terms
\begin{align*}
\frac{A(x)}{B(x)B(y)}, \quad \frac{1}{B(y)},
\end{align*}
are uniformly upper-bounded by a deterministic constant, so we have
\begin{align*}
\big|\big\{\hat{b}(x,s)-\hat{b}(y,s)\big\}_j\big|  \leq  C\, \big( \, |B(y)-B(x)|  + |A(y)-A(x)|\, \big).
\end{align*}
Applying the triangular inequality multiple times and (A\ref{ass:2}) b.~allows us to deduce the bound \eqref{eq:lip_stoch} and thus the proof is concluded.
\end{proof}

\begin{lem}
\label{lem:conv_mc}
Assume that (A\ref{ass:2}). Then, for any $p\in[1,\infty)$ there exists  a $C<\infty$ such that for any $(l,N,t)\in\mathbb{N}_0\times\mathbb{N}\times[0,1]$
$$
\mathbb{E}\,\big\|
\hat{b}(\widetilde{X}_{\tau_t^l}^{l},\tau_t^l)
-b(\widetilde{X}_{\tau_t^l}^{l},\tau_t^l)\big\|_2^p \leq \frac{C}{N^{p/2}}.
$$
\end{lem}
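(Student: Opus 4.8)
The plan is to exploit the fact that, under (A\ref{ass:2})a., the numerators and denominators of both $\hat b$ and $b$ are uniformly bounded, with the denominators bounded away from zero, so that the claim reduces to controlling the $\mathbb{L}_p$-error of ordinary i.i.d.\ Monte Carlo averages of bounded random variables. I would first fix a \emph{deterministic} pair $(x,s)\in\mathbb{R}^d\times[0,1]$ and argue componentwise. Writing, for $j\in\{1,\dots,d\}$,
$$
A_j:=\tfrac1N\textstyle\sum_{i=1}^N\nabla_j f(x+\sqrt{1-s}Z^i),\quad B:=\tfrac1N\textstyle\sum_{i=1}^N f(x+\sqrt{1-s}Z^i),
$$
$$
a_j:=\mathbb{E}_\phi[\nabla_j f(x+\sqrt{1-s}Z)],\quad b_0:=\mathbb{E}_\phi[f(x+\sqrt{1-s}Z)],
$$
I would use the elementary identity $\frac{A_j}{B}-\frac{a_j}{b_0}=\frac{b_0(A_j-a_j)-a_j(B-b_0)}{B\,b_0}$. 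By (A\ref{ass:2})a.\ one has $\underline C\le B,b_0\le\overline C$ almost surely and $|a_j|\le\overline C$, hence
$$
\Big|\big(\hat b(x,s)-b(x,s)\big)_j\Big|\le \tfrac{\overline C}{\underline C^{\,2}}\big(|A_j-a_j|+|B-b_0|\big),
$$
a bound that is uniform in $(x,s)$.

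Second, I would bound the $\mathbb{L}_p$-norms of the two Monte Carlo errors $A_j-a_j$ and $B-b_0$. Each is a normalized sum of i.i.d.\ centred random variables which, again by (A\ref{ass:2})a., are almost surely bounded by the deterministic constant $\overline C$, independently of $(x,s)$. The Marcinkiewicz--Zygmund inequality (or Rosenthal's / Burkholder's inequality) then yields, for a constant $C_p<\infty$ depending only on $p$,
$$
\mathbb{E}\big[|A_j-a_j|^p\big]\le \frac{C_p\,\overline C^{\,p}}{N^{p/2}},\qquad \mathbb{E}\big[|B-b_0|^p\big]\le \frac{C_p\,\overline C^{\,p}}{N^{p/2}}.
$$
Combining with the previous display, the $C_p$-inequality, and summing over the fixed number of components $j\in\{1,\dots,d\}$ gives $\mathbb{E}[\|\hat b(x,s)-b(x,s)\|_2^p]\le C N^{-p/2}$ with $C$ independent of $(x,s,N,l)$.

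Finally I would remove the conditioning on a deterministic point: since the \emph{exact} Euler recursion defining $\widetilde X^{l}$ uses the true drift $b$, the pair $(\widetilde X_{\tau_t^l}^{l},\tau_t^l)$ is a measurable function of the Brownian increments only and is therefore independent of the Gaussians $Z^1,\dots,Z^N$ entering $\hat b$; conditioning on $\widetilde X_{\tau_t^l}^{l}$, applying the uniform bound just established, and using the tower property completes the proof. The only point requiring genuine care — and the closest thing to an obstacle — is checking that every constant is truly uniform in $(x,s,l)$; this is exactly what the global upper and lower bounds in (A\ref{ass:2})a.\ provide, and it is also what lets one handle the degeneracy $s\to1$ (where $\sqrt{1-s}\to0$) with no extra work, since $\sqrt{1-s}$ never enters a denominator here. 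I would also note in passing that the refreshing of $Z^1,\dots,Z^N$ at each time step is irrelevant for this marginal-in-time statement.
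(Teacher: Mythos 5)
Your proposal is correct and follows essentially the same route as the paper: a componentwise algebraic decomposition of the difference of ratios, uniform control of the numerators and denominators via the bounds $\underline{C},\overline{C}$ in (A\ref{ass:2})a., an $N^{-p/2}$ moment bound for centred i.i.d.\ averages of bounded variables, and finally the independence of $\widetilde{X}_{\tau_t^l}^{l}$ (driven by the exact drift and the Brownian increments only) from $Z^1,\dots,Z^N$ to lift the pointwise bound via conditioning. The only cosmetic difference is your choice of the symmetric identity with common denominator $B\,b_0$ and the explicit appeal to Marcinkiewicz--Zygmund, where the paper uses the equivalent two-term splitting and cites standard i.i.d.\ results.
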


\begin{proof}
We consider a single co-ordinate of the vector $\hat{b}(\widetilde{X}_{\tau_t^l}^{l},\tau_t^l)
-b(\widetilde{X}_{\tau_t^l}^{l},\tau_t^l)$ and it suffices to bound
\begin{align}
\label{eq:suff}
\mathbb{E}\,\big|
\big\{\hat{b}(\widetilde{X}_{\tau_t^l}^{l},\tau_t^l)
-b(\widetilde{X}_{\tau_t^l}^{l},\tau_t^l)\big\}_j\big|^p
\end{align}
for any $j\in\{1,\dots,d\}$.
We have the simple decomposition
\begin{align*}
\frac{A^N}{B^N}-\frac{A}{B} = \tfrac{A^N}{B\cdot B^N}( B - B^{N} )  + 
 \tfrac{1}{B}( A^N - A ),
\end{align*}
where we have defined
\begin{gather*}
A^{N}:= \tfrac{1}{N}\sum_{i=1}^N \nabla_j f(\widetilde{X}_{\tau_t^l}^{l}+\sqrt{1-\tau_t^l}Z^i), \quad 
B^{N} :=  \tfrac{1}{N}\sum_{i=1}^N f(\widetilde{X}_{\tau_t^l}^{l}+\sqrt{1-\tau_t^l}Z^i), \\
A:= \mathbb{E}_{\phi}\,\big[\,\nabla_j f(\widetilde{X}_{\tau_t^l}^{l}+\sqrt{1-\tau_t^l}Z)\,], \quad 
B:= 
 \mathbb{E}_{\phi}\,\big[\,f(\widetilde{X}_{\tau_t^l}^{l}+\sqrt{1-\tau_t^l}Z)\,\big].
\end{gather*}
%
%
Thus, using (A\ref{ass:2}) and the $C_p$-inequality we have the following  upper-bound for \eqref{eq:suff}
\begin{align*}
 C\,\big(\,
\mathbb{E}\,| B
 - B^N|^p +
\mathbb{E}\,|A^N- A|^p\,
\big).
\end{align*}
As $\widetilde{X}_{\tau_t^l}^{l}$ is independent of $Z^1,\dots,Z^N$ we can use standard results for iid random variables to deduce that \eqref{eq:suff} is upper bounded by 
$C/N^{p/2}$
and the proof is now completed.
\end{proof}

\begin{lem}
\label{lem:conv_mc_marg}
Assume that (A\ref{ass:2}). Then, for any $p\in[1,\infty)$ there exists a $C<\infty$ such that for any $(l,N,t)\in\mathbb{N}_0\times\mathbb{N}\times[0,1]$
\begin{align}
\label{eq:term}
\mathbb{E}\,\big\|\widetilde{X}_{\tau_t^l}^{l,N}-\widetilde{X}_{\tau_t^l}^{l}\big\|_2^p \leq \frac{C}{N^{p/2}}.
\end{align}
\end{lem}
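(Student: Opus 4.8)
The plan is to run a discrete Gr\"onwall argument along the Euler grid, using \autoref{lem:stoch_lip} to control the stability of the scheme and \autoref{lem:conv_mc} to control the Monte Carlo error of the drift evaluated along the \emph{exact} trajectory. Set $D_k := \widetilde{X}_{k\Delta_l}^{l,N}-\widetilde{X}_{k\Delta_l}^{l}$ for $k\in\{0,1,\dots,\Delta_l^{-1}\}$. Since the approximate and exact Euler recursions share the same Brownian increments, these cancel upon subtraction, leaving the clean recursion
\[
D_{k+1} = D_k + \Delta_l\big(\hat{b}(\widetilde{X}_{k\Delta_l}^{l,N},k\Delta_l)-b(\widetilde{X}_{k\Delta_l}^{l},k\Delta_l)\big),\qquad D_0=0.
\]
I would then split the bracket by adding and subtracting $\hat{b}(\widetilde{X}_{k\Delta_l}^{l},k\Delta_l)$, writing $\alpha_k:=\hat{b}(\widetilde{X}_{k\Delta_l}^{l,N},k\Delta_l)-\hat{b}(\widetilde{X}_{k\Delta_l}^{l},k\Delta_l)$ for the stability term and $\beta_k:=\hat{b}(\widetilde{X}_{k\Delta_l}^{l},k\Delta_l)-b(\widetilde{X}_{k\Delta_l}^{l},k\Delta_l)$ for the Monte Carlo error term.

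For $\alpha_k$, \autoref{lem:stoch_lip} supplies the \emph{pathwise} bound $\|\alpha_k\|_2\le C\|D_k\|_2$ with a deterministic constant, uniform in the realization of the Gaussian variates and in $(k,l)$; this is what lets one take expectations freely afterwards. For $\beta_k$, the key observation is that $\widetilde{X}_{k\Delta_l}^{l}$ is produced by the \emph{exact} Euler recursion, which does not involve the drift-estimation Gaussians at all — so it is immaterial here whether those are refreshed at each step or held fixed — and hence \autoref{lem:conv_mc} applies directly to give $\|\beta_k\|_{L^p}\le C\,N^{-1/2}$, uniformly in $(k,l)$. Applying the triangle inequality in $L^p$ (Minkowski) to $D_{k+1}=D_k+\Delta_l(\alpha_k+\beta_k)$ then yields
\[
\|D_{k+1}\|_{L^p}\ \le\ (1+C\Delta_l)\,\|D_k\|_{L^p}\ +\ \frac{C\Delta_l}{N^{1/2}}.
\]

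Iterating from $D_0=0$ gives $\|D_k\|_{L^p}\le N^{-1/2}\big((1+C\Delta_l)^k-1\big)$, and since there are at most $\Delta_l^{-1}$ steps with $k\Delta_l\le 1$ we have $(1+C\Delta_l)^k\le e^{C}$, so $\|D_k\|_{L^p}\le C\,N^{-1/2}$ with $C$ independent of $(k,l,N)$. Taking $k=\lfloor t/\Delta_l\rfloor$, so that $k\Delta_l=\tau_t^l$, and raising to the $p$-th power delivers \eqref{eq:term}. The argument is essentially routine given the two cited lemmata; the only step that needs a little care is checking that the Gr\"onwall factor does not degrade as $l\to\infty$, but each of the $\Delta_l^{-1}$ steps inflates the error only by $1+C\Delta_l$ and $(1+C\Delta_l)^{\Delta_l^{-1}}\le e^{C}$, which keeps this uniformly bounded. (It would suffice to treat $p=2$ first; since \autoref{lem:conv_mc} is already stated for all $p\in[1,\infty)$, no extra work is needed for larger $p$.)
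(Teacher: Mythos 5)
Your proposal is correct and follows essentially the same route as the paper's proof: the same add-and-subtract decomposition into a stability term handled by \autoref{lem:stoch_lip} and a Monte Carlo error term handled by \autoref{lem:conv_mc}, closed by a Gr\"onwall argument. The only (cosmetic) difference is that you run a discrete Gr\"onwall recursion in $L^p$ norms along the grid, whereas the paper writes the difference as an integral over $[0,\tau_t^l]$ and applies the continuous Gr\"onwall inequality to the $p$-th moments.
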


\begin{proof}
We have that
\begin{align*}
\widetilde{X}_{\tau_t^l}^{l,N}-\widetilde{X}_{\tau_t^l}^{l} = \int_{0}^{\tau_t^l}\big(\hat{b}(\widetilde{X}_{\tau_s^l}^{l,N},\tau_s^l)-
\hat{b}(\widetilde{X}_{\tau_s^l}^{l},\tau_s^l)\big)ds + 
\int_{0}^{\tau_t^l}\big(\hat{b}(\widetilde{X}_{\tau_s^l}^{l},\tau_s^l)-
b(\widetilde{X}_{\tau_s^l}^{l},\tau_s^l)\big)ds.
\end{align*}
Therefore, it easily follows that $\mathbb{E}\,\big\|\widetilde{X}_{\tau_t^l}^{l,N}-\widetilde{X}_{\tau_t^l}^{l}\big\|_2^p$ is upper bounded by 
\begin{align*}
  C\,
\int_{0}^{\tau_t^l}\Big\{\,
\mathbb{E}\,\big\|\hat{b}(\widetilde{X}_{\tau_s^l}^{l,N},\tau_s^l)-
\hat{b}(\widetilde{X}_{\tau_s^l}^{l},\tau_s^l)\big\|_2^p +  
\mathbb{E}\,\big\|
\hat{b}(\widetilde{X}_{\tau_s^l}^{l},\tau_s^l)
-b(\widetilde{X}_{\tau_s^l}^{l},\tau_s^l)\big\|_2^p\,\Big\}ds.
\end{align*}
For the first term in the integral one can use \autoref{lem:stoch_lip} and for the second one can use \autoref{lem:conv_mc}. Thus, one can deduce the following upper  bound for (\ref{eq:term})
$$
  C\,\Big( \int_0^t\mathbb{E}\,\big\|\widetilde{X}_{\tau_s^l}^{l,N}-\widetilde{X}_{\tau_s^l}^{l}\big\|_2^p \,ds+\frac{1}{N^{p/2}}\Big).
$$
So, the proof is concluded by applying  Gr\"onwall's inequality.
\end{proof}

\begin{lem}\label{lem:new_tech_lem}
Assume that (A\ref{ass:2}). Then, there exists a $C<+\infty$ such that for any $(l,N,s)\in\mathbb{N}_0\times\mathbb{N}\times[0,1]$ we have
\begin{align*}
\mathbb{E}\, \Big\|\big(\hat{b}(\widetilde{X}_{\tau_s^l}^{l,N},\tau_s^l)-
\hat{b}(\widetilde{X}_{\tau_s^{l-1}}^{l,N},\tau_s^{l-1})\big) - 
\big(\hat{b}(\widetilde{X}_{\tau_s^l}^{l},\tau_s^l)-\hat{b}(\widetilde{X}_{\tau_s^{l-1}}^{l},\tau_s^{l-1})\big)
\Big\|_2^2\leq \frac{C\Delta_l}{N}.
\end{align*}
\end{lem}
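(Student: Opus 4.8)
The plan is to reduce the bound to a single Euler step and then split the ``second difference of the drift'' into a purely spatial part and a mixed space--time part. First, if $\lfloor s/\Delta_l\rfloor$ is even then $\tau_s^l=\tau_s^{l-1}$ and the $\mathbb{R}^d$-valued quantity inside the norm is identically $0$, so it suffices to treat $\tau_s^l=\tau_s^{l-1}+\Delta_l$; put $\rho:=\tau_s^{l-1}$ and $\sigma:=\tau_s^l$. Since the approximate and exact Euler recursions share the same Brownian motion, the Wiener increments cancel and
\[
\big(\widetilde{X}_{\sigma}^{l,N}-\widetilde{X}_{\rho}^{l,N}\big)-\big(\widetilde{X}_{\sigma}^{l}-\widetilde{X}_{\rho}^{l}\big)=\Delta_l\big(\hat{b}(\widetilde{X}_{\rho}^{l,N},\rho)-b(\widetilde{X}_{\rho}^{l},\rho)\big),
\]
whose $\mathbb{L}_q$-norm is $\mathcal{O}(\Delta_l N^{-1/2})$ for every $q\in[1,\infty)$ (insert $\hat{b}(\widetilde{X}_\rho^{l},\rho)$ and apply \autoref{lem:stoch_lip}, \autoref{lem:conv_mc} and \autoref{lem:conv_mc_marg}). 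I also record, under (A\ref{ass:2}), that $\hat{b}$ is almost surely bounded and $b$ is bounded, that the spatial Jacobian $\nabla_x\hat{b}(\cdot,t)$ is almost surely --- and uniformly in $(t,N)$ --- bounded and Lipschitz in $x$ (proved exactly as \autoref{lem:stoch_lip}, now using that $\nabla^2 f$ is bounded and Lipschitz), and that the one-step increment $\widetilde{X}_\sigma^{l,N}-\widetilde{X}_\rho^{l,N}=\hat{b}(\widetilde{X}_\rho^{l,N},\rho)\Delta_l+(W_\sigma-W_\rho)$ and its exact counterpart have $\mathbb{L}_q$-norm $\mathcal{O}(\Delta_l^{1/2})$.

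Writing $Q$ for the $\mathbb{R}^d$-valued quantity inside the norm on the left-hand side and inserting $\hat{b}(\widetilde{X}_\rho^{l,N},\sigma)$ and $\hat{b}(\widetilde{X}_\rho^{l},\sigma)$, I decompose $Q=Q_1+Q_2$ with
\[
Q_1=\big(\hat{b}(\widetilde{X}_\sigma^{l,N},\sigma)-\hat{b}(\widetilde{X}_\rho^{l,N},\sigma)\big)-\big(\hat{b}(\widetilde{X}_\sigma^{l},\sigma)-\hat{b}(\widetilde{X}_\rho^{l},\sigma)\big),
\]
a purely spatial second difference at the frozen time $\sigma$, and
\[
Q_2=\big(\hat{b}(\widetilde{X}_\rho^{l,N},\sigma)-\hat{b}(\widetilde{X}_\rho^{l,N},\rho)\big)-\big(\hat{b}(\widetilde{X}_\rho^{l},\sigma)-\hat{b}(\widetilde{X}_\rho^{l},\rho)\big),
\]
a time increment of the $x$-difference of $\hat{b}$ between $\widetilde{X}_\rho^{l,N}$ and $\widetilde{X}_\rho^{l}$. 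For $Q_1$ I use the two-term first-order Taylor identity from the proof of \autoref{prop:func_res}: one term is $\int_0^1\nabla_x\hat{b}\big(\widetilde{X}_\rho^{l,N}+\lambda(\widetilde{X}_\sigma^{l,N}-\widetilde{X}_\rho^{l,N}),\sigma\big)\big[(\widetilde{X}_\sigma^{l,N}-\widetilde{X}_\rho^{l,N})-(\widetilde{X}_\sigma^{l}-\widetilde{X}_\rho^{l})\big]\,d\lambda$, of second moment $\mathcal{O}(\Delta_l^2/N)$ by boundedness of $\nabla_x\hat{b}$ and the displayed cancellation identity; the other is $\int_0^1\big[\nabla_x\hat{b}\big(\widetilde{X}_\rho^{l,N}+\lambda(\widetilde{X}_\sigma^{l,N}-\widetilde{X}_\rho^{l,N}),\sigma\big)-\nabla_x\hat{b}\big(\widetilde{X}_\rho^{l}+\lambda(\widetilde{X}_\sigma^{l}-\widetilde{X}_\rho^{l}),\sigma\big)\big](\widetilde{X}_\sigma^{l}-\widetilde{X}_\rho^{l})\,d\lambda$, of second moment $\mathcal{O}(\Delta_l/N)$ by the Lipschitz property of $\nabla_x\hat{b}$, Cauchy--Schwarz, the $\mathbb{L}_4$-bound $\mathcal{O}(N^{-1/2})$ from \autoref{lem:conv_mc_marg} and \autoref{lem:conv_mc}, and the $\mathbb{L}_4$-bound $\mathcal{O}(\Delta_l^{1/2})$ of the exact one-step increment. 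Hence $\mathbb{E}\|Q_1\|_2^2\le C\Delta_l/N$.

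For $Q_2$ I Taylor-expand in both arguments,
\[
Q_2=\int_\rho^{\sigma}\int_0^1\nabla_x\partial_t\hat{b}\big(\widetilde{X}_\rho^{l}+\mu(\widetilde{X}_\rho^{l,N}-\widetilde{X}_\rho^{l}),t\big)\big(\widetilde{X}_\rho^{l,N}-\widetilde{X}_\rho^{l}\big)\,d\mu\,dt,
\]
and apply Cauchy--Schwarz first in $t$, then in $\omega$. The factor $\mathbb{E}\|\widetilde{X}_\rho^{l,N}-\widetilde{X}_\rho^{l}\|_2^4$ is $\mathcal{O}(N^{-2})$ by \autoref{lem:conv_mc_marg}. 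For the other factor, writing $\hat{b}=A/B$ with $B\ge\underline{C}$, differentiating once in $t$ produces a single factor $(1-t)^{-1/2}$ (from $\partial_t\sqrt{1-t}$), the subsequent $x$-derivative adds no further $t$-singularity, and the samples $Z^i$ enter only through empirical averages $\tfrac1N\sum_i(\cdot)$ of polynomials in $\|Z^i\|$ times bounded functions; by Jensen these averages have all moments bounded uniformly in $N$, whence $\mathbb{E}\sup_{\mu\in[0,1]}\|\nabla_x\partial_t\hat{b}(\cdot,t)\|^4\le C(1-t)^{-2}$. Therefore $\mathbb{E}\|Q_2\|_2^2\le C\Delta_l N^{-1}\int_\rho^{\sigma}(1-t)^{-1}\,dt=C\Delta_l N^{-1}\log\tfrac{1-\rho}{1-\sigma}$, and since $\sigma\le 1-\Delta_l$ forces $1-\sigma\ge\Delta_l=\sigma-\rho$, i.e.\ $1-\rho\le 2(1-\sigma)$, this logarithm is at most $\log 2$, so $\mathbb{E}\|Q_2\|_2^2\le C\Delta_l/N$. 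Adding the $Q_1$ and $Q_2$ bounds via the $C_2$-inequality concludes. I expect $Q_2$ to be the main obstacle: one must exploit that $[\rho,\sigma]$ is both short (length $\Delta_l$) and bounded away from the terminal time $1$ by at least $\Delta_l$ --- it is the coincidence of these two scales that keeps $\int_\rho^\sigma(1-t)^{-1}dt$ bounded despite the degeneration of the time-regularity of $\hat{b}$ as $t\to1$ --- and one must check carefully that the $N$ Monte Carlo samples enter $\nabla_x\partial_t\hat{b}$ only through empirical averages, so that the constants above are independent of $N$.
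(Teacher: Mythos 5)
Your route is genuinely different from the paper's. The paper never differentiates $\hat{b}$ at all: it works coordinate-wise on the ratio $\hat b=A/B$ via the algebraic identity \eqref{eq:tech_lem1} (from \cite[Lemma C.5.]{mlpf}), reduces everything to second differences of $f$ and $\nabla_j f$ evaluated at the shifted points $\widetilde{X}+\sqrt{1-\tau}\,Z^1$, and handles the time dependence purely through spatial Lipschitzness, using that $\|(\sqrt{1-\tau_s^l}-\sqrt{1-\tau_s^{l-1}})Z^1\|_2$ is $\mathcal{O}(\sqrt{\Delta_l})$ in $\mathbb{L}_4$ together with the increment identity \eqref{eq:mc_inc_level}. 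Your reduction by parity of $\lfloor s/\Delta_l\rfloor$ to a single level-$l$ Euler step, the exact cancellation $(\widetilde{X}^{l,N}_\sigma-\widetilde{X}^{l,N}_\rho)-(\widetilde{X}^{l}_\sigma-\widetilde{X}^{l}_\rho)=\Delta_l(\hat b(\widetilde{X}^{l,N}_\rho,\rho)-b(\widetilde{X}^{l}_\rho,\rho))$ (the paper's \eqref{eq:mc_inc_level} in disguise), and the treatment of $Q_1$ via a.s.\ boundedness and Lipschitzness of $\nabla_x\hat b$ (which indeed only needs $f,\nabla f,\nabla^2 f$ bounded and Lipschitz) are correct and arguably tidier than the paper's bookkeeping.

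There is, however, a concrete gap in your $Q_2$ step: you represent $Q_2$ through the mixed derivative $\nabla_x\partial_t\hat b$. Since $\partial_t$ applied to the numerator $A_j(x,t)=\tfrac1N\sum_i\nabla_j f(x+\sqrt{1-t}Z^i)$ produces $-\tfrac{1}{2\sqrt{1-t}}\tfrac1N\sum_i\nabla\nabla_j f(\cdot)\cdot Z^i$, the subsequent $\nabla_x$ hits \emph{third} derivatives of $f$, which (A\ref{ass:2}) neither assumes to exist nor bounds (it controls only $f,\nabla f,\nabla^2 f$ and their Lipschitz constants); hence your asserted bound $\mathbb{E}\sup_{\mu}\|\nabla_x\partial_t\hat b\|^4\leq C(1-t)^{-2}$ is not available under the stated hypotheses. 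The repair stays entirely within your scheme: do not differentiate in $t$. Either bound the time increment of the Jacobian directly, $\sup_x\|\nabla_x\hat b(x,\sigma)-\nabla_x\hat b(x,\rho)\|\leq C\,|\sqrt{1-\sigma}-\sqrt{1-\rho}|\,\tfrac1N\sum_i\|Z^i\|_2$, using only the Lipschitz property of $f,\nabla f,\nabla^2 f$, or write $Q_2=\int_\rho^\sigma\big(\partial_t\hat b(\widetilde{X}^{l,N}_\rho,t)-\partial_t\hat b(\widetilde{X}^{l}_\rho,t)\big)dt$ and use that $\partial_t\hat b(\cdot,t)$ is $x$-Lipschitz with random constant $C(1-t)^{-1/2}(1+\tfrac1N\sum_i\|Z^i\|_2)$, which again needs no more than $\nabla^2 f$ Lipschitz. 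In either case $|\sqrt{1-\sigma}-\sqrt{1-\rho}|\leq\Delta_l/\sqrt{1-\sigma}\leq\sqrt{\Delta_l}$ (in the nontrivial odd case one has $1-\sigma\geq\Delta_l$, at least for $l\geq1$; note your parity argument degenerates at $l=0$, $s=1$), and Cauchy--Schwarz with $\mathbb{E}\|\widetilde{X}^{l,N}_\rho-\widetilde{X}^{l}_\rho\|_2^4\leq CN^{-2}$ from \autoref{lem:conv_mc_marg} gives $\mathbb{E}\|Q_2\|_2^2\leq C\Delta_l/N$ directly, with no logarithm and no boundary bookkeeping --- mirroring how the paper exploits only the Lipschitz dependence on the argument shift rather than any time-regularity of $\hat b$.
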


\begin{proof}
We will consider one co-ordinate of the vector
$$
T:= \big(\hat{b}(\widetilde{X}_{\tau_s^l}^{l,N},\tau_s^l)-\hat{b}(\widetilde{X}_{\tau_s^{l-1}}^{l,N},\tau_s^{l-1})\big) - 
\big(\hat{b}(\widetilde{X}_{\tau_s^l}^{l},\tau_s^l)-\hat{b}(\widetilde{X}_{\tau_s^{l-1}}^{l},\tau_s^{l-1})\big)
$$
as the argument is essentially the same across all co-ordinates.  We have that for any $j\in\{1,\dots,d\}$
\begin{align*}
T_j  = \Big(\frac{A^{l,N}}{B^{l,N}} - \frac{C^{l,N}}{D^{l,N}}\Big) - 
\Big(  \frac{A^l}{B^{l}} - \frac{C^{l}}{D^{l}}  \Big),
\end{align*}
where we have defined
\begin{gather*}
A^{l,N}:=  \tfrac{1}{N}\sum_{i=1}^N \nabla_j f(\widetilde{X}_{\tau_s^l}^{l,N}+\sqrt{1-\tau_s^l}Z^i); \quad 
B^{l,N}:= \tfrac{1}{N}\sum_{i=1}^Nf(\widetilde{X}_{\tau_s^l}^{l,N}+\sqrt{1-\tau_s^l}Z^i),\\
C^{l,N}:= \tfrac{1}{N}\sum_{i=1}^N \nabla_j f(\widetilde{X}_{\tau_s^{l-1}}^{l,N}+\sqrt{1-\tau_s^{l-1}}Z^i),\quad 
D^{l,N}:=  \tfrac{1}{N}\sum_{i=1}^N f(\widetilde{X}_{\tau_s^{l-1}}^{l,N}+\sqrt{1-\tau_s^{l-1}}Z^i),\\
A^l :=  \tfrac{1}{N}\sum_{i=1}^N \nabla_j f(\widetilde{X}_{\tau_s^l}^{l}+\sqrt{1-\tau_s^l}Z^i) ,\quad 
B^{l}:= \tfrac{1}{N}\sum_{i=1}^N f(\widetilde{X}_{\tau_s^l}^{l}+\sqrt{1-\tau_s^l}Z^i),\\
C^{l}:=  \tfrac{1}{N}\sum_{i=1}^N \nabla_j f(\widetilde{X}_{\tau_s^{l-1}}^{l}+\sqrt{1-\tau_s^{l-1}}Z^i),\quad
D^{l}:=\tfrac{1}{N}\sum_{i=1}^N f(\widetilde{X}_{\tau_s^{l-1}}^{l}+\sqrt{1-\tau_s^{l-1}}Z^i).
\end{gather*}
%
%
%
%
We  use \cite[Lemma C.5.]{mlpf} stating that for reals  $(a,b,c,d)$ and non-zero reals $(A,B,C,D)$
\begin{align}
\big(\tfrac{a}{A}&-\tfrac{b}{B}\big) - \big(\tfrac{c}{C}-\tfrac{d}{D}\big)  = \nonumber \\[0.2cm] &\tfrac{1}{A}\big((a-b)-(c-d)\big) - \tfrac{b}{AB}\big((A-B)-(C-D)\big) + \tfrac{1}{AC}(C-A)(c-d)  \nonumber\\[0.2cm]
& \qquad  -\tfrac{1}{AB}(b-d)(C-D) + \tfrac{d}{CBD}(B-D)(C-D) + \tfrac{d}{ACB}(A-C)(C-D). \label{eq:tech_lem1}
\end{align}
One can consider all six terms, individually, by using the $C_2$-inequality. However, as the terms
$$
\tfrac{1}{A}\big((a-b)-(c-d)\big), \quad \tfrac{b}{AB}\big((A-B)-(C-D)\big),
$$
are similar, we treat only the former. In addition, as the last four terms on the right side of \eqref{eq:tech_lem1} can be dealt with using similar calculations, we only deal with one of them. To that end, we seek to bound the two terms:
\begin{align*}
T_{j,1}  &:=  \mathbb{E}\,\Big[\,\Big( 
\tfrac{1}{(B^{l,N})^2}\cdot \big(   (A^{l,N} - C^{l,N}  ) - (  A^l - C^{l} )\big)^2\,\Big], \\
T_{j,2} &:=   \mathbb{E}\,\Big[\,     
\tfrac{1}{(B^{l})^2 (B^{l,N})^2}\cdot (B^{l} - B^{l,N})^2\cdot 
(    A^l - C^l )^2\,
\Big].
\end{align*}
%
%
%
%
For $T_{j,1}$ we easily obtain the upper-bound
\begin{align*}
 C\,\mathbb{E}\,\Big|\,
\big(
 \nabla_j f(Y_{\tau_s^l}^{l,N}) - 
 \nabla_j f(Y_{\tau_s^{l-1}}^{l,N})\big) -
\big(\nabla_j f(Y_{\tau_s^l}^{l})- 
 \nabla_j f(Y_{\tau_s^{l-1}}^{l})\big)
\Big|^2
\end{align*}
where we have set  
\begin{gather*}
Y_{\tau_s^l}^{l,N}:=\widetilde{X}_{\tau_s^l}^{l,N}+\sqrt{1-\tau_s^l}Z^1, \quad 
Y_{\tau_s^l}^{l}:=\widetilde{X}_{\tau_s^l}^{l}+\sqrt{1-\tau_s^l}Z^1, \\[0.2cm] 
Y_{\tau_s^{l-1}}^{l,N}:=\widetilde{X}_{\tau_s^{l-1}}^{l,N}+\sqrt{1-\tau_s^{l-1}}Z^1, \quad Y_{\tau_s^{l-1}}^{l}:=\widetilde{X}_{\tau_s^{l-1}}^{l}+\sqrt{1-\tau_s^{l-1}}Z^1.
\end{gather*}
%
%
Note now that 
\begin{align*}
\big(
 \nabla_j f(Y_{\tau_s^l}^{l,N}) - 
 \nabla_j f(Y_{\tau_s^{l-1}}^{l,N})\big) -
\big(\nabla_j f(Y_{\tau_s^l}^{l})- 
 \nabla_j f(Y_{\tau_s^{l-1}}^{l})\big)= \bar{T}_{j,1}(1) + \bar{T}_{j,1}(2),
\end{align*}
for the terms 
\begin{align*}
\bar{T}_{j,1}(1) & :=  \sum_{k=1}^d \int_0^1\nabla_k
\nabla_j f(Y_{\tau_s^{l-1}}^{l,N}+\lambda(Y_{\tau_s^{l}}^{l,N}-Y_{\tau_s^{l-1}}^{l,N}))\cdot 
\big\{(Y_{\tau_s^l}^{l,N}-Y_{\tau_s^{l-1}}^{l,N})- (Y_{\tau_s^l}^{l}-Y_{\tau_s^{l-1}}^{l})\big\}_k\,d\lambda, \\
\bar{T}_{j,1}(2) & :=  \sum_{k=1}^d\int_0^1
\Big(\nabla_k \nabla_j f(Y_{\tau_s^{l-1}}^{l,N}+\lambda(Y_{\tau_s^{l}}^{l,N}-Y_{\tau_s^{l-1}}^{l,N}))\\ &\qquad\qquad\qquad\qquad\qquad\qquad -\nabla_k \nabla_j f(Y_{\tau_s^{l-1}}^{l}+\lambda(Y_{\tau_s^{l}}^{l}-Y_{\tau_s^{l-1}}^{l})\Big)\cdot 
\{Y_{\tau_s^l}^{l}-Y_{\tau_s^{l-1}}^{l}\}_k\,d\lambda
\end{align*}
%
Then to obtain the required bound for $T_{j,1}$, it suffices to bound the second moments of $\bar{T}_{j,1}(1)$ and $\bar{T}_{j,1}(2)$ respectively.
For $\bar{T}_{j,1}(1)$  we have 
$$
\mathbb{E}\,\big|\,\bar{T}_{j,1}(1)\,\big|^2 \leq C\,\mathbb{E}\,\big\|(\widetilde{X}_{\tau_s^l}^{l,N}-\widetilde{X}_{\tau_s^{l-1}}^{l,N})-(\widetilde{X}_{\tau_s^l}^{l}-
\widetilde{X}_{\tau_s^{l-1}}^{l})\big\|_2^2.
$$
Also, since 
\begin{align}
\nonumber
(\widetilde{X}_{\tau_s^l}^{l,N}&-\widetilde{X}_{\tau_s^{l-1}}^{l,N})-(\widetilde{X}_{\tau_s^l}^{l}-\widetilde{X}_{\tau_s^{l-1}}^{l}) \\ &\qquad = \int_{\tau_s^{l-1}}^{\tau_s^{l}}
\Big(\big(\hat{b}(\widetilde{X}_{\tau_u^l}^{l,N},\tau_u^l)-\hat{b}(\widetilde{X}_{\tau_u^l}^{l},\tau_u^l)\big) + \big(\hat{b}(\widetilde{X}_{\tau_u^l}^{l},\tau_u^l)-b(\widetilde{X}_{\tau_u^l}^{l},\tau_u^l)\big)\Big)du
\label{eq:mc_inc_level}
\end{align}
one can follow similar arguments that were used to deduce \eqref{eq:nice_bound} to obtain
$$
\mathbb{E}\,\big|\,\bar{T}_{j,1}(1)\,\big|^2 \leq \frac{C\Delta_l^2}{N}.
$$
Using (A\ref{ass:2}) it follows that
\begin{align*}
\mathbb{E}\,\big|\,\bar{T}_{j,1}(2)\,\big|^2 \leq C\,\mathbb{E}\,\bigg[\,\Big(\big\|Y_{\tau_s^{l-1}}^{l,N}-Y_{\tau_s^{l-1}}^{l}\big\|_2^2
+\big\|(\widetilde{X}_{\tau_s^l}^{l,N}&-\widetilde{X}_{\tau_s^{l-1}}^{l,N})-(\widetilde{X}_{\tau_s^l}^{l}-\widetilde{X}_{\tau_s^{l-1}}^{l})\big\|_2^2\Big)
\\ &\qquad \qquad \qquad \times\big\|Y_{\tau_s^l}^{l}-Y_{\tau_s^{l-1}}^{l}\big\|_2^2\,\bigg].
\end{align*}
Using Cauchy-Schwarz, we obtain
\begin{align*}
\mathbb{E}\,\big|\,&\bar{T}_{j,1}(2)\,\big|^2 \\[0.2cm] &\leq  C\,\mathbb{E}\,\big[\,\|Y_{\tau_s^l}^{l}-Y_{\tau_s^{l-1}}^{l}\|_2^4\,\big]^{1/2}
\\
&\qquad \qquad\times \Big(\,\mathbb{E}\,\big[\,\|Y_{\tau_s^{l-1}}^{l,N}-Y_{\tau_s^{l-1}}^{l}\|_2^4\,\big]^{1/2} + 
\mathbb{E}\,\big[\,\big\|(\widetilde{X}_{\tau_s^l}^{l,N}-\widetilde{X}_{\tau_s^{l-1}}^{l,N})-(\widetilde{X}_{\tau_s^l}^{l}
-\widetilde{X}_{\tau_s^{l-1}}^{l})\big\|_2^4\,\big]^{1/2}\,\Big) \\[0.2cm] & = 
C\,\mathbb{E}\,\big[\,\big\|\widetilde{X}_{\tau_s^l}^{l}-\widetilde{X}_{\tau_s^{l-1}}^{l}+\{\sqrt{1-\tau_s^l}-\sqrt{1-\tau_s^{l-1}}\}Z^1\big\|^4\,\big]^{1/2}
\\ &\qquad \qquad\times \Big(\,\mathbb{E}\,\big[\,\big\|\widetilde{X}_{\tau_s^{l-1}}^{l,N}-\widetilde{X}_{\tau_s^{l-1}}^{l}\big\|_2^4\,\big]^{1/2} + 
\mathbb{E}\,\big[\,\big\|(\widetilde{X}_{\tau_s^l}^{l,N}-\widetilde{X}_{\tau_s^{l-1}}^{l,N})-(\widetilde{X}_{\tau_s^l}^{l}-
\widetilde{X}_{\tau_s^{l-1}}^{l})\big\|_2^4\,\big]^{1/2}\,\Big).
\end{align*}
For the first factor term in the above upper bound, using standard results on Euler discretizations and Gaussian random variables, we have
$$
\mathbb{E}\,\Big[\,\big\|(\widetilde{X}_{\tau_s^l}^{l}-\widetilde{X}_{\tau_s^{l-1}}^{l})+(\sqrt{1-\tau_s^l}-\sqrt{1-\tau_s^{l-1}})Z^1\big\|_2^4\,\Big]^{1/2} \leq C\Delta_l.
$$
Then using \autoref{lem:conv_mc_marg} and the above arguments one obtains
$$
\mathbb{E}\,\big[\,\bar{T}_{j,1}(2)^2\,\big] \leq \frac{C\Delta_l}{N}.
$$
Thus we can conclude that 
$$
T_{j,1} \leq \frac{C\Delta_l}{N}.
$$
For $T_{j,2}$, using (A\ref{ass:2}) and Cauchy-Schwarz it follows that
\begin{align*}
&T_{j,2} \leq C\,\mathbb{E}\,\Big[\,\Big(f(\widetilde{X}_{\tau_s^l}^{l}+\sqrt{1-\tau_s^l}Z^1)-f(\widetilde{X}_{\tau_s^l}^{l,N}+\sqrt{1-\tau_s^l}Z^1)\Big)^4\,\Big]^{1/2}
\\ &\qquad \qquad \qquad \times 
\mathbb{E}\,\Big[\,\Big(\nabla_j f(\widetilde{X}_{\tau_s^l}^{l}+\sqrt{1-\tau_s^l}Z^1)-\nabla_j f(\widetilde{X}_{\tau_s^{l-1}}^{l}+\sqrt{1-\tau_s^{l-1}}Z^1)\Big)^4\,
\Big]^{1/2}.
\end{align*}
For the first factor one can use (A\ref{ass:2}) b.~and \autoref{lem:conv_mc_marg}, and for the second  one can use  (A\ref{ass:2})~b.~and standard properties of Euler-discretizations to give
$$
T_{j,2} \leq \frac{C\Delta_l}{N}.
$$
This completes the proof.
\end{proof}

\begin{lem}\label{lem:new_tech_lem1}
Assume that (A\ref{ass:2}). Then, there exists a $C<+\infty$ such that for any $(l,N,s)\in\mathbb{N}^2\times[0,1]$ we have
\begin{align*}
&\mathbb{E}\,\Big\|\big(\hat{b}(\widetilde{X}_{\tau_s^{l-1}}^{l,N},\tau_s^{l-1})-\hat{b}(\widetilde{X}_{\tau_s^{l-1}}^{l-1,N},\tau_s^{l-1})\big) - 
\big(\hat{b}(\widetilde{X}_{\tau_s^{l-1}}^{l},\tau_s^{l-1})-\hat{b}(\widetilde{X}_{\tau_s^{l-1}}^{l-1},\tau_s^{l-1})\big)
\Big\|_2^2\leq  \\[0.2cm]
&\qquad\qquad\qquad\qquad\qquad\qquad \qquad C\,\Big(\,\frac{\Delta_l^2}{N}+
\mathbb{E}\,\Big\|\big(\widetilde{X}_{\tau_s^l}^{l,N}-\widetilde{X}_{\tau_s^{l-1}}^{l-1,N}\big) - 
\big(\widetilde{X}_{\tau_s^l}^{l}-\widetilde{X}_{\tau_s^{l-1}}^{l-1}\big)
\Big\|_2^2\,\Big).
\end{align*}
\end{lem}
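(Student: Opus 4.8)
The plan is to follow the proof of \autoref{lem:new_tech_lem} almost verbatim, the one structural change being that the ``increment direction'' is now a difference across discretization levels at a common time instead of a difference across two time instances at a common level. Set $t:=\tau_s^{l-1}$ (a grid point of both levels) and abbreviate $X_1:=\widetilde X_{t}^{l,N}$, $X_2:=\widetilde X_{t}^{l-1,N}$, $X_3:=\widetilde X_{t}^{l}$, $X_4:=\widetilde X_{t}^{l-1}$. Fixing a coordinate $j$ and writing, exactly as before, $\{\hat b(x,t)\}_j=A(x)/B(x)$ with $A(x)=\tfrac1N\sum_i\nabla_j f(x+\sqrt{1-t}Z^i)$ and $B(x)=\tfrac1N\sum_i f(x+\sqrt{1-t}Z^i)$, I would apply the six-term identity \eqref{eq:tech_lem1} with $(a,A,b,B,c,C,d,D)=(A(X_1),B(X_1),A(X_2),B(X_2),A(X_3),B(X_3),A(X_4),B(X_4))$. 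By (A\ref{ass:2}) a.~every prefactor occurring in \eqref{eq:tech_lem1} is bounded by a deterministic constant, so the task reduces to bounding the squared $L_2$-norms of the six bracketed factors.

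The last four terms of \eqref{eq:tech_lem1} are each a product of two quantities: a Monte-Carlo discrepancy (e.g.~$C-A=B(X_3)-B(X_1)$, which is at most $C\|\widetilde X_{t}^{l}-\widetilde X_{t}^{l,N}\|_2$ by the Lipschitz bound (A\ref{ass:2}) b.~and hence has $L_4$-norm $\mathcal O(N^{-1/2})$ by \autoref{lem:conv_mc_marg}) times a cross-level increment (e.g.~$c-d=A(X_3)-A(X_4)$, at most $C\|\widetilde X_{t}^{l}-\widetilde X_{t}^{l-1}\|_2$, of $L_4$-norm $\mathcal O(\Delta_l)$ by the standard strong-error estimate for the Euler scheme, the noise being additive). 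Cauchy--Schwarz then gives an $\mathcal O(\Delta_l^2/N)$ bound for each of these four terms, which is within the claimed bound.

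The first two terms of \eqref{eq:tech_lem1}, namely $\tfrac1A\big((a-b)-(c-d)\big)$ and $\tfrac{b}{AB}\big((A-B)-(C-D)\big)$, are the genuine second differences, one built from $\nabla_j f$ and the other from $f$; they are handled in the same way, so I describe the first. As in \autoref{lem:new_tech_lem}, Jensen's inequality together with the exchangeability of $Z^1,\dots,Z^N$ reduces $\mathbb{E}[((a-b)-(c-d))^2]$ to $\mathbb{E}[((\nabla_j f(Y_1)-\nabla_j f(Y_2))-(\nabla_j f(Y_3)-\nabla_j f(Y_4)))^2]$ with $Y_m:=X_m+\sqrt{1-t}\,Z^1$. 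A first-order Taylor expansion along the direction $Y_1-Y_2$ splits this into $S_1+S_2$, in exact analogy with the $\bar T_{j,1}(1)+\bar T_{j,1}(2)$ decomposition of \autoref{lem:new_tech_lem}: in $S_1$ the coefficient $\nabla_k\nabla_j f$ is bounded and the bracket collapses, the $\sqrt{1-t}\,Z^1$ contributions cancelling, to $(X_1-X_2)-(X_3-X_4)=(\widetilde X_{t}^{l,N}-\widetilde X_{t}^{l-1,N})-(\widetilde X_{t}^{l}-\widetilde X_{t}^{l-1})$; in $S_2$ the Lipschitz property (A\ref{ass:2}) b.~of $\nabla^2 f$ bounds the coefficient difference by a Monte-Carlo-type discrepancy while the remaining factor $Y_3-Y_4=\widetilde X_{t}^{l}-\widetilde X_{t}^{l-1}$ is a cross-level increment, so that $\mathbb{E}[S_2^2]=\mathcal O(\Delta_l^2/N)$ by Cauchy--Schwarz and \autoref{lem:conv_mc_marg} exactly as above. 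Finally I would observe that the surviving $S_1$-contribution, which lives at time $\tau_s^{l-1}$, coincides with the increment on the right-hand side of the lemma, which lives at time $\tau_s^l$, up to a single fine-level Euler step: since $\tau_s^l-\tau_s^{l-1}\in\{0,\Delta_l\}$ and the Brownian increment over $[\tau_s^{l-1},\tau_s^l]$ is shared, the difference of the two increments equals $\big(\hat b(\widetilde X_{\tau_s^{l-1}}^{l,N},\tau_s^{l-1})-b(\widetilde X_{\tau_s^{l-1}}^{l},\tau_s^{l-1})\big)\Delta_l$ when $\tau_s^l>\tau_s^{l-1}$ and $0$ otherwise, whose squared $L_2$-norm is $\mathcal O(\Delta_l^2/N)$ by \autoref{lem:stoch_lip}, \autoref{lem:conv_mc_marg} and \autoref{lem:conv_mc}. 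Summing over the six terms of \eqref{eq:tech_lem1} and over the $d$ coordinates then yields the stated estimate.

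I expect the only real difficulty to be organizational: choosing the correct substitution into \eqref{eq:tech_lem1}, keeping track of which of the six pieces are ``products of two small quantities'' (hence immediately $\mathcal O(\Delta_l^2/N)$) versus which are the second differences that must be Taylor-expanded, and dealing cleanly with the harmless time mismatch between $\tau_s^{l-1}$ and $\tau_s^l$. No analytic input beyond Lemmas \ref{lem:stoch_lip}--\ref{lem:conv_mc_marg} and standard strong-error bounds for Euler schemes with additive noise is required.
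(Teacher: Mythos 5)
Your proposal is correct and follows essentially the same route as the paper: the paper's proof simply says to repeat the argument of \autoref{lem:new_tech_lem} with all four processes at the common time $\tau_s^{l-1}$ (so the collapsed bracket in the analogue of $\bar T_{j,1}(1)$ becomes the same-time discrepancy, the remaining five pieces being $\mathcal{O}(\Delta_l^2/N)$), and then converts that same-time term to the mixed-time term on the right-hand side by adding the level-$l$ one-step discrepancy, which is exactly your single-Euler-step observation $(\hat b(\widetilde X_{\tau_s^{l-1}}^{l,N},\tau_s^{l-1})-b(\widetilde X_{\tau_s^{l-1}}^{l},\tau_s^{l-1}))\Delta_l$ (cf.\ \eqref{eq:mc_inc_level}), of order $\mathcal{O}(\Delta_l^2/N)$.
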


\begin{proof}
The proof is much the same as that of \autoref{lem:new_tech_lem}. The only difference is that the term 
$
\Delta_l^2/N
$
occurs as one considers the processes at the same time instance and that one will obtain an additive term in the upper-bound of the type:
$$
\mathbb{E}\,\Big\|\big(\widetilde{X}_{\tau_s^{l-1}}^{l,N}-\widetilde{X}_{\tau_s^{l-1}}^{l-1,N}\big) - 
\big(\widetilde{X}_{\tau_s^{l-1}}^{l}-\widetilde{X}_{\tau_s^{l-1}}^{l-1}\big)
\Big\|_2^2.
$$
This latter term is upper-bounded by
$$
C\,\Big(\,
\mathbb{E}\,\Big\|\big(\widetilde{X}_{\tau_s^l}^{l,N}-\widetilde{X}_{\tau_s^{l-1}}^{l,N}\big)-\big(\widetilde{X}_{\tau_s^l}^{l}-\widetilde{X}_{\tau_s^{l-1}}^{l}\big)\Big\|_2^2
+
\mathbb{E}\,\Big\|\big(\widetilde{X}_{\tau_s^l}^{l,N}-\widetilde{X}_{\tau_s^{l-1}}^{l-1,N}\big) - 
\big(\widetilde{X}_{\tau_s^l}^{l}-\widetilde{X}_{\tau_s^{l-1}}^{l-1}\big)
\Big\|_2^2
\,\Big).
$$
The first-term  above is easily proved to be $\mathcal{O}(\Delta_l^2/N)$ (see \eqref{eq:mc_inc_level} and the subsequent argument) and this concludes the proof.
\end{proof}

\begin{lem}\label{lem:tech_lem}
Assume (A\ref{ass:2}). Then there exists a $C<+\infty$ such that for any $(l,N,t)\in\mathbb{N}^2\times[0,1]$ we have
\begin{align*}
&\mathbb{E}\,\Big\|
\int_0^{t}\Big(\big(\hat{b}(\widetilde{X}_{\tau_s^l}^{l,N},\tau_s^l)-\hat{b}(\widetilde{X}_{\tau_s^{l-1}}^{l-1,N},\tau_s^{l-1})\big) - 
\big(\hat{b}(\widetilde{X}_{\tau_s^l}^{l},\tau_s^l)-\hat{b}(\widetilde{X}_{\tau_s^{l-1}}^{l-1},\tau_s^{l-1})\big)\Big)ds
\Big\|_2^2\leq 
\\[0.2cm]
&\qquad\qquad\qquad\qquad\qquad\qquad C\,\Big(\,\frac{\Delta_l}{N}
+ \int_0^t
\mathbb{E}\,\Big\|\big(\widetilde{X}_{\tau_s^l}^{l,N}-\widetilde{X}_{\tau_s^{l-1}}^{l-1,N}\big) - 
\big(\widetilde{X}_{\tau_s^l}^{l}-\widetilde{X}_{\tau_s^{l-1}}^{l-1}\big)
\Big\|_2^2 ds\,\Big).
\end{align*}
\end{lem}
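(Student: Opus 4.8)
The plan is to reduce the statement to \autoref{lem:new_tech_lem} and \autoref{lem:new_tech_lem1} by inserting an intermediate drift evaluation that separates the time-grid mismatch ($\tau_s^l$ versus $\tau_s^{l-1}$) from the level mismatch ($\widetilde{X}^{l,N}$ versus $\widetilde{X}^{l-1,N}$). Concretely, for each fixed $s$ I would add and subtract $\hat{b}(\widetilde{X}_{\tau_s^{l-1}}^{l,N},\tau_s^{l-1})$ in the first pair and $\hat{b}(\widetilde{X}_{\tau_s^{l-1}}^{l},\tau_s^{l-1})$ in the second pair, so that the integrand decomposes as $g_s=g_s^{(1)}+g_s^{(2)}$ with
$$
g_s^{(1)}:=\big(\hat{b}(\widetilde{X}_{\tau_s^l}^{l,N},\tau_s^l)-\hat{b}(\widetilde{X}_{\tau_s^{l-1}}^{l,N},\tau_s^{l-1})\big)-\big(\hat{b}(\widetilde{X}_{\tau_s^l}^{l},\tau_s^l)-\hat{b}(\widetilde{X}_{\tau_s^{l-1}}^{l},\tau_s^{l-1})\big),
$$
$$
g_s^{(2)}:=\big(\hat{b}(\widetilde{X}_{\tau_s^{l-1}}^{l,N},\tau_s^{l-1})-\hat{b}(\widetilde{X}_{\tau_s^{l-1}}^{l-1,N},\tau_s^{l-1})\big)-\big(\hat{b}(\widetilde{X}_{\tau_s^{l-1}}^{l},\tau_s^{l-1})-\hat{b}(\widetilde{X}_{\tau_s^{l-1}}^{l-1},\tau_s^{l-1})\big).
$$
A one-line algebra check confirms $g_s^{(1)}+g_s^{(2)}$ equals the integrand in the statement, and by construction $g_s^{(1)}$ is precisely the vector whose second moment is controlled in \autoref{lem:new_tech_lem}, while $g_s^{(2)}$ is precisely the one controlled in \autoref{lem:new_tech_lem1}.

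Next I would apply the $C_2$-inequality to get $\mathbb{E}\,\|\int_0^t g_s\,ds\|_2^2\le 2\,\mathbb{E}\,\|\int_0^t g_s^{(1)}\,ds\|_2^2+2\,\mathbb{E}\,\|\int_0^t g_s^{(2)}\,ds\|_2^2$, and for each piece use the Cauchy--Schwarz (Jensen) inequality in the form $\|\int_0^t g\,ds\|_2^2\le t\int_0^t\|g_s\|_2^2\,ds\le\int_0^t\|g_s\|_2^2\,ds$ since $t\le 1$, followed by Fubini to interchange expectation and integration (the integrands are piecewise constant in $s$ because $s\mapsto\tau_s^l$ and $s\mapsto\tau_s^{l-1}$ are step functions, so the integrals are finite sums and this is immediate). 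For the first piece, \autoref{lem:new_tech_lem} gives $\mathbb{E}\,\|g_s^{(1)}\|_2^2\le C\Delta_l/N$ uniformly in $s$, hence a contribution of $C\Delta_l/N$. For the second piece, \autoref{lem:new_tech_lem1} gives
$$
\mathbb{E}\,\|g_s^{(2)}\|_2^2\le C\Big(\tfrac{\Delta_l^2}{N}+\mathbb{E}\,\big\|\big(\widetilde{X}_{\tau_s^l}^{l,N}-\widetilde{X}_{\tau_s^{l-1}}^{l-1,N}\big)-\big(\widetilde{X}_{\tau_s^l}^{l}-\widetilde{X}_{\tau_s^{l-1}}^{l-1}\big)\big\|_2^2\Big),
$$
and integrating over $[0,t]$ together with $\Delta_l^2\le\Delta_l$ yields a contribution of $C\big(\Delta_l/N+\int_0^t\mathbb{E}\,\|(\widetilde{X}_{\tau_s^l}^{l,N}-\widetilde{X}_{\tau_s^{l-1}}^{l-1,N})-(\widetilde{X}_{\tau_s^l}^{l}-\widetilde{X}_{\tau_s^{l-1}}^{l-1})\|_2^2\,ds\big)$. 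Adding the two bounds produces exactly the claimed inequality.

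Since the heavy lifting is done in \autoref{lem:new_tech_lem} and \autoref{lem:new_tech_lem1}, the only genuine point requiring attention here is that the inserted term $\hat{b}(\widetilde{X}_{\tau_s^{l-1}}^{l,N},\tau_s^{l-1})$ must be built from the \emph{same} $N$ Gaussians $Z^1,\dots,Z^N$ as every other $\hat{b}$ evaluation appearing in the display, so that $g_s^{(1)}$ and $g_s^{(2)}$ coincide verbatim with the integrands of the two preceding lemmas (and in particular no spurious cross terms survive the split). Given that, the argument is pure bookkeeping and no additional obstacle is expected.
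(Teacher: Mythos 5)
Your proposal is correct and follows essentially the same route as the paper: the same insertion of the intermediate drift evaluations $\hat{b}(\widetilde{X}_{\tau_s^{l-1}}^{l,N},\tau_s^{l-1})$ and $\hat{b}(\widetilde{X}_{\tau_s^{l-1}}^{l},\tau_s^{l-1})$, Jensen/$C_2$ to push the squared norm inside the time integral, and then Lemmata \ref{lem:new_tech_lem} and \ref{lem:new_tech_lem1} applied to the two resulting pieces (with $\Delta_l^2\le\Delta_l$ absorbing the second constant term). The only difference is the order of applying $C_2$ and Jensen, which is immaterial.
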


\begin{proof}
It is simple to establish that
\begin{align*}
&\mathbb{E}\,\Big\|
\int_0^{t}\Big(\big(\hat{b}(\widetilde{X}_{\tau_s^l}^{l,N},\tau_s^l)-\hat{b}(\widetilde{X}_{\tau_s^{l-1}}^{l-1,N},\tau_s^{l-1})\big)- 
\big(\hat{b}(\widetilde{X}_{\tau_s^l}^{l},\tau_s^l)-\hat{b}(\widetilde{X}_{\tau_s^{l-1}}^{l-1},\tau_s^{l-1})\big)\Big)ds
\Big\|_2^2\leq 
\\[0.2cm]
&
\qquad\qquad  C\,\int_0^t\mathbb{E}\,\Big\|
\big(\hat{b}(\widetilde{X}_{\tau_s^l}^{l,N},\tau_s^l)-\hat{b}(\widetilde{X}_{\tau_s^{l-1}}^{l-1,N},\tau_s^{l-1})\big) - 
\big(\hat{b}(\widetilde{X}_{\tau_s^l}^{l},\tau_s^l)-\hat{b}(\widetilde{X}_{\tau_s^{l-1}}^{l-1},\tau_s^{l-1})\big)
\Big\|_2^2ds
\end{align*}
so we focus on the term inside the integrand. 
We have
\begin{align*}
&\mathbb{E}\,\Big\|
\big(\hat{b}(\widetilde{X}_{\tau_s^l}^{l,N},\tau_s^l)-\hat{b}(\widetilde{X}_{\tau_s^{l-1}}^{l-1,N},\tau_s^{l-1})\big) - 
\big(\hat{b}(\widetilde{X}_{\tau_s^l}^{l},\tau_s^l)-\hat{b}(\widetilde{X}_{\tau_s^{l-1}}^{l-1},\tau_s^{l-1})\big)
\Big\|_2^2 \leq
\\[0.2cm] &
\qquad \qquad C\,\bigg(\,
\mathbb{E}\,\Big\|\big(\hat{b}(\widetilde{X}_{\tau_s^l}^{l,N},\tau_s^l)-\hat{b}(\widetilde{X}_{\tau_s^{l-1}}^{l,N},\tau_s^{l-1})\big) - 
\big(\hat{b}(\widetilde{X}_{\tau_s^l}^{l},\tau_s^l)-\hat{b}(\widetilde{X}_{\tau_s^{l-1}}^{l},\tau_s^{l-1})\big)
\Big\|_2^2 + \\
&
\qquad \qquad\qquad\qquad \mathbb{E}\,\Big\|\big(\hat{b}(\widetilde{X}_{\tau_s^{l-1}}^{l,N},\tau_s^{l-1})-\hat{b}(\widetilde{X}_{\tau_s^{l-1}}^{l-1,N},\tau_s^{l-1})\big) - 
\big(\hat{b}(\widetilde{X}_{\tau_s^{l-1}}^{l},\tau_s^{l-1})-\hat{b}(\widetilde{X}_{\tau_s^{l-1}}^{l-1},\tau_s^{l-1})\big)
\Big\|_2^2\,\bigg).
\end{align*}
So, the proof is concluded by applying Lemmata \ref{lem:new_tech_lem}-\ref{lem:new_tech_lem1}.
\end{proof}

\begin{lem}\label{lem:tech_lem3}
Assume (A\ref{ass:2}). Then there exists a $C<+\infty$ such that for any $(l,N,s)\in\mathbb{N}^2\times[0,1]$ we have
$$
\mathbb{E}\,\Big\|
\big(\hat{b}(\widetilde{X}_{\tau_s^l}^{l},\tau_s^l)-\hat{b}(\widetilde{X}_{\tau_s^{l-1}}^{l},\tau_s^{l-1})\big) - 
\big(b(\widetilde{X}_{\tau_s^{l}}^{l},\tau_s^{l})-b(\widetilde{X}_{\tau_s^{l-1}}^{l},\tau_s^{l-1})\big)
\Big\|_2^2\ \leq \frac{C\Delta_l}{N}.
$$
\end{lem}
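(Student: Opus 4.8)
The plan is to follow the template of the proof of \autoref{lem:new_tech_lem}, specialised to the simpler regime in which only the exact Euler scheme $\widetilde X^l$ and the true drift $b$ appear, so that the sample $Z^1,\dots,Z^N$ is the only source of randomness not fixed once we condition on the driving Brownian path. Fix $s\in[0,1]$ and write $t_1:=\tau_s^l$, $t_0:=\tau_s^{l-1}$, $u:=\widetilde X_{t_1}^l$, $v:=\widetilde X_{t_0}^l$. Since $t_0$ is a multiple of $\Delta_{l-1}=2\Delta_l$ it is also a node of the level-$l$ grid, and $t_1-t_0\in\{0,\Delta_l\}$; if $t_1=t_0$ the quantity to be estimated is identically $0$, so we may assume $t_1=t_0+\Delta_l$, in which case $u-v=b(v,t_0)\Delta_l+(W_{t_1}-W_{t_0})$ is a single exact-Euler increment. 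Using the boundedness of $b$ implied by (A\ref{ass:2}) a.\ together with Gaussian moment bounds one gets $\mathbb{E}\|u-v\|_2^{2q}\le C\Delta_l^{q}$ for $q\in\{1,2\}$; moreover the elementary inequality $|\sqrt{a}-\sqrt{b}|\le\sqrt{|a-b|}$ gives $|\sqrt{1-t_1}-\sqrt{1-t_0}|\le\sqrt{\Delta_l}$ uniformly in $s\le1$, and this is the only place at which the degeneracy of $t\mapsto\sqrt{1-t}$ near $t=1$ enters.

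As in \autoref{lem:conv_mc} it suffices to bound a single coordinate $j\in\{1,\dots,d\}$. Write each of $\hat b_j$ and $b_j$, evaluated at $(u,t_1)$ and at $(v,t_0)$, as a ratio with numerator built from $\nabla_j f$ and denominator built from $f$: for the hatted quantities the numerator and denominator are the $\tfrac1N$-sample averages $\tfrac1N\sum_i\nabla_j f(\cdot+\sqrt{1-\cdot}\,Z^i)$ and $\tfrac1N\sum_i f(\cdot+\sqrt{1-\cdot}\,Z^i)$, and for $b_j$ they are the corresponding $\phi$-expectations. By (A\ref{ass:2}) a.\ every denominator lies in $[\underline C,\overline C]$ and every numerator is bounded by $\overline C$. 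Apply to the difference $\big(\tfrac{a}{A}-\tfrac{b}{B}\big)-\big(\tfrac{c}{C}-\tfrac{d}{D}\big)$ the algebraic identity \cite[Lemma~C.5]{mlpf} already used in \autoref{lem:new_tech_lem}, obtaining six summands, each carrying a bounded prefactor (such as $\tfrac1A$, $\tfrac{b}{AB}$, $\tfrac{d}{CBD}$, etc.); it remains to estimate the six numerators.

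These split into two groups. The first two summands are, conditionally on $(u,v)$ — which is independent of $Z^1,\dots,Z^N$ — of the form $\tfrac1N\sum_i\big(\xi_i-\mathbb{E}[\xi_1\mid u,v]\big)$ with $\xi_i:=\nabla_j f(u+\sqrt{1-t_1}Z^i)-\nabla_j f(v+\sqrt{1-t_0}Z^i)$, respectively the same with $f$ in place of $\nabla_j f$; hence their conditional second moment is $\tfrac1N\mathrm{Var}(\xi_1\mid u,v)\le\tfrac1N\,\mathbb{E}[\xi_1^2\mid u,v]$, and the Lipschitz bound (A\ref{ass:2}) b.\ gives $|\xi_1|\le C\big(\|u-v\|_2+\sqrt{\Delta_l}\,\|Z^1\|_2\big)$, so after taking expectations these terms are $\le C\Delta_l/N$. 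Each of the remaining four summands is a product of a Monte Carlo error (a sample average minus its $\phi$-expectation, e.g.\ $C-A$ or $b-d$ in the notation of the identity) with a purely deterministic difference (a difference of two $\phi$-expectations evaluated at $(u,t_1)$ and $(v,t_0)$), times a bounded prefactor; a Cauchy--Schwarz split together with the standard bound $\mathbb{E}[(\text{Monte Carlo error})^4]\le C/N^2$ for bounded summands and the bound $\mathbb{E}[(\text{deterministic difference})^4]^{1/2}\le C\big(\mathbb{E}\|u-v\|_2^4+\Delta_l^2\big)^{1/2}\le C\Delta_l$ (again via (A\ref{ass:2}) b.)\ yields $\le C\Delta_l/N$ for each. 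Summing the six contributions and then over the $d$ coordinates via the $C_2$-inequality completes the proof. I do not expect a genuine obstacle; the only point needing care is that the $\sqrt{\Delta_l}$ smallness must be harvested simultaneously from the spatial displacement $\|u-v\|_2$ and from the coefficient gap $|\sqrt{1-t_1}-\sqrt{1-t_0}|$, the latter being controlled uniformly in $s\le1$ as noted above.
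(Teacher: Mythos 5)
Your proposal is correct and follows exactly the route the paper intends: the paper's proof of this lemma simply invokes the identity \eqref{eq:tech_lem1} (Lemma C.5 of \cite{mlpf}) as in \autoref{lem:new_tech_lem} and omits the ``much simpler'' calculations, which you have supplied accurately -- bounded numerators/denominators via (A\ref{ass:2}) a., the conditional i.i.d.\ Monte Carlo variance and fourth-moment bounds using the independence of $(\widetilde{X}_{\tau_s^l}^{l},\widetilde{X}_{\tau_s^{l-1}}^{l})$ from $Z^1,\dots,Z^N$, Cauchy--Schwarz for the product terms, and the one-step estimates $\mathbb{E}\|\widetilde{X}_{\tau_s^l}^{l}-\widetilde{X}_{\tau_s^{l-1}}^{l}\|_2^{2q}\leq C\Delta_l^{q}$ together with $|\sqrt{1-\tau_s^l}-\sqrt{1-\tau_s^{l-1}}|\leq\sqrt{\Delta_l}$. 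No gaps; this matches the paper's approach in substance and detail.
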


\begin{proof}
This can be proved using the same identity (equation \eqref{eq:tech_lem1}) as \autoref{lem:new_tech_lem}. The subsequent calculations are much simpler than the proof of \autoref{lem:new_tech_lem} and are hence omitted.
\end{proof}

\begin{lem}\label{lem:tech_lem4}
Assume (A\ref{ass:2}). Then there exists a $C<+\infty$ such that for any $(l,N,t)\in\mathbb{N}^2\times[0,1]$ we have
$$
\mathbb{E}\,\Big\|
\big(\hat{b}(\widetilde{X}_{\tau_s^{l-1}}^{l},\tau_s^{l-1})-\hat{b}(\widetilde{X}_{\tau_s^{l-1}}^{l-1},\tau_s^{l-1})\big) - 
\big(b(\widetilde{X}_{\tau_s^{l-1}}^{l},\tau_s^{l-1})-b(\widetilde{X}_{\tau_s^{l-1}}^{l-1},\tau_s^{l-1})\big)
\Big\|_2^2 \leq \frac{C\Delta_l^2}{N}.
$$
\end{lem}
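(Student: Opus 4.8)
The plan is to adapt the proof of \autoref{lem:new_tech_lem}, the present statement being the simpler one because all four drift evaluations carry the \emph{same} time argument $\tau_s^{l-1}$ and are built from the \emph{same} Monte Carlo sample $Z^1,\dots,Z^N$, while $\widetilde{X}_{\tau_s^{l-1}}^{l}$ and $\widetilde{X}_{\tau_s^{l-1}}^{l-1}$ are generated with the exact drift $b$ and are therefore independent of $Z^1,\dots,Z^N$. Fixing a coordinate $j\in\{1,\dots,d\}$, it suffices to bound the second moment of the $j$th entry of the vector on the left-hand side. Writing each of the four drifts as a numerator over a denominator, apply the algebraic identity \eqref{eq:tech_lem1} of \cite[Lemma C.5.]{mlpf}, taking $a,A$ to be the empirical numerator and denominator of $\hat{b}(\widetilde{X}_{\tau_s^{l-1}}^{l},\tau_s^{l-1})$ (at the $\nabla_j f$ level), $b,B$ those of $\hat{b}(\widetilde{X}_{\tau_s^{l-1}}^{l-1},\tau_s^{l-1})$, and $c,C$, $d,D$ the corresponding population ($\mathbb{E}_\phi$) quantities. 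The left-hand side of \eqref{eq:tech_lem1} then equals exactly the $j$th entry to be estimated, and by (A\ref{ass:2}a) the denominators $A,B,C,D$ are bounded above and away from zero, so the problem reduces, up to constants, to second moments of the six terms on the right of \eqref{eq:tech_lem1}.

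The two \emph{principal} terms, $A^{-1}\big((a-b)-(c-d)\big)$ and $-bA^{-1}B^{-1}\big((A-B)-(C-D)\big)$, can each be written, conditionally on $(\widetilde{X}_{\tau_s^{l-1}}^{l},\widetilde{X}_{\tau_s^{l-1}}^{l-1})$, as $N^{-1}\sum_{i=1}^N\big(\Phi_i-\mathbb{E}_\phi[\Phi]\big)$, where $\Phi_i = g\big(\widetilde{X}_{\tau_s^{l-1}}^{l}+\sqrt{1-\tau_s^{l-1}}Z^i\big)-g\big(\widetilde{X}_{\tau_s^{l-1}}^{l-1}+\sqrt{1-\tau_s^{l-1}}Z^i\big)$ for $g\in\{\nabla_j f,\,f\}$. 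By (A\ref{ass:2}b) one has $|\Phi_i|\leq C\,\|\widetilde{X}_{\tau_s^{l-1}}^{l}-\widetilde{X}_{\tau_s^{l-1}}^{l-1}\|_2$ almost surely, so the conditional variance of the average is at most $C\,\|\widetilde{X}_{\tau_s^{l-1}}^{l}-\widetilde{X}_{\tau_s^{l-1}}^{l-1}\|_2^2/N$; taking expectations and using that, since the diffusion coefficient of \eqref{eq:main_diff} is constant, the Euler scheme converges strongly with order one, whence $\mathbb{E}\,\|\widetilde{X}_{\tau_s^{l-1}}^{l}-\widetilde{X}_{\tau_s^{l-1}}^{l-1}\|_2^p\leq C\Delta_l^p$ (see \cite{kloeden_platen}), gives the bound $C\Delta_l^2/N$ for the second moments of these two terms. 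This is the decisive point: since the two processes are compared at the \emph{same} grid point $\tau_s^{l-1}$, their difference is a level-difference of $\mathbb{L}_2$-size $\mathcal{O}(\Delta_l)$, not a time-increment of size $\mathcal{O}(\Delta_l^{1/2})$ as in \autoref{lem:tech_lem3}, and this is what upgrades the rate from $\Delta_l/N$ to $\Delta_l^2/N$.

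Each of the remaining four terms is, up to a factor bounded almost surely by (A\ref{ass:2}a), a product of a single-point Monte Carlo error (one of $C-A$, $A-C$, $b-d$, $B-D$, i.e.\ an empirical minus population average evaluated at one of the two processes) times a Lipschitz difference (one of $c-d$ or $C-D$, i.e.\ $\mathbb{E}_\phi[g(\widetilde{X}_{\tau_s^{l-1}}^{l}+\sqrt{1-\tau_s^{l-1}}Z)]-\mathbb{E}_\phi[g(\widetilde{X}_{\tau_s^{l-1}}^{l-1}+\sqrt{1-\tau_s^{l-1}}Z)]$). Since $\widetilde{X}_{\tau_s^{l-1}}^{l}$ and $\widetilde{X}_{\tau_s^{l-1}}^{l-1}$ are independent of $Z^1,\dots,Z^N$, the argument in the proof of \autoref{lem:conv_mc} with $p=4$ gives a fourth-moment bound $C/N^2$ for the Monte Carlo factor, while (A\ref{ass:2}b) combined with the order-one Euler strong error bound gives $C\,\mathbb{E}\|\widetilde{X}_{\tau_s^{l-1}}^{l}-\widetilde{X}_{\tau_s^{l-1}}^{l-1}\|_2^4\leq C\Delta_l^4$ for the fourth moment of the Lipschitz factor; Cauchy--Schwarz then bounds each such product, in second moment, by $(C/N^2)^{1/2}(C\Delta_l^4)^{1/2}=C\Delta_l^2/N$. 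Summing the six contributions via the $C_2$-inequality, and then over $j\in\{1,\dots,d\}$, completes the proof.

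The only genuine obstacle is the treatment of the two principal terms: one must exploit the common random numbers $Z^1,\dots,Z^N$ shared across the two spatial evaluations to see that the difference of the two Monte Carlo errors is of size $\|\widetilde{X}_{\tau_s^{l-1}}^{l}-\widetilde{X}_{\tau_s^{l-1}}^{l-1}\|_2/\sqrt{N}$ rather than merely $1/\sqrt{N}$, and then invoke the additive-noise (order-one) strong convergence of the Euler scheme. Everything else is a routine combination of \eqref{eq:tech_lem1}, the moment estimates already available from \autoref{lem:conv_mc}, and Cauchy--Schwarz — essentially the calculations of \autoref{lem:new_tech_lem} carried out with matching time arguments.
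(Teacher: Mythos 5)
Your proposal is correct and takes essentially the same route as the paper's own proof: a coordinatewise application of the identity \eqref{eq:tech_lem1}, a conditional-variance argument for the principal terms that exploits the shared Gaussians $Z^1,\dots,Z^N$, their independence from $(\widetilde{X}_{\tau_s^{l-1}}^{l},\widetilde{X}_{\tau_s^{l-1}}^{l-1})$, the Lipschitz property in (A\ref{ass:2}) b.\ and the order-one strong error of the (additive-noise) Euler scheme at the common time point, and Cauchy--Schwarz with fourth moments (Monte Carlo factor $\mathcal{O}(N^{-2})$ times level-difference factor $\mathcal{O}(\Delta_l^4)$) for the remaining cross terms.
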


\begin{proof}
We will consider one co-ordinate of the vector
$$
T:= \big(\hat{b}(\widetilde{X}_{\tau_s^{l-1}}^{l},\tau_s^{l-1})-\hat{b}(\widetilde{X}_{\tau_s^{l-1}}^{l-1},\tau_s^{l-1})\big) - 
\big(b(\widetilde{X}_{\tau_s^{l-1}}^{l},\tau_s^{l-1})-b(\widetilde{X}_{\tau_s^{l-1}}^{l-1},\tau_s^{l-1})\big),
$$
as the argument is essentially the same across all co-ordinates.  We have that for any $j\in\{1,\dots,d\}$
\begin{align*}
T_j  = \Big(\frac{C^{l}}{D^{l}} - \frac{E^{l}}{F^{l}}\Big) - 
\Big(  \frac{C}{D} - \frac{E}{F}  \Big),
\end{align*}
where we have defined
\begin{gather*}
C^{l}:=  \tfrac{1}{N}\sum_{i=1}^N \nabla_j f(\widetilde{X}_{\tau_s^{l-1}}^{l}+\sqrt{1-\tau_s^{l-1}}Z^i); \quad 
D^{l}:= \tfrac{1}{N}\sum_{i=1}^Nf(\widetilde{X}_{\tau_s^{l-1}}^{l}+
\sqrt{1-\tau_s^{l-1}}Z^i),\\
E^{l}:=  \tfrac{1}{N}\sum_{i=1}^N \nabla_j f(\widetilde{X}_{\tau_s^{l-1}}^{l-1}+\sqrt{1-\tau_s^{l-1}}Z^i); \quad 
F^{l}:= \tfrac{1}{N}\sum_{i=1}^Nf(\widetilde{X}_{\tau_s^{l-1}}^{l-1}+
\sqrt{1-\tau_s^{l-1}}Z^i),\\[0.2cm]
C:= \mathbb{E}_\phi\,\big[\,\nabla_j f(\widetilde{X}_{\tau_s^{l-1}}^{l}+\sqrt{1-\tau_s^{l-1}}Z)\,\big] ,\quad 
D:=  \mathbb{E}_\phi\,\big[\, f(\widetilde{X}_{\tau_s^{l-1}}^{l}+\sqrt{1-\tau_s^{l-1}}Z)\,\big],\\[0.4cm]
E:= \mathbb{E}_\phi\,\big[\,\nabla_j f(\widetilde{X}_{\tau_s^{l-1}}^{l-1}+\sqrt{1-\tau_s^{l-1}}Z)\,\big],\quad
F:=\mathbb{E}_\phi\,\big[\, f(\widetilde{X}_{\tau_s^{l-1}}^{l-1}+\sqrt{1-\tau_s^{l-1}}Z)\,\big].
\end{gather*}
Again, we use \eqref{eq:tech_lem1} and just give a proof for two terms
\begin{align*}
T_{j,1} & := \mathbb{E}\,\Big[\,
\frac{1}{(D^{l})^2}
\big((C^l-E^l ) -  (C-E)\big)^2 \,\Big], \\ 
T_{j,2} & := \mathbb{E}\,\Big[\,
\frac{1}{(D^l)^2 D^2}(
D-D^l)^2
(C-E)^2
\,\Big].
\end{align*}
For $T_{j,1}$ applying  (A\ref{ass:2}) a.~and using the fact that $(\widetilde{X}_{\tau_s^{l-1}}^{l},\widetilde{X}_{\tau_s^{l-1}}^{l-1})$ are independent of the iid~$Z^{1},\dots,Z^N$, we have
$$
T_{j,1} \leq \frac{C}{N}\,\mathbb{E}\,\Big|\,\nabla_j f(\widetilde{X}_{\tau_s^{l-1}}^{l}+\sqrt{1-\tau_s^{l-1}}Z^1) -\nabla_j f(\widetilde{X}_{\tau_s^{l-1}}^{l-1}+\sqrt{1-\tau_s^{l-1}}Z^1)\,\Big|^2.
$$
Then using (A\ref{ass:2}) b.~along with standard results for strong errors of diffusions we have
$$
T_{j,1} \leq \frac{C\Delta_l^2}{N}.
$$
For $T_{j,2}$ applying  (A\ref{ass:2}) a.~and the Cauchy-Schwarz inequality 
\begin{align*}
T_{j,2} & \leq  C\,\mathbb{E}\Big[\,\Big(\,\mathbb{E}_\phi\,\big[\,
f(\widetilde{X}_{\tau_s^{l-1}}^{l}+\sqrt{1-\tau_s^{l-1}}Z)\,\big]-
\tfrac{1}{N}\sum_{i=1}^N f(\widetilde{X}_{\tau_s^{l-1}}^{l}+\sqrt{1-\tau_s^{l-1}}Z^i)\,\Big)^4\,\Big]^{1/2} \\
& \qquad \qquad \qquad \times\mathbb{E}\,\Big[\,\Big(\,\mathbb{E}_\phi\,\big[\,\nabla_j f(\widetilde{X}_{\tau_s^{l-1}}^{l}+\sqrt{1-\tau_s^{l-1}}Z)-\nabla_j f(\widetilde{X}_{\tau_s^{l-1}}^{l-1}+\sqrt{1-\tau_s^{l-1}}Z)\,\big]\,\Big)^4\,\Big]^{1/2}.
\end{align*}
For the first term on the right-hand-side one can use standard results for iid  random variables and for the second  (A\ref{ass:2}) a.~along with standard results for strong errors of diffusions
to yield
$$
T_{j,2} \leq \frac{C\Delta_l^2}{N}.
$$
From here, one can complete the proof fairly easily.
\end{proof}

\begin{lem}\label{lem:tech_lem2}
Assume (A\ref{ass:2}). Then there exists a $C<+\infty$ such that for any $(l,N,t)\in\mathbb{N}^2\times[0,1]$ we have
$$
\mathbb{E}\,\Big\|
\int_0^{t}\Big(
\big(\hat{b}(\widetilde{X}_{\tau_s^l}^{l},\tau_s^l)-\hat{b}(\widetilde{X}_{\tau_s^{l-1}}^{l-1},\tau_s^{l-1})\big) - 
\big(b(\widetilde{X}_{\tau_s^{l}}^{l},\tau_s^{l})-b(\widetilde{X}_{\tau_s^{l-1}}^{l-1},\tau_s^{l-1})\big)\Big)ds
\Big\|_2^2\leq \frac{C\Delta_l}{N}.
$$
\end{lem}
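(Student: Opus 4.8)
The plan is to prove this exactly as \autoref{lem:tech_lem} was proved, reducing the claim to the pointwise-in-time estimates already recorded in \autoref{lem:tech_lem3} and \autoref{lem:tech_lem4}. First I would apply the conditional Jensen inequality (equivalently Cauchy--Schwarz in the $ds$ variable) together with the fact that $t\le 1$ to pull the expectation inside the time integral, so that
$$
\mathbb{E}\,\Big\|\int_0^{t}\Big(\big(\hat{b}(\widetilde{X}_{\tau_s^l}^{l},\tau_s^l)-\hat{b}(\widetilde{X}_{\tau_s^{l-1}}^{l-1},\tau_s^{l-1})\big)-\big(b(\widetilde{X}_{\tau_s^{l}}^{l},\tau_s^{l})-b(\widetilde{X}_{\tau_s^{l-1}}^{l-1},\tau_s^{l-1})\big)\Big)ds\Big\|_2^2 \leq C\int_0^{t}\mathbb{E}\,\Big\|I_s\Big\|_2^2\,ds,
$$
where $I_s$ denotes the integrand. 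It then suffices to bound $\mathbb{E}\|I_s\|_2^2$ by $C\Delta_l/N$ uniformly in $s\in[0,1]$, since integrating such a bound over $[0,t]\subseteq[0,1]$ leaves the rate unchanged.

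Next I would add and subtract the term $\hat{b}(\widetilde{X}_{\tau_s^{l-1}}^{l},\tau_s^{l-1})-b(\widetilde{X}_{\tau_s^{l-1}}^{l},\tau_s^{l-1})$, writing $I_s$ as the sum of $\big(\hat{b}(\widetilde{X}_{\tau_s^l}^{l},\tau_s^l)-\hat{b}(\widetilde{X}_{\tau_s^{l-1}}^{l},\tau_s^{l-1})\big)-\big(b(\widetilde{X}_{\tau_s^{l}}^{l},\tau_s^{l})-b(\widetilde{X}_{\tau_s^{l-1}}^{l},\tau_s^{l-1})\big)$ and $\big(\hat{b}(\widetilde{X}_{\tau_s^{l-1}}^{l},\tau_s^{l-1})-\hat{b}(\widetilde{X}_{\tau_s^{l-1}}^{l-1},\tau_s^{l-1})\big)-\big(b(\widetilde{X}_{\tau_s^{l-1}}^{l},\tau_s^{l-1})-b(\widetilde{X}_{\tau_s^{l-1}}^{l-1},\tau_s^{l-1})\big)$. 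By the $C_2$-inequality, $\mathbb{E}\|I_s\|_2^2$ is bounded by $C$ times the sum of the second moments of these two bracketed quantities; the first is $\mathcal{O}(\Delta_l/N)$ by \autoref{lem:tech_lem3} and the second is $\mathcal{O}(\Delta_l^2/N)\leq\mathcal{O}(\Delta_l/N)$ by \autoref{lem:tech_lem4}. Combining with the first paragraph and setting $t=1$ completes the proof.

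There is essentially no genuine obstacle here: the work has already been done in \autoref{lem:tech_lem3} and \autoref{lem:tech_lem4}, and this lemma is the direct analogue of \autoref{lem:tech_lem} with $\hat{b}$ replaced by $b$ in the "exact" copies. The one point to watch is that the add--subtract split must be chosen so that each bracket retains the difference-of-differences cancellation that yields the extra factor $\Delta_l$ — in the first bracket the level-$l$ process is evaluated at the consecutive grid times $\tau_s^l$ and $\tau_s^{l-1}$, which differ by at most $\Delta_{l-1}$, while in the second the level-$l$ and level-$(l-1)$ processes are compared at the common time $\tau_s^{l-1}$, where their strong error is $\mathcal{O}(\sqrt{\Delta_l})$ in $L_2$. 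Any other grouping would lose a power of $\Delta_l$ and only give the weaker bound $\mathcal{O}(1/N)$.
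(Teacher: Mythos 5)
Your proposal is correct and follows exactly the route the paper intends: it mirrors the proof of \autoref{lem:tech_lem} (Jensen to move the expectation inside the time integral, then the add--subtract split at the intermediate point $(\widetilde{X}_{\tau_s^{l-1}}^{l},\tau_s^{l-1})$ with the $C_2$-inequality), invoking \autoref{lem:tech_lem3} and \autoref{lem:tech_lem4} for the two resulting brackets, which is precisely what the paper's omitted proof indicates.
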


\begin{proof}
The proof is essentially the same as that for \autoref{lem:tech_lem}, except one must use Lemmata \ref{lem:tech_lem3}-\ref{lem:tech_lem4} instead of Lemmata \ref{lem:new_tech_lem}-\ref{lem:new_tech_lem1}; therefore the proof is omitted.
\end{proof}

\end{document}